\newtheorem{theorem}{Theorem}[section]
\newtheorem{lemma}[theorem]{Lemma}
\newtheorem{corollary}[theorem]{Corollary}
\newtheorem{proposition}[theorem]{Proposition}
\newtheorem{observation}[theorem]{Observation}
\theoremstyle{definition}
\newtheorem{definition}[theorem]{Definition}
\newtheorem{openproblem}[theorem]{Open Problem}
\newtheorem{fact}[theorem]{Fact}
\newtheorem{remark}[theorem]{Remark}
\newtheorem{example}[theorem]{Example}
\newtheorem*{definition*}{Definition}
\newtheorem*{exercise*}{Exercise}
\theoremstyle{plain}
\newcommand{\rmkqed}{\nobreak\hfill$\lhd$}
\newcommand{\gpcls}{\mathfrak{C}}
\newcommand{\N}{\mathbb{N}}
\newcommand{\F}{\mathbb{F}}
\newcommand{\Z}{\mathbb{Z}}
\newcommand{\R}{\mathbb{R}}
\newcommand{\ie}{i.\,e.\xspace}
\newcommand{\eg}{e.\,g.}
\newcommand{\etal}{\emph{et al.}\xspace}
\newcommand{\Sec}[1]{Section~\ref{sec:#1}}
\newcommand{\cc}[1]{\mathsf{#1}\xspace} 
\newcommand{\algprob}[1]{{\sc #1}\xspace}
\newcommand{\algprobm}[1]{\mbox{\sc #1}\xspace}
\newcommand{\GrI}{\algprobm{GraphI}}
\newcommand{\GpI}{\algprobm{GpI}}
\newcommand{\CGpI}{\algprobm{CayleyGpI}}
\newcommand{\cohiso}{{\sc Cohomology Class Isomorphism}\xspace}
\newcommand{\actcomp}{{\sc Action Compatibility}\xspace}
\newcommand{\cosetint}{\algprob{Coset Intersection}}
\newcommand{\lincode}{\algprob{Linear Code Equivalence}}
\newcommand{\modcyc}{\algprob{Module Cyclicity}}
\newcommand{\modiso}{\algprob{Module Isomorphism}}
\newcommand{\outactcomp}{{\sc Outer Action Compatibility}\xspace}
\newcommand{\edpc}{{\sc Extension Data Pseudo-congruence}\xspace}
\newcommand{\aut}{\mathrm{Aut}} 
\newcommand{\id}{\mathrm{id}} 
\newcommand{\inn}{\mathrm{Inn}}
\newcommand{\Iso}{\mathrm{Iso}}
\newcommand{\Aut}{\mathrm{Aut}}
\newcommand{\Inn}{\mathrm{Inn}}
\DeclareMathOperator{\Out}{\mathrm{Out}}
\DeclareMathOperator{\im}{Im}
\renewcommand{\ker}{\text{Ker}}
\newcommand{\GL}{\mathrm{GL}}
\newcommand{\extension}[3]{#1 \hookrightarrow #2 \twoheadrightarrow #3}
\newcommand{\extensionl}[5]{#1 \stackrel{#2}{\hookrightarrow} #3 \stackrel{#4}{\twoheadrightarrow} #5}
\newcommand{\ISO}{\mathrm{Iso}}
\newcommand{\poly}{\mathrm{poly}}
\newcommand{\sym}{\mathrm{Sym}}
\DeclareMathOperator{\rad}{Rad}
\newcommand{\soc}{\mathrm{Soc}}
\newcommand{\defeq}{:=}
\newcommand{\charfn}{\mathcal{S}}
\newcommand{\pc}{\cong}
\newcommand{\pker}{\mathrm{Pker}}
\newcommand{\tb}[1]{M_{#1}}
\newcommand{\Tor}{\mathrm{Tor}}
\newcommand{\Hom}{\mathrm{Hom}}
\newcommand{\End}{\mathrm{End}}
\newcommand{\Ext}{\mathrm{Ext}}
\newcommand{\cocycleprod}{Z^2_{\text{prod}}(Q, A)}
\newcommand{\coboundaryprod}{B^2_{\text{prod}}(Q, A)}
\newcommand{\cochainprod}{C^2_{\text{prod}}(Q,A)}
\newcommand{\CodeEq}{\mathrm{CodeEq}}
 \newcommand{\jnote}[1]{\textcolor{red}{\small$\bullet$\footnote{\textcolor{red}{Josh:
 #1}}}}
\newtheorem*{quotlistthm}{Theorem~\hyperref[cor:quotient_list]{A} (=Corollary~\ref{cor:quotient_list})}
\newtheorem*{quotlistcor}{Corollary~\ref{cor:quotient_list2}}
\newtheorem*{quotlistgen}{Theorem~\hyperref[cor:quotient_list_general]{B} (=Corollaries~\ref{cor:quotient_list_general} and \ref{cor:quotient_list2_general})}
\newtheorem*{prodthm}{Theorem~\hyperref[thm:ecentradwsoc]{C} (=\ref{thm:ecentradwsoc})}
\newtheorem*{splitprop}{Facts~\ref{fact:nec_actcomp}, \ref{fact:nec_cohiso}, and Lemmas~\ref{lem:main}, \ref{lem:main_general}}
\newtheorem*{lemma*}{Lemma}
\newcommand{\cosetComment}[1]{#1}
\title{Algorithms for group isomorphism via\\
group extensions and cohomology
\footnote{The introduction may serve as an extended abstract: \Sec{approach} 
contains an informal exposition of \Sec{prel_framework}, \Sec{framework} and 
\Sec{strategy}; \Sec{outline} gives a brief overview of \Sec{prel_concrete}, 
\Sec{autq} and \Sec{centrad_bb}. An extended abstract based on the introduction 
appeared as \cite{GQccc}.} \\
}
\author{Joshua A. Grochow
\footnote{Department of Computer Science, University of Colorado, Boulder and the Santa Fe Institute. \texttt{joshua.grochow@cs.colorado.edu}}
\and
Youming Qiao
\footnote{Centre for Quantum Software and Information, University of 
Technology Sydney. \tt{youming.qiao@uts.edu.au}}
}
\date{\today}
\begin{document}
\hypersetup{pageanchor=false}

\maketitle

\pagenumbering{roman}

\begin{abstract}

The isomorphism problem for finite groups of order $n$ (\GpI) has long been known 
to be solvable in $n^{\log n+O(1)}$ time, but only recently were polynomial-time 
algorithms designed for several interesting group classes. 
Inspired by recent progress, we revisit the strategy for \GpI via the 
extension theory of groups. 

The extension theory describes how a normal subgroup 
$N$ is related to $G/N$ via $G$, and this naturally leads to a divide-and-conquer 
strategy that ``splits'' \GpI into two subproblems: one regarding group actions on 
other groups, and one regarding group cohomology. When the normal subgroup $N$ is 
abelian, this strategy is well-known. Our first contribution is to extend this 
strategy to handle the case when $N$ is not necessarily abelian. This allows 
us to provide a unified explanation of all recent polynomial-time algorithms for 
special group classes. 

Guided by this strategy, to make further progress on \GpI, we consider 
\emph{central-radical} groups, proposed in Babai \etal (SODA 2011): the class of 
groups such that $G$ mod its center has no abelian normal subgroups. This class is 
a natural extension of the group class considered by Babai \etal (ICALP 2012), 
namely those groups with no abelian normal subgroups. Following the above 
strategy, we solve \GpI in $n^{O(\log \log n)}$ time for central-radical groups, 
and in polynomial time for several prominent subclasses of central-radical groups. 
We also solve \GpI in $n^{O(\log\log n)}$ time for groups whose solvable normal 
subgroups are elementary abelian but not necessarily central. As far as we are 
aware, this is the first time there have been worst-case guarantees on a 
$n^{o(\log n)}$-time algorithm that tackles both aspects of \GpI---actions and 
cohomology---simultaneously.

Prior to this work, the best proven upper bounds on algorithms for groups with 
central radicals were $n^{O(\log n)}$, even for groups with a central radical of 
constant size, such as $\rad(G) = Z(G)=\Z_2$.
To develop our new algorithms we utilize several mathematical results on the 
detailed structure of cohomology classes, as well as algorithmic results for code 
equivalence, coset intersection and cyclicity testing of modules over 
finite-dimensional associative algebras. 
We also suggest several promising directions for future work. 
\end{abstract}

\newpage

\setcounter{tocdepth}{2}
\tableofcontents

\clearpage
\hypersetup{pageanchor=true}

\setcounter{page}{0}
\pagenumbering{arabic}

\section{Introduction}

The group isomorphism problem (\GpI) is to determine whether two finite groups 
are isomorphic. For groups of order $n$, the easy $n^{\log n + O(1)}$-time 
algorithm \cite{FN,Mil78}\footnote{Miller \cite{Mil78} attributes this algorithm 
to Tarjan.} 
for the general case of \GpI has barely seen any asymptotic 
improvement over the past four decades; it was improved recently to $n^{1/4 \log n 
+ O(1)}$ by Rosenbaum \cite{Rosen2} (see \cite[Sec. 2.2]{GR16}), but even the extensive 
body of work on practical 
algorithms 
led by Eick, Holt, Leedham-Green and O'Brien (\eg, \cite{BEO02, ELGO02, BE99, CH03})---resulting in most of the 
functional algorithms in use today---was recently found \cite{wilsonConf} 
to only improve the constant in 
the exponent, still resulting in a $n^{\Theta(\log n)}$-time algorithm for the 
general case. The past few years have witnessed a resurgence of activity on 
worst-case guaranteed algorithms for this problem \cite{Gal09,BCGQ,QST11,Wag11,LW12,BQ,BCQ,Rosen,Rosen2,BMW15,GQ15}.

Before introducing these works and our results, we recall why group isomorphism is 
an intriguing problem from the complexity-theoretic perspective, even when the 
groups are given by their multiplication tables. 
(See Remark~\ref{rmk:CGT} below.) We call this version of the problem \CGpI. 
As \CGpI reduces to \algprob{Graph Isomorphism} (\GrI) (see, \eg, the book \cite{struct}), \CGpI currently has an intermediate status: It is not $\cc{NP}$-complete unless $\cc{PH}$ collapses \cite{BHZ,babai-moran}, and is not known to be in $\cc{P}$.  In addition to its intrinsic interest, resolving the exact complexity of \GpI is a tantalizing question. Further, there is a surprising connection between \CGpI and the Geometric Complexity Theory program (see, \eg, \cite{gct_acm} and references therein): Techniques from \CGpI were used to solve cases of \algprob{Lie Algebra Isomorphism} that have applications in Geometric Complexity Theory \cite{grochowLie}.

In a survey article \cite{Babaisurvey} in 1995, after enumerating several 
isomorphism-type problems including \GrI and \GpI, Babai expressed the belief that 
\CGpI might be the only one expected to be in $\cc{P}$.
Indeed, 
in many ways \CGpI seems easier than \GrI: There is a simple $n^{\log 
n+O(1)}$-time algorithm for \GpI, whereas the best known algorithm for \GrI takes 
time $2^{(\log n)^c}$ for some $c \geq 3$ \cite{Bab16} and is quite complicated.\footnote{At the time of writing, the paper \cite{Bab16} is still under peer 
review. The previous-best algorithm took time $2^{O(\sqrt{n \log n})}$ (see 
\cite{BL83}), and was also quite complicated.}
There is a polynomial-time reduction from \CGpI to \GrI, yet there is provably no 
$\cc{AC}^0$ reduction in the opposite direction \cite{CTW10}. 
The reduction $\CGpI \leq \GrI$ means that \CGpI stands as an obstacle to putting 
\GrI into $\cc{P}$; in light of the recent quasi-polynomial-time algorithm for 
\GrI \cite{Bab16}, this obstacle has become much more salient (the previous best 
algorithm for \GrI was so far from quasi-polynomial that there were clearly 
obstacles to be overcome before \CGpI became a serious obstacle, but many of those 
obstacles have now been overcome).
Further, \GrI is as hard as its counting version, whereas no such counting-to-decision reduction is known for \GpI. Finally, whereas the smallest standard complexity class known to contain \GrI is $\cc{NP} \cap \cc{coAM}$, Arvind and Tor\'{a}n \cite{AT11} showed that \CGpI for solvable groups is in $\cc{NP} \cap \cc{coNP}$ under a plausible assumption, weaker than that needed to show $\GrI \in \cc{coNP}$ (recall that a group is solvable if all its composition factors are abelian, or equivalently if the derived series $G_0 = G$, $G_{i+1} = [G_i, G_i]$ terminates in the identity).

Despite this situation and considerable attention to \GpI, prior to 2009 the 
actual developments towards algorithms 
with worst-case guaranteed running time polynomial in $|G|$ 
essentially stopped at abelian groups, although there have been impressive 
practical advances 
(see, \eg, the theses 
\cite{smith,howden} and references therein for nice overviews). 
Isomorphism of abelian groups has long been known to be solvable in polynomial 
time \cite{Sav80, Ilio85, Vik96, Kav07}. 
The next natural group class after abelian groups---class 2 nilpotent groups, 
whose quotient by their center is abelian---turns out to be 
formidable 
(see \cite{GZ, LW12, BMW15} for efficient algorithms in some restricted cases).

Beginning in 2009 there were several advances on worst-case guaranteed algorithms, 
starting with Le Gall \cite{Gal09}. In \cite{BCQ}, following \cite{BCGQ}, Babai 
\etal developed a polynomial-time algorithm for groups with no abelian normal 
subgroups. This suggests the presence of abelian normal subgroups as a bottleneck. 
With this in mind, Babai and Qiao \cite{BQ} developed a polynomial-time algorithm for a special class 
of non-nilpotent solvable groups, building on \cite{Gal09, QST11}; this was 
recently extended by the present authors to the so-called groups of tame 
extensions \cite{GQ15}. In \cite{LW12}, Lewis and Wilson made intriguing progress 
on $p$-groups: They gave a polynomial-time algorithm for quotients of generalized 
Heisenberg groups, a decently large subclass of $p$-groups of class 2. 
Rosenbaum's recent works 
\cite{Rosen,Rosen2} (some of them building on ideas of 
Wagner \cite{Wag11}) lead to an $n^{1/4\log n+O(1)}$-time algorithm for \GpI 
(see \cite[Sec. 2.2]{GR16}).
To summarize, at 
present it is crucial to understand 
indecomposable 
groups with abelian normal  subgroups to develop $n^{o(\log n)}$-time algorithms. 

Given these developments, we are at an interesting crossroads: First, as several 
nontrivial algorithms with worst-case guarantees have recently been developed 
(or worst-case guarantees proven for previously known algorithms), it is 
reasonable to reflect back to see if there is some common pattern or structure to 
these results. Second, of course, we should continue to improve the state of the 
art by developing more $n^{o(\log n)}$-time algorithms for special group classes. 
Finally, class 2 nilpotent groups seem to remain the bottleneck, but despite 
heuristic evidence, it is still desirable to formalize a reduction from the 
general case to this seeming bottleneck.

In this paper we contribute to all three of the preceding aspects. 
Our contributions are twofold: 
(1) we show how a general strategy for group isomorphism from the mathematics literature can be used to bound the worst-case complexity; 
and (2) using that strategy, we develop an $n^{O(\log \log n)}$-time algorithm for 
a group class proposed in \cite{BCGQ}, and polynomial-time algorithms for some 
prominent subclasses thereof. The worst-case analysis of this strategy also helps 
to explain in a unified way the recent successes on other group classes 
\cite{Gal09,QST11,BQ,BCGQ,BCQ,LW12,GQ15}, which can be viewed as adding 
class-specific tactics to the general strategy. 
We also explain how these results may help to reduce general \GpI to the class 2 nilpotent case.

\begin{remark}[On efficient implementation versus computational complexity] \label{rmk:CGT}
There are naturally two audiences for this paper, who might view it quite 
differently: (A) computational complexity theorists / algorithms theory 
researchers, and (B) computational group theorists. 
To some in the latter group, much of the content of this paper is 
surely well-known, and to them perhaps not even worth writing down at this point 
in history. To them, we would like to highlight our results on nonabelian 
cohomology (Section~\ref{sec:framework}) and the use of 
Guralnick--Kantor--Kassabov--Lubotzky \cite{GKKL}, which we believe may be new 
even in light of the large body of work in that community. We would also like to 
point out that the remainder of the paper will likely seem new to computational 
complexity theorists, despite seeming trivial to computational group theorists. As 
so often happens when ideas from one area B (in this case, classical and 
computational group theory) are imported to another area A (computational 
complexity), area B may view the results as trivial while area A may view them as 
a nontrivial advance. However, while much of this paper is targeted at 
computational complexity theorists and so has this flavor, we believe that some of 
the results (as mentioned above) will be new to both communities.

As discussed above, there are good reasons (e.\,g., related to \GrI) to be 
interested in the getting the asymptotic complexity of \GpI down to polynomial in 
the order of the group, independently of its (ir)relevance to practical 
implementation. Combined with the fact that all of the practical algorithms put 
together still leave the worst-case bound at $|G|^{\Theta(\log |G|)}$ 
\cite{wilsonConf}, we thus state our runtime bounds as a function of $n=|G|$. 
Nonetheless, our algorithms are fairly agnostic to the method of input, and in 
particular \emph{do not depend on the input being a Cayley table}. That is, when 
the runtime of an algorithm is at least quadratic in the order of the group, even 
if the group is input as a black box, the algorithm may first enumerate the entire 
multiplication table, and then proceed from there. 

Except in a 
few cases, we do not bother to estimate multiplicative 
constants, nor even constant exponents, in most of our algorithms. The current 
inability to get a general algorithm with run-time $|G|^{o(\log |G|)}$ 
\cite{wilsonConf} suggests that this is a potentially deep research topic, and 
getting algorithms with guaranteed run-time $|G|^{O(1)}$ can be seen as a first 
step in understanding the difficulties involved (not to mention its 
complexity-theoretic interest). Furthermore, there is non-trivial precedent for 
synergistic interaction between worst-case analyses and practical algorithms, as 
reflected in, e.\,g., A. Seress's pioneering implementations in {\sf GAP} \cite{GAP} of 
theoretically-proven fast algorithms for permutation groups (see, e.\,g., \cite{seressbook}). Therefore, although we do 
not deal with issues of practical algorithms in this paper, it is our sincere hope 
that this great tradition will continue in future work.
\end{remark}

\subsection{Main results} \label{sec:mainresults}
The classes of groups we consider are natural extensions of the class of groups 
considered in \cite{BCGQ, BCQ}, and are additionally motivated by the Babai--Beals 
filtration \cite{BB99} (resurrecting certain ideas 
that go back to Fitting \cite{Fit33}), and the Cannon--Holt 
approach to group isomorphism in 
the practical setting \cite{CH03}. We go into the details of the Babai--Beals 
filtration and the Cannon--Holt approach in \Sec{future}. Here we merely give 
enough of a flavor to help motivate the classes of groups we consider.

Important in both the Babai--Beals filtration and the Cannon--Holt approach is the 
solvable radical. 
The solvable radical $\rad(G)$ of a group $G$ is the unique maximum solvable 
normal subgroup of $G$. Note that the center $Z(G)$, as an abelian normal 
subgroup, is contained in $\rad(G)$. $G/\rad(G)$ contains no solvable normal 
subgroups, side-stepping the currently intractable obstacle of solvable groups. 
Babai \etal \cite{BCQ} give a polynomial-time algorithm for isomorphism of groups 
with no solvable normal subgroups (equivalently, no abelian normal 
subgroups); 
following them, we call such groups ``semisimple.''

We mainly consider the class of groups whose solvable radical coincides with its 
center, that is, $\rad(G) = Z(G)$; in \Sec{abelian} we also consider groups whose 
solvable radical is abelian, but need not be contained in the center. The former 
class, which we refer to as \emph{groups with central radicals} or 
\emph{central-radical groups} for short, is a natural 
extension of the class of 
semisimple groups and a natural stepping stone towards general groups (see 
Figure~\ref{fig:groups} below). Note that for such groups the solvable radical is 
necessarily abelian. Besides the motivations mentioned above, central-radical 
groups also cover a class of groups that is well-studied in finite group theory 
(see 
Appendix~\ref{app:fitting}). In the theory of Lie 
groups, central-radical groups correspond to the well-studied and important class 
of reductive Lie groups, which are important throughout mathematics and physics, 
often because of their nice representation-theoretic properties.

We use the strategy outlined in \Sec{approach} below to achieve the following results. 
For groups with central radicals, we give an $n^{O(\log \log n)}$-time algorithm 
in general, and for several subclasses of groups with central radicals we give 
polynomial-time algorithms. We also give similarly efficient algorithms for groups 
with elementary abelian, but not necessarily central, radicals. Prior to this 
work, the best proven upper bounds on algorithms for groups with central radicals 
were $n^{O(\log n)}$, even for groups with a central radical of constant size, 
such as $\rad(G) = Z(G)=\Z_2$.

\cosetComment{Recall that for any groups $G$ and $H$, the set of isomorphisms between them is either empty or a coset of the automorphism group $\aut(G)$ in the group of permutations of the disjoint union $G \sqcup H$. We say that the coset of isomorphisms can be found if one isomorphism $G \stackrel{\cong}{\to} H$  and a generating set for $\aut(G)$ can be found. Finding the full coset of isomorphisms---rather than just deciding \GpI or finding a single isomorphism---is often useful in recursively building algorithms for larger group classes from those for smaller classes.}

\begin{quotlistthm}
Isomorphism of central-radical groups of order $n$ can be decided in time $n^{c \log \log n + O(1)}$, for $c = 1/\log_2(60) \approx 0.169$. \cosetComment{Furthermore, if the radical is elementary abelian, the coset of isomorphisms can be found in the same time bound.}
\end{quotlistthm}

The algorithm in the above theorem in fact runs in polynomial time when the order or structure of the semisimple quotient $G/\rad(G)$ is bounded as follows. Recall that a normal subgroup of $G$ is \emph{minimal} if it is nontrivial and does not contain any smaller normal subgroups of $G$. The number of minimal normal subgroups of $G/\rad(G)$ is always at most $\log_{60} n$; if it happens to be just slightly smaller, then we have:

\begin{quotlistcor}
Let $G$ and $H$ be central-radical groups of order $n$. If $G/\rad(G)$ has $O(\frac{\log n}{\log \log n})$ minimal normal subgroups, isomorphism between $G$ and $H$ can be decided in $\poly(n)$ time. \cosetComment{Furthermore, if the radical is elementary abelian, the coset of isomorphisms can be found in the same time bound.} 
\end{quotlistcor}
In particular, this includes groups $G$ satisfying $|G / \rad(G)| \leq n^{O(1 / \log \log n)}$, but also many groups where $G/\rad(G)$ is much larger. Both of these theorems are in fact corollaries of our more general Theorem~\ref{thm:quotient_list} together with previous results on semisimple groups \cite{BCGQ}, but we defer the statement of Theorem~\ref{thm:quotient_list} until \Sec{autq}, as the above results make its significance clearer.

For groups with elementary abelian radicals---even if they are not central---we get the same conclusions. This requires us to simultaneously solve \actcomp and \cohiso. We combine the above techniques with a novel reduction to known representation-theoretic algorithms \cite{CIK97} to get:

\begin{quotlistgen}
Isomorphism of groups of order $n$ with elementary abelian radicals can be decided\cosetComment{, and the coset of isomorphisms found,} in time $n^{c \log \log n + O(1)}$, for $c = 1/\log_2(60) \approx 0.169$.

If furthermore $G/\rad(G)$ has $O(\frac{\log n}{\log \log n})$ normal subgroups, isomorphism can be decided\cosetComment{, and the coset of isomorphisms found,} in $\poly(n)$ time.
\end{quotlistgen}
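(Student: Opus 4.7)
The plan is to follow the quotient-list strategy behind Theorem~\ref{cor:quotient_list}, adding one new ingredient to cope with the fact that the action $\theta$ of $Q = G/\rad(G)$ on $A = \rad(G)$ is no longer trivial, so \actcomp now has genuine content. Given $G_1,G_2$ of order $n$ with elementary abelian radicals, I would first compute $A_i \defeq \rad(G_i)$ and $Q_i \defeq G_i/A_i$ using standard polynomial-time tools, test $A_1 \cong A_2$ (trivial for elementary abelian groups of known order and exponent) and $Q_1 \cong Q_2$ via the polynomial-time semisimple isomorphism algorithm of \cite{BCQ}, and identify $A \defeq A_1 = A_2$, $Q \defeq Q_1 = Q_2$. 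Since $\rad$ is a characteristic subgroup, Lemma~\ref{lem:main} reduces the rest to \edpc: decide whether some $(\alpha,\beta)\in\aut(A)\times\aut(Q)$ simultaneously solves \actcomp for $(\theta_1,\theta_2)$ and \cohiso for $(f_1,f_2)$.

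Next, I would enumerate candidate $\beta\in\aut(Q)$ using the quotient-list construction underlying Theorem~\ref{cor:quotient_list}: since the semisimple $Q$ has at most $\log_{60}n$ minimal normal subgroups, one obtains a candidate list of size $n^{c\log\log n + O(1)}$, which collapses to $n^{O(1)}$ under the hypothesis that $Q$ has $O(\log n/\log\log n)$ minimal normal subgroups. For each candidate $\beta$, the remaining task is to decide, in polynomial time, whether the joint \actcomp and \cohiso constraints in $\alpha\in\aut(A)$ are simultaneously satisfiable. Because $A$ is elementary abelian, the action $\theta_i$ endows $A$ with the structure of an $\F_p Q$-module $M_i$, and \actcomp (with $\theta_1$ pre-twisted by $\beta$) is precisely the question of whether $\alpha\in\GL(A)$ is an $\F_p Q$-module isomorphism $M_1^\beta \to M_2$. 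The set of such $\alpha$ is either empty or a coset of $\aut_{\F_p Q}(M_2)$, which the representation-theoretic algorithms of Chistov, Ivanyos, and Karpinski \cite{CIK97} compute in polynomial time. On top of this coset, \cohiso imposes the \emph{linear} condition $\alpha_* f_1 \equiv \beta^* f_2 \pmod{B^2(Q,A)}$, since the coboundary subspace and the induced action of $\alpha$ on cocycles are $\F_p$-linear. Intersecting the affine module-isomorphism coset with this linear condition is an instance of \oraclem inside $\GL(A)$, solvable in polynomial time by the same linear coset intersection subroutine already used in the central-radical algorithm.

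The main obstacle I expect is the coupling between \actcomp and \cohiso: a priori their simultaneous solution could be harder than solving either in isolation, and for an arbitrary abelian radical the \actcomp piece can itself be nonlinear. What unblocks the reduction in the elementary abelian case is that, once $\beta$ is fixed, \actcomp linearizes into a module-isomorphism problem over the finite-dimensional algebra $\F_p Q$, and on the resulting coset the \cohiso constraint is a further $\F_p$-linear condition; the whole inner loop therefore stays within the linear-algebraic toolkit (module isomorphism over $\F_p Q$, linear coset intersection, code equivalence) developed in \cite{CIK97} and in the central-radical arguments behind Theorem~\ref{cor:quotient_list}. Consequently the overall running time is dominated by the quotient-list enumeration, matching both bounds claimed in the statement.
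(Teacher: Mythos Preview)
Your outer loop is exactly right and matches the paper: compute $A=\rad(G_i)$, $Q=G_i/A$, apply the Main Lemma~\ref{lem:main}, and enumerate $\beta\in\aut(Q)$ using Theorems~\ref{thm:semisimple} and~\ref{thm:semisimple2} to get the two stated time bounds. You also correctly identify that, once $\beta$ is fixed, \actcomp becomes an $\F_pQ$-module isomorphism question and that \cite{CIK97} is the relevant toolbox.

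The gap is in your inner step. The set of $\F_pQ$-module isomorphisms $M_1^\beta\to M_2$ is a \emph{multiplicative} coset $\alpha_0\cdot\aut_{\F_pQ}(M_2)$ of the unit group of the endomorphism algebra; it is not an affine subspace of $M(k,p)$, and there is no ``linear coset intersection subroutine'' in the central-radical algorithm of \S\ref{sec:autq} that intersects such a multiplicative coset with an affine condition. What you actually need is to decide whether an affine (or linear) subspace of matrices contains an invertible element, and that is not a triviality.

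The paper handles this differently. It linearizes \emph{both} conditions at once by allowing a scalar: set
\[
V=\{\alpha\in M(k,p)\ :\ \forall q,\ \alpha\theta_1(q)=\theta_2(q)\alpha,\ \text{and}\ \exists a\in\Z_p,\ [\alpha f_1]=[a f_2]\},
\]
which is a genuine $\Z_p$-linear subspace, and similarly an algebra $U$ of $\theta_2$-intertwiners preserving $[f_2]$ up to scalar. Then $V$ is a left $U$-module, and one shows that $V$ contains an invertible matrix iff $V$ is cyclic as a $U$-module, in which case every generator is invertible; moreover if such an invertible $\alpha$ lands on $[af_2]$ with $a\neq 0$, then $a^{-1}\alpha$ gives $[f_2]$ exactly. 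The existence test is then a single call to the \modcyc algorithm of \cite{CIK97}. So the key idea you are missing is the reduction to module cyclicity (not module isomorphism plus a coset intersection), which is what lets the joint \actcomp/\cohiso constraint be decided in polynomial time per $\beta$.
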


We then consider central-radical groups with $G/\rad(G)$ a direct product of 
nonabelian simple groups. Although this may seem restrictive, this class of groups 
is quite natural. In group theory, this class is closely related to the 
generalized Fitting subgroups (see, \eg, \cite[Ch.~6, \S6]{Suzuki2} and 
\cite[Ch.~11]{aschbacher}, as well as Appendix~\ref{app:fitting}). Also, within 
central-radical groups, this class has two characterizations: (1) the last two of 
the four levels of the Babai--Beals filtration are trivial (see \Sec{babaibeals}); 
or (2) those groups that are equal to their generalized Fitting subgroup (see 
Appendix~\ref{app:fitting}). Our next result gives
polynomial-time algorithms for this group class, which includes, 
for example, central extensions of $\Z_p^{\Theta(\log n)}$ by $A_k^{\Theta(\log 
n)}$, which do not satisfy the conditions of the results above.

\begin{prodthm}
Isomorphism between two groups $G_1, G_2$ with central, elementary abelian radicals can be decided, and the coset of isomorphisms found, in $\poly(|G_i|)$ time if either:
\begin{enumerate}
\item $G_1/\rad(G_1)$ is a direct product of simple groups%
; or
\item $G_1/\rad(G_1)$ is a direct product of perfect groups, each of order $O(1)$.
\end{enumerate}
\end{prodthm}

More importantly, we believe the techniques that go into proving this theorem 
are 
worth noting: we rely on a detailed analysis of the structure of the cohomology 
classes specific to this group class (see \Sec{prep_coho}, and the use of the powerful results of \cite{GKKL}) 
to allow for the application of known algorithmic techniques, including singly-exponential-time algorithms for \lincode \cite{BabCode} (see \cite[Thm.~7.1]{BCGQ}) and \cosetint \cite{Bab83, Luk99} (see also \cite{Bab10, BKL83}). 

%


\subsection{Motivation for the classes of groups considered} \label{sec:motivation}
Aside from the motivations already mentioned above, Figure~\ref{fig:groups} gives the general idea of where this paper fits in the 
picture of a larger approach towards putting \GpI into $\cc{P}$. The figure is 
neither complete nor 100\% accurate in terms of the landscape of groups and 
algorithms for \GpI, but is more or less correct 
for algorithms with worst-case guarantees at a large scale. 

\begin{figure}[!htbp]
\hspace{-0.25in}\includegraphics[width=7in,keepaspectratio=true]{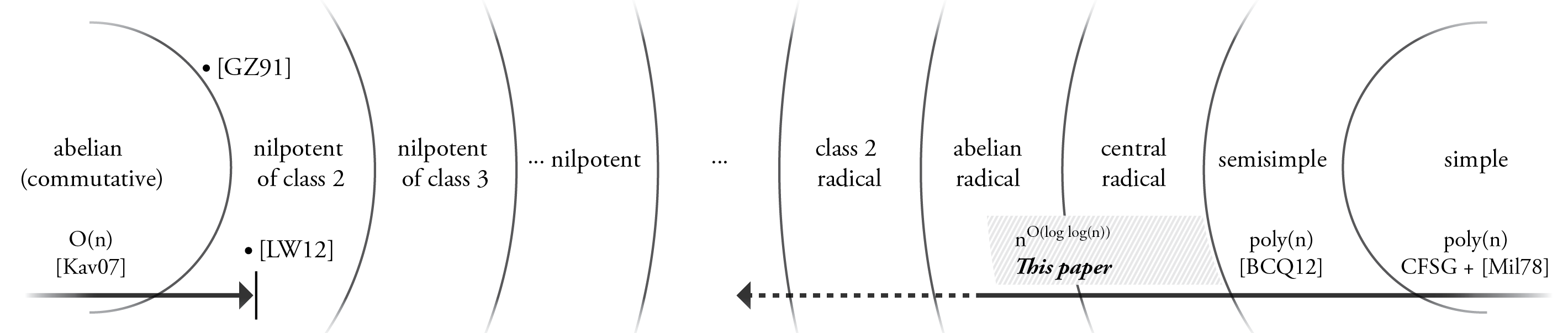}
\caption{Some progress on group isomorphism. The arrow on the left indicates that the techniques for abelian groups hit a wall before class 2 nilpotent groups. The arrow on the right indicates the recent progress at the ``opposite end of the spectrum,'' including this paper and prospects for future progress.}
\label{fig:groups}
\end{figure}

While central-radical groups may seem to be only a slight extension of semisimple 
groups, they in fact differ significantly from previous classes of groups with 
$n^{o(\log n)}$-time isomorphism algorithms. In particular, most previous $n^{o(\log n)}$-time 
algorithms for \GpI of special group classes only consider one of the two main 
aspects of \GpI, namely actions (in \Sec{previous}, we briefly indicate how 
actions are used in \cite{Gal09,QST11,BQ,BCGQ,BCQ}). On the other hand, to work 
with groups with central radicals, we need to focus on the other main aspect of 
the problem, namely cohomology (see \Sec{approach}). 

Our results also suggest one more step towards a formal reduction from the general 
case to nilpotent groups of class 2. In particular, in 
Proposition~\ref{prop:action_triv} and 
Remark~\ref{remark:action_triv}, we show that for groups $G$ where the action of 
$G$ on $\rad(G)$ by conjugation is 
essentially trivial (technically: the action is by inner automorphisms of 
$\rad(G)$) and $\Aut(Z(\rad(G)))$ is small, \GpI reduces to isomorphism of central-radical groups and isomorphism 
of solvable groups, separately. Thus, if isomorphism of central-radical groups 
could be decided in polynomial time, then \GpI for this class of groups would 
reduce to \GpI for solvable groups. Note that central-radical groups arise here 
naturally, by considering cases in which the relationship between $G/\rad(G)$ and 
$\rad(G)$ is simple. This is just one example of how we believe our ideas point 
towards a possible reduction from the general case to nilpotent groups; some other 
ideas in this paper may also be useful in this regard (see \Sec{cannonHolt}).
\subsection{A strategy via group extensions and cohomology} \label{sec:approach}
In this paper we use the theory of group extensions (which we describe briefly in 
this section, and in more detail in \Sec{prel_framework}, 
and give expository preliminaries in Appendix~\ref{app:gentle}; for textbook 
treatments, see, \eg, \cite[Chapter 11]{Rob} and \cite[Chapter 7]{Rot}) to show 
that the group isomorphism problem ``splits'' into two subproblems---one coming 
from actions of groups on other groups (\actcomp), and the other coming from group 
extensions and cohomology (\cohiso), which we explain below. We note that Besche 
and Eick have proposed this splitting in a slightly different setting, under the 
name ``strong isomorphism'' \cite{BE99}. In the abstract theory of finite groups 
this splitting is standard material; the contribution here is the observation that 
this standard material can 
be used to 
prove worst-case algorithmic guarantees
and that doing so is useful and even formally necessary to resolve the complexity 
of $\GpI$. 
We also extend this approach to the setting where the normal subgroup can be 
non-abelian. For the converse direction, we observe that special cases of these 
subproblems reduce to $\GpI$
under polynomial-time reductions (\Sec{necessary}). We summarize these results in:

\begin{splitprop}[``Splitting'' \GpI into actions and cohomology]\label{prop:splitting}
\quad
\begin{itemize}
\item For coprime extensions  \actcomp $ \equiv_m^p \GpI$.
\item For $p$-groups of class 2, when $p 
> 2$, \cohiso $ 
\equiv_m^p \GpI$.
\item \GpI for groups with a normal subgroup from one class of groups 
$\mathcal{N}$ and a quotient from a class of groups $\mathcal{Q}$ reduces to 
solving \GpI in $\mathcal{N}$, solving it in $\mathcal{Q}$, and simultaneously 
solving 
\actcomp and \cohiso.
\end{itemize}
\end{splitprop}

A ``simultaneous solution'' to these two problems is possible because they have the same space of potential solutions, namely certain automorphisms of certain groups; a simultaneous solution is an automorphism that is simultaneously a solution for both problems. See \Sec{outline} and Lemma~\ref{lem:main} for more details.

Most previous complexity-theoretic results on \GpI have focused on some combination of algorithmic techniques and \actcomp. In this paper, for the first time from the worst-case complexity perspective, we make progress on \cohiso.

We now explain this ``splitting'' and the problems mentioned above informally. 
Consider the following natural strategy for testing whether $G$ is isomorphic to 
$H$. If $G$ is simple, then isomorphism can be tested in polynomial time as $G$ is 
generated by at most two elements (Fact~\ref{fact:genby2}). If $G$ is not simple, 
then it has some normal subgroup $N \unlhd G$, and we may try to use a 
divide-and-conquer strategy by first solving the isomorphism problem for $N$ and 
$G/N$. However, even if we find $M \unlhd H$ such that $N \cong M$ and $G/N \cong 
H/M$, this is typically not sufficient to conclude that $G \cong H$: \eg, $\Z_4$ 
and $\Z_2\times \Z_2$ 
both have $\Z_2$ as a normal subgroup, with corresponding quotient $\Z_2$. 
We must then understand how the groups $N$ and $G/N$ ``glue'' back together to get 
$G$. $G$ is called an \emph{extension} of $N$ by $G/N$ (some authors use the 
opposite order of terminology); 
given $N$ and $Q$, understanding the collection of groups $G$ which are extensions of $N$ by $Q$---that is, where $N \unlhd G$ and $G/N \cong Q$---is known as the \emph{extension problem}. The extension problem is considered quite difficult in general, but the theory of group cohomology exactly captures this problem and provides useful tools for its study, including connections with other cohomology theories such as in algebraic topology. One of the main technical achievements of the present paper is to make some aspects of group cohomology effective in the setting of worst-case complexity.

When $N$ is abelian the extension theory is conceptually easier and technically cleaner. Coincidentally, due to the polynomial-time algorithm for semisimple groups \cite{BCQ}, abelian normal subgroups are exactly the subject of interest at present. So for the rest of this subsection, \emph{we assume $N$ is abelian}; the theory for the general case is similar but more complicated, and is covered in \Sec{main_lemma_nonab}.

The extensions of $N$ by $Q$ are governed by two pieces of data: (1) an action of $Q$ on $N$ and (2) a cohomology class. We explain each of these in turn.

\paragraph{The action.} If $G$ is an extension of $N$ by $Q$, then $N \unlhd G$, so $G$ acts on $N$ by conjugation, giving a homomorphism $\theta' \colon G \to \Aut(N)$. As we have assumed $N$ is abelian, $N$ lies in the kernel of $\theta'$, so the conjugation action of $G$ on $N$ induces an action $\theta$ of $G/N \cong Q$ on $N$. Two such actions are \emph{compatible} if they become equal after applying some element of $\Aut(N) \times \Aut(Q)$, giving rise to the first problem:

\begin{definition}[\actcomp]
Given two actions $\varphi_i\colon Q \to \Aut(N)$ of a group $Q$ on a group $N$---specified by giving, for each $q \in Q$, $\varphi_i(q)$ as a permutation on the set $N$---decide whether the actions are compatible, that is, whether there is an element of $\Aut(N) \times \Aut(Q)$ whose application to $\varphi_1$ makes it equal to $\varphi_2$.
\end{definition}

\paragraph{The cohomology class.} Informally speaking, the simplest examples of 
extensions are when $Q$ can be ``lifted'' to a subgroup of $G$ that is compatible 
with the isomorphism $G/N \cong Q$ (the extension is said to be \emph{split}). 
However, it is possible to have an extension $G$ of $N$ by $Q$ in which this 
cannot happen. For example, consider the additive group of real numbers $\R$, and 
its normal subgroup $2\pi\Z$. (There are similar examples in finite groups, but we 
believe this example has more intuitive appeal. For readers familiar with group 
extensions, the goal here is to exhibit a nonsplit extension; $\Z_2 \unlhd \Z_4$ 
is a familiar example.) 
The quotient $\R/2\pi\Z$ is isomorphic to the ``circle group'' $S^1$ of unit 
complex numbers under multiplication, yet $S^1$ is not even a subgroup of $\R$, 
let alone ``liftable to $\R$.'' Contrast with the group $G = 2\pi\Z \times S^1$, 
which also has $2\pi\Z \unlhd G$ and $G/2\pi\Z \cong S^1$, yet $S^1$ is a subgroup 
of $G$. Note that as both $\R$ and $G$ are abelian the conjugation action of $\R$ 
or $G$ on any normal subgroup is trivial. So the actions cannot explain the fact 
that $S^1$ is not a subgroup of $\R$; instead, it is group cohomology that exactly 
captures this phenomenon.

Specifically, if $G$ is an extension of $N$ by $Q$, the failure of $Q$ to be 
``liftable'' to $G$ (a split extension) is measured by a \emph{cohomology class} as follows. Consider 
any function $s\colon Q \to G$ such that $s(q)$ is in the coset of $N$ 
corresponding to $q$ under the identification $G/N \cong Q$. $Q$ is ``liftable'' 
if and only if there is some such $s$ which is also a group homomorphism. The failure of any given $s$ to be a 
homomorphism is measured by the function 
\[
f_s(q, p) \defeq s(q)s(p)s(qp)^{-1}.
\]
Then $s$ is a homomorphism if and only if $f_s(q, p) = 1$ for all $p,q \in Q$. The 
cohomology class corresponding to $G$, viewed as an extension of $N$ by $Q$, is 
then the set $\{f_s | s\colon Q \to G \text{ as above}\}$. Two cohomology classes 
are \emph{isomorphic} if they become equal after applying some element of $\Aut(N) 
\times \Aut(Q)$, giving rise to the second problem:

\begin{definition}[\cohiso]
Given two functions $f_i \colon Q \times Q \to N$ as above, decide whether there 
is an element $\alpha$ of $\Aut(N) \times \Aut(Q)$ whose application to $f_1$ 
makes $[f_1^\alpha] = [f_2]$.
\end{definition}

\paragraph{Towards a formal strategy.} Let us now see how the action and cohomology class just introduced can be used in isomorphism testing. We refer to the pair $(\theta, f)$ of the corresponding action and (a representative of) a cohomology class as the \emph{extension data} of the extension. Suppose we are given two groups $G_1$ and $G_2$. We cleverly choose some $N_1\unlhd G_1$ and $N_2\unlhd G_2$, and (somehow we are lucky to find that) $N_1\cong N_2$ and $G_1/N_1\cong G_2/N_2$. Viewing $G_i$ as extensions of $N_i$ by $G_i/N_i$, we extract the action $\theta_i$  and cohomology classes $f_i$, for $i=1,2$. If there is a simultaneous solution (one single $(\alpha, \beta)\in\aut(N)\times\aut(G / N)$) to \actcomp for $\theta_1, \theta_2$ and \cohiso for $f_1, f_2$, we say the extension data are \emph{pseudo-congruent}.\footnote{\label{fn:pc} We take this terminology from Naik \cite{naik}, who gives a different definition of pseudo-congruence of extensions that is more standard from the group-theoretic point of view, but less well-adapted to the computational setting. We give the other definition and show that the two are formally equivalent in \S\ref{subsec:pc}. Robinson \cite{Rob} uses the term ``isomorphism'' for this notion; we prefer ``pseudo-congruence'' to avoid confusion with the several other notions of isomorphism floating around. Theoretical investigations of some aspects of this concept can be found in Robinson \cite[Sec.~4]{rob_paper}. }

\begin{definition}[\edpc]
Given two extension data $(\theta_i, f_i)$ for extensions of $N$ by $Q$---that is, $\theta_i \colon Q \to \Aut(N)$ and $f_i: Q \times Q \to N$---decide whether they are pseudo-congruent (see preceding paragraph).
\end{definition}

If the extension data are pseudo-congruent, then $G_1 \cong G_2$ and we are done. 
However, it is possible that $G_1 \cong G_2$ but the extension data are not 
pseudo-congruent (we thank Naik \cite{naik2} for providing 
Example~\ref{ex:isonotpc}). The difficulty is that $G$ may contain two normal 
subgroups $M, M' \unlhd G$ such that $M \cong M'$ and $G/M \cong G/M'$, but no 
automorphism of $G$ sends $M$ to $M'$. To resolve this problem, the Main 
Lemma~\ref{lem:main} shows that it is enough to take $N_1$ and $N_2$ to be the 
center or the radical, or more generally any characteristic subgroups that are 
preserved under isomorphisms. (We note that in Besche and Eick \cite{BE99}, they get around the pitfall by introducing the related concept of ``strong isomorphism,'' which is more natural for their purpose, namely the construction of finite groups.)

Now we state the Main Lemma informally. Let us 
remark that, since in this section we mostly discuss the case of abelian normal 
subgroups, the Main Lemma is presented in the abelian case here, which is 
well-known (see e.g. \cite[Sec. 2.7.4]{HEO05}). We shall develop a general Main 
Lemma~\ref{lem:main_general} (including the case of \emph{non-abelian} normal subgroups) in 
Section~\ref{sec:main_lemma_nonab}. 

\begin{lemma*}[{Main Lemma~\ref{lem:main}, abelian case, informal}]
Given two groups $G_1$ and $G_2$, let $A_i$ be the abelian characteristic subgroup of $G_i$ of a given type (\eg, the center), $\theta_i$ the action of $G_i/A_i$ on $A_i$, and $f_i$ the cohomology class of the extension of $A_i$ by $G_i/A_i$ given by $G_i$. Suppose $A_1\cong A_2$ (identified as $A$) and $G_1/A_1\cong G_2/A_2$ (identified as $Q$). 

Then $G_1\cong G_2$ if and only if $\theta_1 \equiv \theta_2$, and $f_1 \equiv f_2$ up to the action of $\aut(A)\times\aut(Q)$.
\end{lemma*}
As evidence of the usefulness of the Main Lemma beyond this paper, we note that 
the polynomial-time algorithms for a special class of solvable groups in 
\cite{Gal09,QST11,BQ} follow this strategy: they use a theorem of 
Taunt~\cite{Tau55} to reduce isomorphism testing to a problem about linear 
representations of finite groups (see Problem 1 in \cite{QST11}), and solve that 
problem with additional tactics. In retrospect, Taunt's Theorem is a special case 
of the Main Lemma%
, and Problem~1 in \cite{QST11} is essentially \actcomp. Taunt's Theorem applies regardless whether the normal subgroup $N$ is abelian or not, though the works \cite{Gal09,QST11,BQ} only used the case when $N$ is abelian. The general Main Lemma~\ref{lem:main_general} additionally covers and extends the nonabelian case of Taunt's Theorem.

Similarly, in retrospect the polynomial-time algorithm for semisimple groups \cite{BCGQ,BCQ} can be viewed as taking advantage of the nonabelian Main Lemma~\ref{lem:main_general}. We cover these examples in more detail in \Sec{previous}.

Due to the structure of the group classes considered in \cite{Gal09,QST11,BQ}, 
 \cohiso does not appear in these works. On the other hand, for $p$-groups of 
class $2$---currently believed the bottleneck---\cohiso is well-known to be 
necessary (see Fact~\ref{fact:nec_cohiso}). We thus turn to study the \cohiso 
problem in the following. As far as we know, this is the first time group 
cohomology has been used 
to get worst-case bounds for \GpI. 

\subsubsection{The general Main Lemma}

As mentioned, the Main Lemma in the abelian case is well-known, and one 
contribution of this paper is to extend it to the case when the normal subgroup 
can be non-abelian. The extension theory in the non-abelian case is classical; a 
nice introduction can be found in Suzuki's book \cite[Section 2.7]{Suzuki}. In 
this paper, we shall adapt this theory explicitly to the setting of isomorphism 
testing. The general strategy is the same as the abelian case, but several 
technical 
details need to be taken care of. Consider the extension $\extension{N}{G}{Q}$ 
where $N$ need not be abelian. An obvious difference with the abelian case is that 
the conjugation 
action of $G$ on $N$ no longer induces an action of $Q$ on $N$, so one needs to 
consider the homomorphism $Q\to \Aut(N)/\Inn(N)$ instead. As another example, it 
is important 
to note is that the set of 2-cocycles is no longer a group, let alone an abelian 
group. Due to all these complications, a careful treatment is needed for the 
formulation and the proof for the general Main Lemma~\ref{lem:main_general}, and 
we refer the interested readers to Section~\ref{sec:main_lemma_nonab}.

\subsection{Overview of our algorithms} \label{sec:outline}
Here we give an overview of the structure of our algorithms, as well as some of the more salient details. We first consider the case when the solvable radical is abelian, to see how the strategy in the above section is applied. We then focus on central-radical groups to outline some key steps in the algorithms.

Given groups $G_1, G_2$, we first compute their solvable radicals $A_i=\rad(G_i)$ and the corresponding semisimple quotients $Q_i = G_i / \rad(G_i)$. Then apply the algorithm from \cite{Kav07} to $A_1$ and $A_2$, and the algorithm from \cite{BCQ} to $Q_1$ and $Q_2$. If either of them returns non-isomorphic, then $G_1\not\cong G_2$. If both algorithms return isomorphic, they also yield isomorphisms. Thus, without loss of generality, for $i=1, 2$, we use $A$ to denote $\rad(G_i)$ and $Q$ to denote $G_i/\rad(G_i)$, identifying $G_i$ as an extension of $A$ by $Q$.

Next, we compute the corresponding actions $\theta_1, \theta_2$ and 
representatives $f_1, f_2$ of the corresponding cohomology classes. As mentioned 
in \Sec{approach}, 
$G_1$ and $G_2$ are isomorphic if and only if there is an element of $\Aut(A) \times \Aut(Q)$ which simultaneously turns $\theta_1$ into $\theta_2$, and $[f_1]$ into $[f_2]$ (as cohomology classes).

\paragraph{For groups with elementary abelian radicals (Theorem~\hyperref[cor:quotient_list_general]{B} = Corollaries~\ref{cor:quotient_list_general} and \ref{cor:quotient_list2_general})} 
Babai \etal \cite{BCGQ} showed that all automorphisms of a semisimple group can be enumerated in time $n^{O(\log\log n )}$. So if $n^{O(\log \log n)}$ time is allowed, we can use that algorithm to enumerate all $\beta\in\aut(Q)$. Then for each such $\beta$, search for some $\alpha\in\aut(A)$ such that $\theta_1=\theta_2^{(\alpha, \beta)}$, and $[f_1]=[f_2^{(\alpha, \beta)}]$. When $A\cong\Z_p^k$ is elementary abelian, this task can be reduced to \algprob{Module Cyclicity Testing} over finite-dimensional algebras, in almost the same way as the reduction from \algprob{Module Isomorphism} to \algprob{Module Cyclicity Testing} \cite{CIK97}. Here we only mention that the algebra is $U=\{\gamma\in M(k, p)\mid \forall q\in Q, \gamma\theta_2(q)=\theta_2(q)\gamma, \text{ and } \exists a\in \Z_p, [\gamma f_2]=[af_2]\}$ and we consider the $U$-module $V=\{\alpha\in M(k, p)\mid \forall q\in Q, \alpha\theta_1(q)=\theta_2(q)\alpha, \text{ and } \exists a\in\Z_p, [\alpha f_1]=[af_2]\}$. What is left is to verify that $V$ is a cyclic $U$-module if and only if there exists some desired $\alpha\in\GL(k, p)$. See \Sec{abelian} for the details.

\paragraph{For general central-radical groups (Theorem~\hyperref[cor:quotient_list]{A} = Corollary~\ref{cor:quotient_list}).} 
For groups with central radicals, $A=Z(G_i)$, so the actions $\theta_i$ are 
trivial, and we only need to solve \cohiso. As before, since $n^{O(\log \log n)}$ 
time is allowed, we can use the algorithm of Babai \etal \cite{BCGQ} to enumerate 
all $\beta\in\aut(Q)$. Then for each such $\beta$, we need to search for some 
$\alpha\in\aut(A)$ such that $[f_1]=[f_2^{(\alpha, \beta)}]$. We solve this 
problem using linear algebra over abelian groups, as follows. To ease the 
exposition let us assume $A=\Z_p^k$. Then we shall view any map $f:Q\times Q\to A$ 
as a $k\times |Q|^2$-size matrix over $\Z_p$, with $\aut(A)$ acting on the rows, 
$\aut(Q)$ inducing an action on the columns. The main difficulty at this point has 
to do with identifying which cohomology class $f$ is in,  in a way that is 
$\aut(A)$-invariant. Viewing $f$ as a $\Z_p$-linear vector (of dimension $k\times 
|Q|^2$), by Proposition~\ref{prop:cob} we can compute a projection $\pi$ in this 
vector space such that $\pi(f)$ identifies the cohomology class of $f$---that is, 
$\pi(f) = \pi(f')$ if and only if $f$ and $f'$ are in the same cohomology 
class---and such that $\pi$ commutes with every $\alpha \in \aut(A)$ (\ie, $\pi$ 
is $\aut(A)$-invariant). With fixed $\beta$, this allows us to compute $\pi(f_1)$ 
and $\pi(f_2^{(\id, \beta)})$, and then determine whether, as $k\times |Q|^2$-size 
matrices, their row spans are the same, which is a standard task in linear 
algebra. 
For central-radical groups with elementary abelian radicals, this approach allows 
us to compute the coset of isomorphisms. We also give an alternative proof 
(\Sec{elem}) that allows us to decide isomorphism for general central-radical 
groups (where the radical need not be elementary abelian), but the alternative 
approach does not yield the coset of isomorphisms.

\paragraph{For central-radical groups with $G/\rad(G)$ a direct product of nonabelian simple groups (Theorem~\hyperref[thm:ecentradwsoc]{C}=\ref{thm:ecentradwsoc}).} In this case $Q=\prod_{i\in[\ell]} T_i$, $T_i$ nonabelian simple. To ease the exposition let us assume $T_i$'s are all mutually isomorphic to some $T$, and $A=\Z_p^k$. For a function $f:Q\times Q\to A$, a key fact is that the cohomology class of $f$ is completely determined by the \emph{restrictions} of $f$ to the direct factors $T_i$ (Lemma~\ref{lem:prod}). Several group-theoretic facts lead to this cohomological proposition, including: (1) the direct product decomposition of $Q$ into nonabelian simple factors is unique (not just up to isomorphism); (2) if $U_i$ is the preimage of $T_i$ under the projection $G\to G/Z(G)$, then $u_i u_j = u_j u_i$ whenever $u_i\in U_i$, $u_j \in U_j$, and $i \neq j$ (\cite[Chapter 6, Proposition 6.5]{Suzuki2}, see Proposition~\ref{prop:central}). Another useful fact is the well-known description of $\aut(Q)$ as $\aut(T)\wr S_\ell$.


Instead of considering $f\colon Q\times Q\to A$, we can thus consider $f_i\colon T_i\times T_i\to A$, $i\in[\ell]$; and instead of working with a $k\times |Q|^2$-size matrix, we can work with a $k\times (\sum_{i\in[\ell]}|T_i|^2)$-size matrix. This difference between $|Q|^2=\prod_{i\in[\ell]}|T_i|^2$ and $\sum_{i\in[\ell]}|T_i|^2$ leads to major savings. When $T$ is a nonabelian simple group, we use the powerful theorem of Guralnick, Kantor, Kassabov, and Lubotzky \cite{GKKL} (reproduced as Theorem~\ref{thm:GKKL} below); when $T$ is more generally only centerless and perfect, we restrict our attention to the setting where $|T|=O(1)$. In either case, we combine algorithms for \lincode and \cosetint. We need several technical ingredients (including Lemma~\ref{lem:direct_factor}) to make the above procedure work though. As in the previous setting, we give two proofs, one of which handles the general abelian case, and the other of which allows us to compute the coset of isomorphisms in the elementary abelian case.

\subsection{Organization of the paper}

In \Sec{prel_framework} we collect basic concepts from extension theory. 
Appendix~\ref{app:gentle} contains a gentle introduction to extensions and 
cohomology, designed to be digestible without first preparing one's gut with half 
of a textbook. In \Sec{framework} we prove our (nonabelian) Main Lemma. We develop 
our strategy in \Sec{strategy}, which uses the Main Lemma to expand the ideas in 
\Sec{approach} into a formal framework. \Sec{prel_concrete} contains preliminaries 
and previous algorithmic results to prepare for the algorithms for central-radical 
groups. In \Sec{autq} we describe the $n^{O(\log \log n)}$-time algorithm for 
general central-radical groups (Theorem~\hyperref[cor:quotient_list]{A}=Corollary~\ref{cor:quotient_list}); this is also the 
algorithm for Corollary~\ref{cor:quotient_list2}. We also give the algorithms for 
groups with elementary abelian radicals that need not be central (Theorem~\hyperref[cor:quotient_list_general]{B}). In 
\Sec{centrad_bb}, we prove Theorem~\hyperref[thm:ecentradwsoc]{C}, giving the polynomial-time algorithms for central-radical 
groups with $G/\rad(G)$ a direct product of nonabelian simple groups (or 
centerless perfect groups of constant size). Finally, \Sec{future} contains future 
directions, some of which are motivated by the Cannon--Holt approach and the 
Babai--Beals filtration.

For any of our results which currently depend on an abelian group being 
\emph{elementary} abelian, we discuss what obstacles towards this generalization 
we know how to overcome, and which remain to be overcome: see the end of 
Section~\ref{sec:coset}, Section~\ref{sec:towards_genab}, and 
Section~\ref{sec:future}. 

\section{Preliminaries on abelian cohomology} \label{sec:prel_framework}
The material in this section is standard group theory; for group theorists, this section serves primarily to fix notation. For computer scientists, we provide a gentle introduction to this material, with motivation and proofs, in Appendix~\ref{app:gentle}.

\paragraph{General notations.} For $n\in \N$, $[n]=\{1, \dots, n\}$. In this 
paper, all groups are finite. We use $\id$ to denote the identity element, or the 
group of order $1$. For a group $G$, $|G|$ denotes the order of $G$. We write 
$H\leq G$ if $H$ is a subgroup of $G$. The (right) coset of $H$ in $G$ containing 
$g\in G$ is $Hg=\{hg\mid h\in H\}$. Given two groups $G_1$ and $G_2$, $\ISO(G_1, 
G_2)$ denotes the set of $G_1\to G_2$ isomorphisms.  $\aut(G) = \ISO(G,G)$ is the 
group of automorphisms of $G$. The set $\ISO(G_1, G_2)$ is either empty or a coset 
of $\aut(G_1)$. For $g\in G$, conjugation by $g$ is the automorphism 
$\theta_g:G\to G$ defined by $\theta_g(x) \defeq gxg^{-1}$. For $g\in G$, the maps 
$\theta_g$ are the inner automorphisms of $G$, and they form a subgroup $\inn(G) 
\leq \aut(G)$. A subgroup $N\le G$ is normal if it is invariant under all inner 
automorphisms, and we write $N \unlhd G$. $N \leq G$ is a characteristic subgroup 
of $G$ if it is invariant under all automorphisms of $G$. $Z(G)$ denotes the 
center of $G$. For $K, L\leq  G$, $[K, L]$ denotes the subgroup generated by all 
elements of the form 
$[x,y]:=xyx^{-1}y^{-1}$, $x\in K$ 
and $y\in L$. $[G, G]$ is 
called the commutator subgroup of $G$.

\paragraph{Group extension data.} Given a finite group $G$ and an abelian normal subgroup $A \unlhd G$, when we consider $G$ as an extension of $A$ by $Q := G/A$, we denote this by $\extensionl{A}{\iota}{G}{\pi}{Q}$, 
where $\iota$ is an injective homomorphism and $\pi$ a surjective homomorphism, such that $\ker(\pi)=\im(\iota)$. In this paper, we mostly use the ``inner'' perspective, by identifying $A$ with its image $\iota(A) \unlhd G$. We sometimes refer to $G$ as the ``total group'' of the extension.

When $A \leq Z(G)$, the action of $G$ on $A$ by conjugation induces an action of $Q$ on $A$ by conjugation, which is the action associated to the extension $\extension{A}{G}{Q}$.

As $A$ is abelian, we write the group operation in $A$ additively, despite the fact that when considering general elements of $G$ we write the group operation in $G$ multiplicatively (this mixed notation is fairly standard in this setting). Even though $A$ is a subgroup of $G$, we tend to only use these notations in separate contexts and it should not cause confusion.

Let $\pi\colon G \to G/A \cong Q$ be the natural projection; then any function $s\colon Q \to G$ such that $\pi(s(q)) = q$ for all $q \in Q$ is called a \emph{section} of $\pi$. Any such section $s$ gives rise to a function $f_s\colon Q \times Q \to A$ defined by $f_s(p,q) \defeq s(p)s(q)s(pq)^{-1}$ (by applying $\pi$, it is readily verified that the image of $f_s$ is in fact contained in $A$). We are free to choose $s(1) = \id_G$, and then $f_s(1,q) = f_s(q,1) = 0$ for all $q \in Q$. Such $f$ are called \emph{normalized}. In the following all sections are normalized unless stated otherwise.

The fact that the group operation in $G$ is associative implies that for all $p,q,r \in Q$,
\[
f_s(p,q) + f_s(pq, r) = \theta_p(f_s(q,r)) + f_s(p,qr) \qquad \text{(the 2-cocycle identity)}.
\]
Any function $f\colon Q \times Q \to A$ is called a \emph{2-cochain}; any 2-cochain satisfying the 2-cocycle identity with respect to $\theta$ is a \emph{2-cocycle} (with respect to $\theta$). Given any homomorphism $\theta\colon Q \to \aut(A)$, every 2-cocycle with respect to $\theta$ arises as $f_s$ for some section $s$ of some extension $A \hookrightarrow G \twoheadrightarrow Q$ with action $\theta$.

Given a function $u \colon Q \to A$, the \emph{2-coboundary} associated to $u$ is the function $b_u\colon Q \times Q \to A$ defined by $b_u(p,q) \defeq u(p) + \theta_p(u(q)) - u(pq)$. Any two 2-cocycles associated to the same extension differ by a coboundary. 


The 2-cochains form an abelian group $C^2(Q, A)$ defined by pointwise addition: $(f+g)(p,q) \defeq f(p,q) + g(p,q)$. It is readily visible that the 2-cocycle identity is $\Z$-linear, and hence the 2-cocycles form a subgroup of the 2-cochains, denoted by $Z^2(Q, A, \theta)$. It is similarly verified that the 2-coboundaries form a subgroup of the 2-cocycles, denoted $B^2(Q, A, \theta)$.

A \emph{2-cohomology class} is a coset of $B^2(Q, A, \theta)$ in $Z^2(Q, A, \theta)$, and any element of this coset is a representative of the cohomology class. If $f \in Z^2(Q, A, \theta)$, we denote the corresponding cohomology class by $[f]$. The group of 2-cohomology classes is denoted $H^2(Q, A, \theta) \defeq Z^2(Q, A, \theta) / B^2(Q, A, \theta)$. By the above discussion, each extension $A \hookrightarrow G \twoheadrightarrow Q$ determines a single cohomology class $[f] \in H^2(Q, A, \theta)$.

We thus arrive at one of the central notions in this paper:

\begin{definition}\label{def:ext_data}
For $A$ an abelian group and $Q$ any group, a pair $(\theta, f)$ of an action 
$\theta\colon Q \to \aut(A)$ and a 2-cocycle $f \colon Q \times Q \to A$, $f \in 
Z^2(Q, A, \theta)$ is \emph{extension data}. Two extension data for the pair $(Q, 
A)$ are \emph{equivalent} if they have the exact same action and if the two 
2-cocycles are cohomologous (differ by a coboundary).
\end{definition}

Given an extension $A \hookrightarrow G \twoheadrightarrow Q$, the extension data associated to this extension are the action $\theta$ as defined above, and any 
2-cocycle $f_s$ for any section $s\colon Q \to G$.
Note that extension data are non-unique, as we may choose any representative of 
the corresponding 2-cohomology class. Furthermore, if the action is trivial then 
this 
extension is called \emph{central}. If the 2-cohomology class is trivial then this 
extension is called \emph{split}; in this case $G$ is a semi-direct product of $A$ 
by $P$ for some subgroup $P$.



\subsection{Pseudo-congruent extensions versus isomorphic total 
groups}\label{subsec:main}
Recall that a characteristic subgroup is a subgroup invariant under all automorphisms. The analogous notion for isomorphisms (rather than automorphisms) is a function $\mathcal{S}$ that assigns to each group $G$ a subgroup $\mathcal{S}(G) \leq G$ such that any isomorphism $\varphi\colon G_1 \to G_2$ restricts to an isomorphism $\varphi|_{\mathcal{S}(G_1)}\colon \mathcal{S}(G_1) \to \mathcal{S}(G_2)$. We call such a function a \emph{characteristic subgroup function}.
Note that if $G_1=G_2$, this says that $\mathcal{S}(G_1)$ is sent to itself by every automorphism of $G_1$, that is, $\mathcal{S}(G_1)$ is a characteristic subgroup of $G_1$. Most natural characteristic subgroups encountered are characteristic subgroup functions, for example the center $Z(G)$, the commutator subgroup $[G, G]$, or the radical $\rad(G)$.

\begin{definition}\label{def:pc}
Let $A$ be an abelian group and $Q$ any group, and let $(\theta_1, f_1)$ and 
$(\theta_2, f_2)$ be two extension data for $A$-by-$Q$. Then the extension data 
are \emph{pseudo-congruent}%
\footnote{See Footnote~\ref{fn:pc} on page~\pageref{fn:pc}.} 
if there exists $(\alpha, \beta)\in \aut(A)\times\aut(Q)$, such that 
\begin{equation}\label{eqn:action_pc}
\theta_1(q)(a) = \alpha^{-1}(\theta_{2} (\beta(q))(\alpha(a))) =: \theta_2^{(\alpha, \beta)}(q)(a),
\end{equation}
for all $q \in Q, a \in A$, and
\begin{equation}\label{eqn:cohom_pc}
f_1(p, q) = \alpha^{-1}(f_2(\beta(p), \beta(q))) + b_u(p,q)
\end{equation}
for all $p,q \in Q$, and for some 2-coboundary $b_u$. In this case we write $(\theta_1, f_1)\pc (\theta_2, f_2)$.
\end{definition}

\begin{lemma}[{See, \eg, \cite[Sec.~2.7.4]{HEO05}}]\label{lem:main}
Let $\charfn$ be a characteristic subgroup function. Given two finite groups $G_1$ and $G_2$, suppose $\charfn(G_1)$ and $\charfn(G_2)$ are abelian. Then $G_1\cong G_2$ if and only if both of the following conditions hold:
\begin{enumerate}
\item $\charfn(G_1) \cong \charfn(G_2)$ (which we denote by $A$) and $G_1 / \charfn(G_1) \cong G_2 / \charfn(G_2)$ (which we denote by $Q$); 
\item $(\theta_1, f_1)\pc (\theta_2, f_2)$, where $(\theta_i, f_i)$ is the extension data of the extensions $A \hookrightarrow G_i \twoheadrightarrow Q$. 
\end{enumerate}
\end{lemma}

For a detailed proof that doesn't require reading half a textbook on group theory first, see Appendix~\ref{app:gentle}. In the next section we generalize this to the case where the normal subgroup need not be abelian.

\section{Nonabelian cohomology and its applications}\label{sec:framework} \label{sec:main_lemma_nonab}

Here we consider extensions $\extension{N}{G}{Q}$ where $N$ need not be abelian, \ie, the general case. We show that Lemma~\ref{lem:main} extends to the case when $N$ comes from a characteristic subgroup function---not necessarily abelian---showing the usefulness of the extension theory perspective in its full generality. The results of this section may be of independent interest and of further use in the future, but in this paper will only be needed for the applications in this section and the next (namely, to show how the results of \cite{BCQ} can be viewed in the same cohomological light as Lemma~\ref{lem:main}, and to reduce the case where $G$ acts by inner automorphisms on its radical to natural sub-problems). Suzuki's book \cite[Section~2.7]{Suzuki} contains a nice introduction to the extension theory in the nonabelian case, while our contribution here is to adapt this theory explicitly to the setting of isomorphism testing. While some of the basics here are already laid out in Suzuki \cite{Suzuki}---so the first part of this section can be viewed as a review to fix notation---here we consider how the extension theory with nonabelian kernels can be used to understand isomorphism of the total groups, which was only considered in a very special case there (reproduced as Theorem~\ref{thm:centerless} below).

\paragraph{The action.} The first difference to notice when $N$ is non-abelian is that the conjugation map $\theta'\colon G \to \Aut(N)$, defined by $\theta'(g) = c_g$ where $c_g(n) = g n g^{-1}$, no longer contains $N$ in its kernel, and hence no longer descends to a map $Q \to \Aut(N)$. However, the action of $N$ on itself by conjugation is by inner automorphisms (by definition) so that we do get a well-defined map $G/N \to \Aut(N) / \Inn(N)$, that is, $Q \to \Out(N)$. For ease of reference, we give a name to such maps:

\begin{definition}
An \emph{outer action} of a group $Q$ on a group $N$ is a group homomorphism $Q \to \Out(N) = \Aut(N) / \Inn(N)$.
\end{definition}

In computations, rather than represent an outer automorphism as a coset of $\Inn(N)$ in $\Aut(N)$, we simply give it by a representative automorphism, and must remember when we may need to multiply by an arbitrary element of $\Inn(N)$. Throughout this section we use $T$ to denote an action rather than $\theta$, to remind the reader that the essential object here is the outer action represented by $T$, despite the fact that we work directly with actions $T\colon Q \to \Aut(N)$.

Correspondingly, in the setting of general $N$, the problem \actcomp must be generalized to \outactcomp, which is defined as follows. Two actions $T_1, T_2 \colon Q \to \Aut(N)$ are said to be ``outer equivalent'' if there is a function $t'\colon Q \to \Inn(N)$ and an automorphism $\alpha \in \Aut(N)$ such that $T_1(q) = \alpha^{-1} \circ t'(q) \circ T_2(q) \circ \alpha$ for all $q \in Q$. 

\begin{definition}[\outactcomp] 
Given two actions $T_1, T_2\colon Q \to \Aut(N)$, decide whether there an element $\beta \in \Aut(Q)$ such that $T_1$ and $T_2 \circ \beta$ are outer equivalent, that is, whether there exists $(\beta, \alpha, t') \in \Aut(Q) \times (\Aut(N) \ltimes \Inn(N)^{Q})$ such that $T_1(q) = \alpha^{-1} \circ t'(\beta(q)) \circ T_2(\beta(q)) \circ \alpha$ for all $q \in Q$.
\end{definition}

Although this formulation of \outactcomp is more complicated than if we had represented an outer automorphism as a full coset $\theta \Inn(N)$, it will be useful when we formulate \edpc below.

Note that when $N$ is abelian there are no inner automorphisms, so $\Out(N) = \Aut(N)$, the only choice for $t'$ above is trivial, and \outactcomp then becomes \actcomp.

\begin{remark}
We note that, unlike the case of $N$ abelian, when $N$ is nonabelian it is possible that some outer actions $\theta\colon Q \to \Out(N)$ may not be induced by \emph{any} extension of $N$ by $Q$. When there \emph{is} such an extension, the outer action $\theta$ is called \emph{extendable}. Eilenberg and Mac Lane \cite[Sec.~7--9]{em2} characterize which outer actions are extendable in terms of the third cohomology group $H^3(Q, Z(N))$. As our interest is primarily in \GpI, whenever it matters (\eg, in the definition of \outactcomp) we happily restrict our attention to extendable outer actions. 

The characterization in terms of cohomology with coefficients in $Z(N)$ allows one to use linear algebra over the abelian group $Z(N)$ to test in polynomial time whether a given outer action is extendable. In particular, the outer action is extendable if and only if an associated third cohomology class vanishes, \ie, is a 3-coboundary. A basis for $B^3(Q, Z(N))$ can be computed analogously to that in Proposition~\ref{prop:basis}, by enumerating over a natural basis of the 2-cochains $C^2(Q, Z(N))$ (which has dimension $|Q|^2$) and applying the coboundary operator. Then we just have to compute the 3-cocycle corresponding to the given outer action, and check that it lies in the linear subspace spanned by (subgroup generated by) the 3-coboundaries. Note that we may treat 3-cochains as $k \times |Q|^3$ matrices, analogous to how we treat 2-cochains.
\end{remark}

\paragraph{The cohomology class.} As in the case of $N$ abelian, when $N$ is nonabelian, one may still choose a set-theoretic section $s\colon Q \to G$ and get a 2-cocycle $f_s \colon Q \times Q \to N$. This section $s$ gives an action (not just outer action) $T_s\colon Q \to \Aut(N)$ by conjugation, namely $T_s(q)(n) = s(q) n s(q)^{-1}$. Starting from associativity in $G$, one may then work out, as in the abelian case, the 2-cocycle condition: 
\[
f_s(q_1, q_2) f_s(q_1 q_2, q_3) = T_s(q_1)(f_s(q_2, q_3)) f_s(q_1, q_2 q_3) \qquad \text{(the nonabelian 2-cocycle identity)}.
\] 
However, this condition depends not just on the action $T_s$ and the 2-cochain $f_s$, but also on some relationship between $T_s$ and $f_s$ (in this case, that they come from the \emph{same} section $s$). We would much prefer a condition that does not depend on the ambient extension group $G$. To get this condition, note that the action satisfies $T_s(q_1) T_s (q_2) = c_{f_s(q_1, q_2)} T_s(q_1 q_2)$, where $c_n \colon N \to N$ denotes the inner automorphism given by conjugation by $n \in N$: $c_n(m) = n m n^{-1}$. This leads us to the definition of extension data for general $N$:

\begin{definition}[Extension data for general $Q, N$]
Let $Q$ and $N$ be groups. We say that a pair $(T, f)$ of an action $T\colon Q \to \Aut(N)$ and a function $f\colon Q \times Q \to N$ is \emph{extension data} if, for all $q_i \in Q$, 
\begin{equation} \label{eqn:general_action}
T(q_1) T(q_2) = c_{f(q_1, q_2)} T(q_1 q_2)
\end{equation}
and
\begin{equation} \label{eqn:general_cocycle}
f(q_1, q_2) f(q_1 q_2, q_3) = T(q_1)\left( f(q_2, q_3) \right) f(q_1, q_2 q_3).
\end{equation}
In this case, we refer to $f$ as a 2-cocycle with respect to the action $T$.
\end{definition}

Note that condition (\ref{eqn:general_action}) very nearly determines $f$: it 
determines $f(q_1, q_2)$ up to an element of $Z(N)$. In particular, when $Z(N) = 
1$ condition (\ref{eqn:general_action}) actually does determine $f$ completely, a 
fact we will take advantage of when discussing the polynomial-time algorithm for 
groups with no abelian normal subgroups \cite{BCQ} (see also 
Theorem~\ref{thm:centerless}).

Another difference in the case of $N$ nonabelian is that, although we might denote the set of 2-cocycles by $Z^2(Q, N, T)$, this set will not in general be a group in any natural way, let alone an abelian group. (Also note that it depends on the action $T$, whereas we know that the action is not intrinsic to the extension but only the corresponding outer action is.) However, the quotient of two 2-cocycles with respect to the same action $T$ will land in the center $Z(N)$, allowing us to reduce part of the question back to the case of $N$ abelian. To see this, let $f_1, f_2$ be two 2-cocycles with respect to an action $T\colon Q \to \Aut(N)$, and consider conjugating by their quotient $f_1(q_1, q_2) f_2(q_1, q_2)^{-1}$:
\[
c_{f_1(q_1, q_2) f_2(q_1, q_2)^{-1}} = T(q_1) T(q_2) T(q_1 q_2)^{-1} T(q_1 q_2) T(q_2)^{-1} T(q_1) = \id_{N}.
\]
As the center $Z(N)$ consists exactly of those $n \in N$ such that $c_n = \id_{N}$, we find that the quotient $f_1(q_1, q_2) f_2(q_1, q_2)^{-1}$ lies in $Z(N)$. So although there isn't really a cohomology group ``$H^2(Q, N, T)$,'' we can often nonetheless reduce to questions about cohomology classes in $H^2(Q, Z(N), T|_{Z(N)})$. Everything up to this point in this section has been classical (although it has not been leveraged much for algorithms).

\paragraph{Equivalence.} As in the case of $N$ abelian, two extensions $\extension{N}{G_i}{Q}$ are said to be \emph{equivalent} if there is an isomorphism $\gamma\colon G_1 \to G_2$ such that $\gamma$ induces the identity map on both $N$ and $Q$. However, since $Z^2(Q, N, T)$ is no longer a group and $B^2(Q, N, T)$ no longer its subgroup, the notion of equivalent extensions doesn't translate so easily to a notion of equivalence for extension data. Hence we derive this condition more or less from scratch. That is, we derive what it means for two extension data to be equivalent by analyzing how two extension data coming from the same extension may differ, when a different choice of section $s \colon Q \to G$ is made. 

Fix an extension $\extension{N}{G}{Q}$ and two sections $s_1,s_2 \colon Q \to G$. Let $t(q) \defeq s_1(q) s_2(q)^{-1}$; as the $s_i$ are both sections, $s_1(q)$ and $s_2(q)$ are in the same coset of $N$, so that $t(q) \in N$ for all $q \in Q$. Then the actions $T_1 = \theta_{s_1}$ and $T_2 = \theta_{s_2}$ differ by the inner automorphism $c_{t(q)}$: $T_1(q) = c_{t(q)} \circ T_2(q)$. Recall that we set $f_i(q, r) \defeq s_i(q) s_i(r) s_i(q r)^{-1}$ for $q,r \in Q$. Then we have that
\begin{eqnarray*}
f_1(q, r) & = & s_1(q) s_1(r) s_1(q r)^{-1} \\
 & = & t(q) s_2(q) t(r) s_2(r) s_2(q r)^{-1} t(q r)^{-1} \\
 & = & t(q) T_2(q)(t(r)) s_2(q) s_2(r) s_2(q r)^{-1} t(q r)^{-1} \\
 & = & t(q) T_2(q)(t(r)) f_2(q, r) t(q r)^{-1} \\
 & = & t(q) T_2(q)(t(r)) c_{f_2(q, r)}(t(q r)^{-1}) f_2(q, r) =: f_2^{t,T_2}(q, r)
\end{eqnarray*}
For future reference, we denote this final expression by $f_2^{t,T_2}(q, r)$.

\begin{definition} \label{def:equiv_general}
Two extension data $(T_i, f_i)$ are \emph{equivalent} if there is a map $t\colon Q \to N$ such that $T_1(q) = c_{t(q)} \circ T_2(q)$ for all $q \in Q$, and $f_1 = f_2^{t,T_2}$. 
\end{definition}

There are several aspects of this definition to take note of:
\begin{itemize}
\item By definition, two extension data can be equivalent only if $T_1(q)$ and $T_2(q)$ represent the same outer automorphism of $N$, in accord with our discussion above. 

\item When $N$ is abelian, this definition agrees with the previous definition of equivalent extensions. For, in this case, $c_{t(q)} = \id_N$ and the condition $f_1 = f_2^{t,T_2}$ exactly says that $f_1$ and $f_2$ differ by the 2-coboundary $b_t$ defined by $t$.

\item $T_1 = T_2$ if and only if $s_1(q)$ and $s_2(q)$ differ by an element of the center $Z(N)$, that is, $t$ is a map $Q \to Z(N)$. In this case, let $T = T_1 = T_2$; then $(T, f_1)$ and $(T, f_2)$ are equivalent if and only if $f_1$ and $f_2$ differ by the coboundary $b_t \in B^2(Q, Z(N), T)$. Again, this will be relevant for our discussion below of the polynomial-time algorithm for semisimple groups \cite{BCQ}.
\end{itemize}

\paragraph{Pseudo-congruence.} As before, pseudo-congruence is defined as ``equivalence up to the action of $\Aut(N) \times \Aut(Q)$:'' 

\begin{definition}[Pseudo-congruence of extension data for general $Q,N$]
Let $Q$ and $N$ be two groups. Two extension data $(T_i, f_i) \in (Q \to \Aut(N), Q \times Q \to N)$ are \emph{pseudo-congruent} if there exist $(\alpha,\beta) \in \Aut(N) \times \Aut(Q)$ such that $(T_1, f_1)$ and $(T_2^{(\alpha,\beta)}, f_2^{(\alpha,\beta)})$ are equivalent. 

In more detail, the extension data are pseudo-congruent if there exists $(\alpha,\beta) \in \Aut(N) \times \Aut(Q)$ and $t\colon Q \to N$ such that, for all $q \in Q$ and all $n \in N$:
\begin{equation} \label{eqn:general_action_pc}
T_1(q)(n) = (\alpha^{-1} \circ c_{t(\beta(q))} \circ T_2(\beta(q)) \circ \alpha)(n)
\end{equation}
and 
\begin{equation} \label{eqn:general_cohom_pc}
f_1(q_1, q_2) = \alpha^{-1}\left[ f_2^{t,T_2}(\beta(q_1), \beta(q_2))\right] =: f_2^{(\alpha,\beta,t,T_2)}(q_1, q_2).
\end{equation}

The problem \edpc is to decide, given two extension data $(T_i, f_i)$, whether they are pseudo-congruent.
\end{definition}

\begin{lemma}[{Main Lemma\footnote{\label{fn:mainlem}Although the statement of the Main Lemma may not surprise experts, and follows from standard constructions in group cohomology, we have been unable to find a reference for it, and it seems not widely-known even amongst mathematicians, despite the abelian case being very well-known. As an example of our Main Lemma not being well-known, we point out that the main theorem of a 2003 paper \cite{hammerli} in \emph{L'Enseignment Math\'ematique}, whose proof takes approximately 7 pages there even assuming knowledge of group cohomology, is a short corollary of our Main Lemma, as shown in Remark~\ref{rmk:mainlemLie}. In that paper, it is even asked whether there are larger classes of groups for which its main theorem holds \cite[Remark~4.2]{hammerli}; our Main Lemma gives quite a general answer to this question.}}] \label{lem:main_general}
Let $\charfn$ be a characteristic subgroup function. Given two finite (or Lie, see Remark~\ref{rmk:mainlemLie}) groups $G_1$ and $G_2$, $G_1\cong G_2$ if and only if both of the following conditions hold:
\begin{enumerate}
\item $\charfn(G_1) \cong \charfn(G_2)$ (which we denote by $N$) and $G_1 / \charfn(G_1) \cong G_2 / \charfn(G_2)$ (which we denote by $Q$); 
\item $(T_1, f_1)\pc (T_2, f_2)$, where $(T_i, f_i)$ is the extension data of the extensions $N \hookrightarrow G_i \twoheadrightarrow Q$. 
\end{enumerate}
\end{lemma}

\begin{proof}
First suppose that $\gamma \colon G_1 \to G_2$ is an isomorphism. Since $\mathcal{S}$ is a characteristic subgroup function, $\gamma$ restricts to an isomorphism between the copy $\mathcal{S}(G_1)$ of $N$ in $G_1$ and the copy $\mathcal{S}(G_2)$ of $N$ in $G_2$, \ie, an automorphism $\alpha \in \Aut(N)$. Consequently, $\gamma$ induces an automorphism $\beta := \overline{\gamma} \in \Aut(Q)$. After twisting by these automorphisms, the discussion preceding Definition~\ref{def:equiv_general} shows that the extension data become equivalent.

Conversely, suppose $(T_1, f_1) \cong (T_2, f_2)$, via $(\alpha, \beta, t) \in \Aut(N) \times \Aut(Q) \times (Q \to N)$. As in the abelian case we have a standard reconstruction procedure; we construct groups $H_i$ from $(T_i, f_i)$ such that $H_i \cong G_i$, and then we show how the pseudo-congruence of the extension data easily yields an isomorphism $H_1 \cong H_2$.

The underlying set of $H_i$ will be $N \times Q$, with multiplication defined by
\[
(n, p) \circ_{H_i} (m, q) = (n \cdot T_i(p)(m) \cdot f_i(p, q), p q).
\]
Note that this is the same as in the abelian case, just being careful about the order. Here we have started using dots ``$\cdot$'' to denote multiplication, as the expressions below get somewhat complicated and this helps to keep things clear. Let $s_i \colon Q \to G_i$ denote the sections used to construct the extension data $(T_i, f_i)$. Then it is readily verified that the map $(n,q) \mapsto n s_i(q)$ gives an isomorphism $H_i \stackrel{\cong}{\to} G_i$.

Finally, we claim that the map $\varphi(n,p) \defeq (\alpha(n) \cdot t(\beta(p)), \beta(p))$ is an isomorphism from $H_1$ to $H_2$. The main fact to check is that this is even a homomorphism. Consider $(n, p)$ and $(m,q) \in H_1$. On the one hand, we have 
\begin{eqnarray*}
\varphi((n,p) \circ_{H_1} (m,q)) & = & \varphi( n \cdot T_1(p)(m) \cdot f_1(p, q), pq) \\
 & = & (\alpha (n \cdot T_1(p)(m) \cdot f_1(p, q)) \cdot t(\beta(pq)), \beta(pq)) \\
 & = & (\alpha(n) \cdot \alpha(T_1(p)(m)) \cdot \alpha(f_1(p, q)) \cdot t(\beta(pq)), \beta(pq))
\end{eqnarray*}
On the other hand, we have (here we'll sometimes use square brackets $[]$ to denote application of an automorphism to help keep all the parentheses straight):
\begin{eqnarray*}
\varphi(n,p) \circ_{H_2} \varphi(m,q) & = & (\alpha(n) \cdot t(\beta(p)), \beta(p)) \circ_{H_2} (\alpha(m) \cdot t(\beta(q)), \beta(q)) \\
 & = & (\alpha(n) \cdot t(\beta(p)) \cdot T_2(\beta(p))\left[\alpha(m) \cdot t(\beta(q))\right] \cdot f_2(\beta(p), \beta(q)), \beta(p) \beta(q)) \\
 & = & (\alpha(n) \cdot t(\beta(p)) \cdot T_2(\beta(p))[\alpha(m)] \cdot T_2(\beta(p))[t(\beta(q))] \cdot f_2(\beta(p), \beta(q)), \beta(pq)) \\ 
 & = & (\alpha(n) \cdot (c_{t(\beta(p))} \circ T_2(\beta(p)))[\alpha(m)] \cdot 
 t(\beta(p)) \cdot T_2(\beta(p))[t(\beta(q))] \cdot f_2(\beta(p), \beta(q)), \\ 
 & & \quad \beta(pq))
\end{eqnarray*}
Let's work through these two expressions bit by bit. We can dispense easily with the second coordinate, as $\beta(pq) = \beta(p) \beta(q)$ since $\beta \in \Aut(Q)$. Both of the first coordinates begin with $\alpha(n)$. Next we have $\alpha(T_1(p)(m))$ on the one hand and $(c_{t(\beta(p))} \circ T_2(\beta(p)))[\alpha(m)]$ on the other. From the definition of pseudo-congruence, we have that $T_1(p)(m) = \alpha^{-1} ( c_{t(\beta(p)} \circ T_2(\beta(p)) )[\alpha(m)]$. Applying $\alpha$ to both sides of this equation we see that these two terms are equal.

The remainder of the first coordinate is then $\alpha(f_1(p, q)) \cdot t(\beta(pq))$ in the first case. From the definition of pseudo-congruence we have:
\begin{eqnarray*}
\alpha(f_1(p, q)) t(\beta(pq)) & = & f_2^{t,T_2}(\beta(p), \beta(q)) \cdot t(\beta(pq)) \\
 & = & t(\beta(p)) \cdot T_2(\beta(p))[t(\beta(q))] \cdot f_2(\beta(p), \beta(q)) \cdot t(\beta(pq))^{-1} \cdot t(\beta(pq)) \\
 & = & t(\beta(p)) \cdot T_2(\beta(p))[t(\beta(q))] \cdot f_2(\beta(p), \beta(q)),
\end{eqnarray*}
which is exactly the remainder of the first coordinate in the second case, as desired. Hence $\varphi$ is a homomorphism.

Finally, it suffices to show that $\varphi$ is injective, for as $|H_1| = |N| |Q| = |H_2|$, it will then follow that $\varphi$ is bijective and hence an isomorphism. Consider the kernel of $\varphi$: $\varphi(n,p) = (1, 1)$. As the second coordinate is $1$, we have $\beta(p) = 1$ and hence $p = 1$. As the first coordinate is $1$, we have $\alpha(n) t(\beta(p)) = \alpha(n) = 1$, so we also have $n=1$. (In the first equality we use the fact that $t(1) = 1$, which follows from $T_i(1_Q) = \id_N$.) Hence $\varphi$ is injective, and thus an isomorphism.
\end{proof}

\begin{remark}[The Main Lemma for Lie groups] \label{rmk:mainlemLie}
Lemma~\ref{lem:main_general} also extends to the case of Lie groups, allowing us to show that the main theorem of H\"{a}mmerli \cite{hammerli} follows from our Main Lemma as a quick corollary, as well as giving a very general answer to a question he posed. The main theorem of H\"{a}mmerli \cite{hammerli} is essentially the special case of our Main Lemma in which the characteristic subgroup is taken to be the connected component of the identity. H\"{a}mmerli asked \cite[Remark~4.2]{hammerli} whether his main theorem extended to other classes of groups; our Main Lemma extends it greatly, and shows that the result has very little to do with Lie groups \emph{per se}.

To see that our Main Lemma extends to Lie groups, note that the only place we used finiteness in the proof of the Main Lemma~\ref{lem:main_general} is in the final paragraph, to get surjectivity from injectivity. In the case of Lie groups, we also note that the map in the above proof is continuously differentiable (even smooth). The rest of the argument is essentially dimension-counting, but we give it here for completeness. First, because the homomorphism is differentiable, it descends to a map of Lie algebras, and injectivity of the map implies injectivity of the corresponding map of Lie algebras. As the Lie algebras are, in particular, finite-dimensional vector spaces of the same dimension, injectivity and linearity imply surjectivity, so we have an isomorphism of Lie algebras. For a Lie group $G$, let $G^{(0)}$ denote the connected component of the identity of $G$. Continuity of the homomorphism and the fact that it induces an isomorphism of Lie algebras implies that it induces an isomorphism of $H_1^{(0)} \cong H_2^{(0)}$. As $G / G^{(0)}$ is a finite group for any Lie group $G$, injectivity and continuity together imply that we have an isomorphism of the component groups $H_1 / H_1^{(0)} \cong H_2 / H_2^{(0)}$. As the map we started with was an injective homomorphism that induces an isomorphism of the identity components and of the component groups, it is an isomorphism. \rmkqed
\end{remark}

\subsection{Application to extensions with trivial outer action}
We did not define \cohiso for general $N$ and then proceed to pseudo-congruence, 
as in the abelian case, because it turns out that when the outer action is 
trivial, \cohiso for action-trivial extensions of $N$ by $Q$ reduces to \cohiso 
for extensions of $Z(N)$ by $Q$. To prove this we use one additional concept, that 
of a central product. Although this notion generalizes to an arbitrary number of factors, we only need the two-factor case:

\begin{definition}[{Central decomposition; see, \eg, \cite{Wil09a}}]
A pair $\{G_1, G_2\}$ of subgroups of a finite group $G$ is a \emph{central decomposition} of $G$ if $G$ is generated by $G_1$ and $G_2$ ($G = \langle G_1, G_2 \rangle$) and $G_1$ and $G_2$ commute ($[G_1, G_2] = 1$).
\end{definition}

\begin{lemma} \label{lem:acttriv}
Let $\extensionl{N}{}{G}{\pi}{Q}$ be an extension of $N$ by $Q$ which induces the trivial outer action $\theta(q) = \id_{N}\Inn(N)$ for all $q \in Q$. Then there is a subgroup $H$ of $G$ such that $H \cap N = Z(N)$, $\pi(H) = Q$, and $\{N, H\}$ is a central decomposition of $G$. We denote this subgroup $H$ by $G|_{Z(N)}$.
\end{lemma}

\begin{proof}
There is a section $s\colon Q \to G$ such that $c_{s(q)} = \id_{N}$ for all $q \in Q$. Let $f(p,q) = f_s(p,q) = s(p) s(q) s(pq)^{-1}$ be the 2-cocycle corresponding to $s$. As $c_{s(q)} = \id_{N}$ for all $q \in Q$, we also have that $c_{f(p,q)} = \id_{N}$ for all $p,q \in Q$. As $f(p,q) \in N$, this implies that $f(p,q) \in Z(N)$. Hence $f$ is a 2-cocycle in $H^2(Q, Z(N))$ (for the trivial action of $Q$ on $Z(N)$). Let $H$ be the subgroup generated by $Z(N)$ and $s(Q)$. Since $f(p,q) \in Z(N)$ for all $p,q \in Q$, it follows that every element of $H$ can be represented uniquely in the form $zs(q)$ for $z \in Z(N), q \in Q$. 

From the uniqueness of the representation $z s(q)$, it follows immediately that $H \cap N = Z(N)$. Since $H$ included $s(q)$ for all $q \in Q$, it follows that $\pi(H) = Q$. Finally, to see that $[H,N] = 1$, consider $z s(q) n (z s(q))^{-1} n^{-1} = z s(q) n s(q)^{-1} z^{-1} n^{-1}$. Since $c_{s(q)} = \id_N$, this equals $z n z^{-1} n^{-1}$, but since $z \in Z(N)$, the latter is trivial.
\end{proof}

The preceding lemma nearly allows us to reduce group isomorphism when the outer action of $G$ on $\rad(G)$ is trivial to isomorphism of a pair of central radical groups and a pair of solvable groups. However, up to this point we have brushed over the fact that central products are not uniquely determined by their factors. In a central decomposition $\{G_1, G_2\}$, as in a direct product decomposition, it is true that both $G_i$ are normal subgroups. Unlike a direct decomposition, however, $G_1 \cap G_2$ need not be trivial. To make the discussion a little clearer, we introduce a standard alternative viewpoint on central decompositions:

\begin{definition}[{Central product; see, \eg, \cite[(11.1)]{aschbacher}}]
Given two groups $H_1, H_2$ and an isomorphism $\varphi \colon Y_1 \to Y_2$ between two subgroups $Y_i \leq Z(H_i)$ of their centers, the quotient of $H_1 \times H_2$ by $\{(y^{-1}, \varphi(y)) : y \in Y_1\}$ is the \emph{central product} of $H_1$ and $H_2$ along $\varphi$, denoted $H_1 \times_{\varphi} H_2$. 
\end{definition}

Central products and central decompositions are essentially equivalent. More specifically, if $\{G_1, G_2\}$ is a central decomposition of $G$, then if we let $Y_i = G_1 \cap G_2 \leq G_i$ for $i=1,2$, and define $\varphi\colon Y_1 \to Y_2$ to be the map induced by the identity map on $G_1 \cap G_2$ (thinking of both $G_1$ and $G_2$ as subgroups of $G$), then we see that $G$ is the central product of $G_1$ and $G_2$ along $\varphi$. Conversely, if $H = H_1 \times_{\varphi} H_2$ is a central product, then every element of $H$ can be written (not uniquely!) as the equivalence class of $(h_1, h_2)$ in the quotient $H_1 \times_{\varphi} H_2$, for some $h_1 \in H_1, h_2 \in H_2$. Let us denote this equivalence class by $\overline{(h_1, h_2)}$. Then it is readily verified that $G_1 = \{\overline{(h_1, 1)} : h_1 \in H_1\}$ is a subgroup of $H$ isomorphic to $H_1$, that $G_2 = \{\overline{(1,h_2)} : h_2 \in H_2\}$ is a subgroup of $H$ isomorphic to $H_2$, and that $\{G_1, G_2\}$ is a central decomposition of $H$.

When dealing with isomorphisms between central products, the fact that $G_1 \cap G_2$ is nontrivial becomes a source of difficulty, as in the following lemma. Although this lemma applies in a more general situation, we state it for the situation we are most interested in.

\begin{lemma} \label{lem:centprod}
Let $A$ be a solvable group and $B$ a central-radical group. Suppose that $\varphi_1, \varphi_2$ are two isomorphisms $Z(A) \to Z(B)$. Then $A \times_{\varphi_1} B$ is isomorphic to $A \times_{\varphi_2} B$ if and only if there are automorphisms $\alpha \in \Aut(A)$ and $\beta \in \Aut(B)$ such that $\varphi_1 = \beta^{-1} \circ \varphi_2 \circ \alpha$.
\end{lemma}

\begin{proof}
Let $G_i$ denote $A \times_{\varphi_i} B$, and let $Z_i$ denote the copy of $Z(A) \cong Z(B)$ in $G_i$ for $i=1,2$. Every element of $G_i$ can be written---not necessarily uniquely---as $ab$ with $A \in A$ and $b \in B$; $ab=1$ in $G_i$ if and only if $a \in Z(A)$, $b \in Z(B)$, and $\varphi_i(a) = b^{-1}$. 

First, suppose there are $\alpha \in \Aut(A)$ and $\beta \in \Aut(B)$ such that $\varphi_1 = \beta^{-1} \circ \varphi_2 \circ \alpha$. Then we claim that the map sending $ab \in G_1$ to $\alpha(a) \beta(b) \in G_2$ is both well-defined and an isomorphism. To see that it is well-defined note that if $ab=1$ in $G_1$, then $\varphi_1(a) = b^{-1} \in Z(G_1)$. By assumption, we then have $\beta^{-1}(\varphi_2(\alpha(a))) = b^{-1}$, or equivalently $\varphi_2(\alpha(a)) = \beta(b)^{-1}$, which means that $\alpha(a)\beta(b) = 1$ in $G_2$. From this, one concludes that if $ab = a'b'$ then $\alpha(a) \beta(b) = \alpha(a') \beta(b')$; it is then easily verified that this map is in fact a homomorphism. Since these are finite groups, injectivity then suffices to show it is an isomorphism; injectivity follows using the preceding argument run in reverse.

For the reverse direction, we will need to use the following fact. Let $\pi_i \colon A \times B \to A \times_{\varphi_i} B$ be the natural quotient map. Then we claim that $\pi_i(1 \times Z(B)) = Z(G_i)$. It is easy to see that $\pi_i(1 \times Z(B)) \leq Z(G_i)$. For the reverse inclusion, suppose $z \in Z(G_i)$. Then $z$ must commute with both $\pi_i(1 \times B)$, and since $\pi_i$ maps $1 \times B$ isomorphically onto its image, this means that $z \in \pi_i(1 \times Z(B))$. 

Now, suppose there is an isomorphism $\gamma\colon G_1 \to G_2$. Note that, as $A$ is a solvable normal subgroup of $G_i$ and $G_i / A \cong B / Z(B)$ is semisimple, $A$ must be equal to $\rad(G_i)$. In particular, since the radical is a characteristic subgroup function, $\gamma(A) = A$ and thus $\alpha := \gamma|_{A}$ is an automorphism of $A$. 

Next we show that $\gamma$ sends $B$ to $B$, thus inducing an automorphism of $B$. Note that $G_i / Z(G_i) \cong (A / Z(A)) \times (B / Z(B))$. Since the center is a characteristic subgroup function, $\gamma$ induces an isomorphism $G_1 / Z(G_1) \to G_2 / Z(G_2)$, that is, an automorphism of $(A / Z(A)) \times (B / Z(B))$. Since $A$ is solvable and $B/Z(B)$ is semisimple, $\gamma$ thus induces an automorphism of $B/Z(B)$ as well. Since $\gamma(Z(B)) = \gamma(Z(G_1)) = Z(G_2) = Z(B) \leq B$, by a counting argument $\gamma$ must send $B$ to $B$. Thus $\gamma$ induces $\beta \in \Aut(B)$. 

Thus, by construction, for $a \in A$ and $b \in B$, $\gamma(a) = \alpha(a)$ and $\gamma(b) = \beta(b)$. Since $\gamma$ is a homomorphism, it follows that $\gamma(ab) = \alpha(a) \beta(b)$ for all $a \in A$ and all $b \in B$. To see that $\varphi_1 = \beta^{-1} \circ \varphi_2 \circ \alpha$, one uses the same calculation as in the first direction.
\end{proof}

The \algprob{Central Amalgam Problem} is: given two automorphisms $\varphi_i \in 
\aut(Z)$, $i=1, 2$, of an abelian group $Z$, two black-box groups $G, H$ (think of 
these as $\aut(A)$ and $\aut(B)$ in the preceding lemma), and actions of $G$ and 
$H$ on $Z$---given by specifying the matrix actions of \emph{generating sets} of $G$ and 
$H$---decide whether there exists $\alpha \in G$ and $\beta \in H$ such that 
$\varphi_1  = \beta^{-1} \circ \varphi_2 \circ \alpha$.


\begin{proposition} \label{prop:action_triv}
Let $\charfn$ be a polynomial-time computable characteristic subgroup function. Suppose that $G_1, G_2$ are two groups for which the induced outer action of $G_i / \charfn(G_i)$ on $\charfn(G_i)$ by conjugation is trivial (equivalently: the induced action is by inner automorphisms of $\charfn(G_i)$). Then the group isomorphism problem for $(G_1, G_2)$ reduces in polynomial time to finding a generating set for $\Aut(\charfn(G_i))$ and $\Aut(G_i|_{Z(\charfn(G_i))})$ and solving the \algprob{Central Amalgam Problem}.

In particular, group isomorphism for groups for which the outer action of $G/\rad(G)$ on $\rad(G)$ is trivial reduces in $n^{O(\log \log n)}$-time to finding generating sets of the automorphism group of solvable groups and solving the \algprob{Central Amalgam Problem}.
\end{proposition}

\begin{remark}\label{remark:action_triv}
When $|\aut(Z(\rad(G))|$ is bounded by $\poly(|G|)$, the \algprob{Central Amalgam Problem} can be solved in $\poly(|G|)$ time by standard permutation group algorithms. Note that the class of solvable groups $S$ whose centers have automorphisms groups of polynomial size may seem restrictive, but is in fact quite rich. In particular, it includes solvable groups whose centers are abelian groups with arbitrarily many factors, as long as each prime appears in a bounded number of factors, and also includes all centerless solvable groups (itself quite a nontrivial class of groups).
\end{remark}

\begin{proof}
We show how to construct a central decomposition as in Lemma~\ref{lem:acttriv} in polynomial time. Let $N = \charfn(G)$ and $Q = G / N$. By assumption, the subset $N = \charfn(G)$ can be identified in $\poly(|G|)$ time. Next, choose any section $s\colon Q \to G$. It may be that some $s(q)$ acts non-trivially on $N$ via conjugation. However, by the assumption that the outer action is trivial, $c_{s(q)}$ must be some inner automorphism of $N$, say $c_{n(q)}$ for some $n(q) \in N$. To find this $n(q)$, we may search through $N$ exhaustively in at most $O(|N|^2) \leq O(|G|^2)$ time: essentially $|N|$ steps to check the action of a given $n$ on $N$ by conjugation, and there are $|N|$ possible $n$'s to check. Then let $s'(q) = s(q) n(q)^{-1}$; as $n(q) \in N$, $s'$ is another section, and by construction $c_{s'(q)} = \id_{N}$ for all $N$. Finally, let $f(p,q) = s'(p) s'(q) s'(pq)^{-1}$. Computing all the values of $f$ takes essentially $O(|Q|^{2}) \leq O(|G|^2)$ time, and then we construct $G|_{Z(\mathcal{S}(G))}$ as the subgroup of $G$ generated by $s(Q)$ and $Z(N)$ in $\poly(|G|)$ time. 
\end{proof}

\paragraph{Extensions of centerless groups.} We have already mentioned a few useful properties of extensions of centerless groups, that is, when $Z(N) = 1$. One that is implicit in what we have already said is that every outer action $Q \to \Out(N)$ is extendable, that is, it is induced from some extension of $N$ by $Q$. These properties culminate in the following very useful theorem:

\begin{theorem}[{see, \eg, \cite[Thm.~2.7.11]{Suzuki}}] \label{thm:centerless}
Let $N$ be a centerless group, $Q$ any group, and $G$ an extension of $N$ by $Q$. Then $G$ is determined up to isomorphism by the induced outer action of $Q$ on $N$.

Furthermore, every such extension is equivalent to a subgroup $\Gamma \leq Q \times \Aut(N)$ satisfying $\Gamma \cap \Aut(N) = \Inn(N)$ and $\pi_Q(\Gamma) = Q$, where $\pi_Q \colon Q \times \Aut(N) \to Q$ is the projection onto the first factor.
\end{theorem}

In particular, if $\charfn$ is a characteristic subgroup function computable in polynomial time, and $\gpcls$ is a class of groups for which $\charfn(G)$ is centerless for every $G \in \gpcls$, then isomorphism of groups in $\gpcls$ reduces to isomorphism of groups of the type $G/\charfn(G)$ for $G \in \gpcls$, groups of the type $\charfn(G)$ for $G \in \gpcls$, and \outactcomp.


\section{The strategy}\label{sec:strategy}
Suppose we are given two groups $G_1$ and $G_2$ from some class of groups 
$\gpcls$. Our Main Lemma~\ref{lem:main_general} suggests (and indeed was motivated 
by) a divide-and-conquer strategy to test isomorphism (Section~\ref{sec:recipe}). 
This strategy highlights important structural features of \GpI, which we show are 
formally necessary in Section~\ref{sec:necessary}. It naturally suggests new 
group classes for which polynomial-time isomorphism tests might be within reach, 
and also suggests \emph{a priori} many group classes for which polynomial-time 
algorithms have previously been achieved. 

However, before we proceed, let us emphasize that the 
extension viewpoint only 
helps with a \emph{conceptual} understanding of these previous works. Given this 
viewpoint, to tackle each group class may still require novel mathematical ideas 
and technically demanding algorithms. The extension viewpoint is mostly used 
to set the stage for applications of such mathematical and algorithmic 
techniques. In other words, instead of looking at a table encoding an 
\emph{abstract} group, an application of the Main Lemma usually transfers us to a 
more 
\emph{concrete} setting where we need to solve problems about, e.\,g., bilinear 
maps and permutation groups. We also stress that not all recent 
progress on \GpI can be captured from this extension viewpoint, e.\,g., 
\cite{Wil13,GR16}. 

\subsection{A recipe for group isomorphism} \label{sec:recipe}

\begin{enumerate}
\item Choose wisely a polynomial-time computable characteristic subgroup function $\charfn$. Note that if $\charfn(G)$ is always abelian, then the technically simpler abelian Main Lemma~\ref{lem:main} can be applied.

\item Test whether $\charfn(G_1)\cong\charfn(G_2)$ (which we henceforth refer to as $N$) and $G_1/\charfn(G_1)\cong G_2/\charfn(G_2)$ (which we refer to as $Q$). If either of these fails, then $G_1 \not\cong G_2$.

\item Extract the extension data $(T_i, f_i)$ from the extension $N \hookrightarrow G_i \twoheadrightarrow Q$ for $i=1, 2$ by picking arbitrary sections $s\colon Q \to G_i$ and computing the action and cohomology class.

\item Test pseudo-congruence of the two extension data. That is, find $(\alpha, \beta)\in\aut(N)\times \aut(Q)$, and a function $t\colon Q\to N$ such that $T_1(q)=c_{t(q)}\circ T_2^{(\alpha, \beta)}(q)$ and $f_1=(f_2^{(\alpha, \beta)})^{t,T_2}$. If the abelian Main Lemma~\ref{lem:main} applies, then $t$ is unnecessary.
 \end{enumerate}

Some general remarks are due for each of these steps:
\begin{enumerate}
\item A seemingly obvious requirement would be that $\charfn(G)$ should not be trivial for any $G\in\gpcls$. However, even if this is not the case, it may be fruitful to consider separately the class of groups for which $\charfn(G)$ is trivial. For example, semisimple groups arise this way, as those groups for which $\rad(G)$ is trivial.

\item Due to the nature of the divide and conquer strategy, $\charfn(G)$ and $G/\charfn(G)$ should be from group classes with known efficient isomorphism tests. Alternatively, if, say, $\charfn(G)$ is not from such a class, it may be possible to use this strategy to reduce isomorphism of groups in $\gpcls$ to isomorphism of groups of the form $\charfn(G)$ for $G \in \gpcls$ (or similarly for $G/\charfn(G)$).

\item This step is easy ($\poly(|G|)$ time). 
Based on the group class $\gpcls$ the extension data will hopefully turn out to have nice mathematical structure; indeed, looking for this nice mathematical structure is a nice heuristic that can help motivate and suggest various choices for $\gpcls$.

\item This pseudo-congruence test is the main bottleneck. Choosing $\charfn$ so that this step can take advantage of known cohomological results may be helpful. For example, if $\charfn(G) \leq Z(G)$ then \edpc simplifies to \cohiso; at the opposite end of the spectrum, if $G = \charfn(G) \rtimes (G/\charfn(G))$ then \edpc simplifies to \actcomp. As another example, if $\charfn(G)$ is centerless, one may take advantage of Theorem~\ref{thm:centerless}, as in the case of semisimple groups (see below).
\end{enumerate}

\subsection{Some recent results from the point of view of the main lemma} 
\label{sec:previous}

As mentioned in the introduction, there have been some recent polynomial-time 
algorithms for 
several group classes: semisimple groups \cite{BCGQ,BCQ}, generalized 
Heisenberg 
groups \cite{LW12}, groups with abelian Sylow towers \cite{BQ}, and (in this 
paper) $n^{o(\log n)}$-time algorithms for central-radical groups. The experts would easily see how the perspective 
of group extension helps to open a venue of attack to devise efficient 
algorithms for these group classes. However, for readers who have not seen such a 
connection, the 
definitions of these group classes may at first seem obscure, and it is not 
\emph{a priori} clear why we should have found efficient algorithms for these 
particular classes of groups, as opposed to others. 
We believe that the viewpoint 
of extensions and cohomology, especially in light of the Main Lemma, gives a 
unifying perspective to these works which helps to explain the progress on these 
group classes, thereby easing certain readers' understanding of 
these previous works.

In the following, we first summarize some basic information about these works, and then explain in detail how previous works on \GpI fit into the general strategy described as above. 
\vskip 1em
\noindent \begin{tabularx}{\textwidth}{ |l|X|X|X| }
\hline References & Group class & Characteristic subgroup function & Extension type  \\ 
\hline \cite{BCGQ,BCQ} & Semisimple groups & Socle\footnote{The \emph{socle} of a group $G$ is the subgroup generated by the union of the minimal normal subgroups of $G$.}  & Extension of a centerless group   \\ 
\hline \cite{LW12} & Quotients of generalized Heisenberg groups & Center & Special 
type of central extensions of $\Z_p^k$ by $\Z_p^\ell$   \\ 
\hline \cite{Gal09,QST11,BQ} & Groups with abelian Sylow towers & Normal Hall subgroups & Split extension of $A$ by $Q$ with $(|A|, |Q|)=1$   \\ 
\hline This work & Central-radical groups & Solvable radical & (Central) Extension of abelian groups by semisimple groups  \\ 
\hline
Follow-up work \cite{GQ15} & Groups with ``tame'' radicals & Any ``tame'' abelian 
& Tame extension of $A$ by $Q$ \\ \hline
\end{tabularx}

\paragraph{Semisimple groups (groups with no abelian normal subgroups).} In the 
polynomial-time algorithm for semisimple groups \cite{BCQ} we take $\mathcal{S} = 
\soc$, \ie, $N = \soc(G)$, which is a polynomial-time characteristic subgroup 
function. Hence the general Main Lemma~\ref{lem:main_general} applies and 
isomorphism of semisimple groups reduces to \edpc. When $G$ is semisimple, its 
socle is a direct product of nonabelian simple groups, so $Z(N) = 1$. $N$ being 
centerless simplifies some of the results in the previous section, as captured in 
Theorem~\ref{thm:centerless}, which corresponds to the key lemma in 
\cite{BCGQ} (Lemma 3.1 therein), and leads to the problems considered by Babai 
\etal \cite{BCGQ, BCQ}. In particular, note that in the definition of 
pseudo-congruence for nonabelian $N$, after twisting by $(\alpha, \beta) \in 
\Aut(N) \times \Aut(Q)$ to make the actions $T_1, T_2$ become equivalent as outer 
actions, the condition on the 2-cocycles is simply that they differ by a 
2-coboundary in $B^2(Q, Z(N), T)$. In particular, when $N$ is centerless $B^2(Q, 
Z(N), T)$ is trivial, so \edpc reduces to \outactcomp.

In the case of semisimple groups, using the structure of these 
groups, one sees quickly that \outactcomp reduces to \algprob{Twisted Code 
Equivalence} (introduced in \cite{BCQ}), where the ``twisting'' groups correspond 
to the action of $\Out(N) = \Out(\soc(G))$ in the definition of \outactcomp, and 
the choice of $t\colon Q \to N$ is handled by considering codes whose codewords 
correspond to elements of $G$ rather than just elements of $Q$.

\paragraph{$p$-groups of class $2$ and exponent $p$, esp. quotients of generalized 
Heisenberg groups.} For 
$p$-groups of class $2$ and exponent $p$ with odd $p$, Baer's correspondence 
\cite{baer} suggests considering the alternating bilinear maps defined by the 
commutator bracket: isomorphism of $p$-groups corresponds to pseudo-isometry of 
these bilinear maps. These bilinear maps are 2-cocycles, and two such cocycles are 
isomorphic as cohomology classes if and only if the bilinear maps are 
pseudo-isometric, so we see that this is a particular instance of \cohiso and 
Baer's correspondence can be viewed as a special case of the abelian Main 
Lemma~\ref{lem:main}. 

The bilinear map viewpoint has been the main stage for 
the recent progress on testing isomorphism of such $p$-groups \cite{LW12,BMW15}. 
For example, in \cite{LW12}, 
Lewis and Wilson \cite{LW12} studied a decently large class of 
$p$-groups---quotients of generalized Heisenberg groups---which are 
indistinguishable to classical invariants but for which they nonetheless present a 
polynomial-time isomorphism test. Such groups admit a nice characterization from 
the bilinear map viewpoint \cite[Theorem 3.13]{LW12}, and the 
polynomial-time isomorphism test for these groups takes advantage of the special 
structure of the bilinear maps corresponding to these groups 
(\cite[Theorem~4.1]{LW12}). We remark that the algorithm 
in \cite{LW12} works with much more succinct models for representing 
groups, including permutation groups and matrix groups, and runs in time 
polynomial 
in the input size, which can be as small as 
$\log |G|$ instead of $|G|$. 


\paragraph{Groups with abelian Sylow towers.} Though solving \GpI for the obscure-sounding group class ``groups with abelian Sylow towers,'' the core of \cite{BQ} (following \cite{Gal09, QST11}) deals with the case of \emph{coprime extensions}, namely extension of an abelian $A$ by $Q$ where $(|A|, |Q|)=1$. The Schur--Zassenhaus Theorem guarantees that coprime extensions split, thus reducing \edpc to \actcomp for such groups. Assuming $\aut(Q)$ is known (via recursive divide-and-conquer), \cite{BQ} views the actions of $Q$ on $A$ as linear representations and utilizes the complete reduciblility of these representation by Maschke's Theorem, which requires coprimality. The tactic they use is to view the induced action of $\aut(Q)$ on the irreducible constituents as a permutation group action, and then to develop a parameterized permutation group algorithm to finally solve \actcomp in this case \cite{BQ}.

\paragraph{Tame extensions.} 
In a follow-up work \cite{GQ15}, the current authors used the viewpoint of this 
paper to generalize the preceding from coprime extensions to so-called ``tame'' 
extensions. These are extensions of $\Z_p^k$ by $Q$ (assuming $\aut(Q)$ is known, 
e.\,g., by recursive divide-and-conquer) where the Sylow $p$-subgroups of $Q$ are 
cyclic, or $p=2$ and the Sylow 2-subgroups are dihedral, semi-dihedral, or 
generalized quaternion. (The case above is when the Sylow $p$-subgroups of $Q$ are 
trivial.) 

In fact, the story behind \cite{GQ15} is a perfect example of the utility of explicitly splitting \GpI into \actcomp and \cohiso. Namely, independently, one of the current authors had solved \actcomp for the tame case, and the other had solved \cohiso for the tame case under the assumption that \actcomp could be solved; when they met in Chicago each was eager to tell the other of their result, asking if the other ``half'' of the problem could be solved. The result was nearly immediate from there.

\paragraph{Central-radical groups.} Similarly, by considering cohomology rather than actions we will see in the following how to handle central-radical groups. An elementary way of manipulating the 2-cohomology classes yields an $n^{O(\log \log n)}$-time algorithm for groups with central radicals. For a subclass of groups with central radicals, a more detailed understanding of 2-cohomology classes (Lemma~\ref{lem:prod}) helps establish the polynomial-time algorithms in Theorem~\hyperref[thm:ecentradwsoc]{C}=\ref{thm:ecentradwsoc}. In particular, singly exponential algorithms for \lincode and for \cosetint enter inevitably in the algorithm for this theorem.

\paragraph{Summary.} In summary, although the strategy we propose will rarely solve the problem completely, even for a restricted class of groups, it has the virtue of quickly dispensing with structural issues to highlight the needed algorithmic tactics.

\subsection{Necessity of pseudo-congruence and cohomology} \label{sec:necessary}

Lemma~\ref{lem:main} suggests studying \edpc to make progress towards \GpI for 
groups with abelian normal subgroups. In this section, we shall see that 
pseudo-congruence tests for certain classes of extension data are exactly 
isomorphism tests for certain interesting group classes. While 
Lemma~\ref{lem:main} almost implies so, a pitfall is that in the reconstruction 
procedure we need the normal copy of $A$ in $H$ to be the image of a 
characteristic subgroup function. (In the setting of Lemma~\ref{lem:main} the 
standard reconstruction procedure does return groups $H_i$ with the copy of 
$A=\charfn(H_i)$, but this is because of the assumptions of that lemma.)
This leads us to look at some concrete classes of extension data, for which this 
property holds.

For split extensions, a well-known example is the case when $|A|$ and $|Q|$ are 
coprime, as ensured by the Schur--Zassenhaus Theorem. In this case $G$ is said to 
be a coprime extension of $A$ by $Q$, and $A$ is a normal Hall subgroup in 
$G$. 
Noting that taking a normal Hall 
subgroup of a specific order is a characteristic subgroup function, with the 
standard reconstruction procedure we have:

\begin{fact} \label{fact:nec_actcomp}
There is a polynomial-time function $r$ which takes any group action $\theta \colon G \to \aut(B)$ (for any groups $B, G$) to a group $r(\theta)$ with the following property.  When $A$ is abelian, $Q$ is a group of order coprime to $|A|$, and $\theta_i\colon Q \to \aut(A)$ ($i=1,2$) are group actions,  then $(\theta_1, \theta_2) \mapsto (r(\theta_1), r(\theta_2))$ is a Karp reduction from these instances of \actcomp to \GpI.
\end{fact}

A polynomial-time algorithm for \actcomp for the case in Fact~\ref{fact:nec_actcomp} was given by Babai and Qiao \cite{BQ}, yielding a polynomial-time time algorithm for ``groups with abelian Sylow towers.''

\begin{remark}
Despite the polynomial-time algorithm for \actcomp for coprime case---which trivially implies there is a Karp reduction to \emph{any} problem, including \GpI---Fact~\ref{fact:nec_actcomp} remains nontrivial. In particular, recall that given equivalence relations $\sim_1$ on $X_1$ and $\sim_2$ on $X_2$, a \emph{kernel reduction} between them is a function $f\colon X_1 \to X_2$ such that $x \sim_1 y$ if and only if $f(x) \sim_2 f(y)$ \cite{FortnowGrochowPEq}. Fact~\ref{fact:nec_actcomp} states that there is a polynomial-time kernel reduction, which does \emph{not} follow automatically from the polynomial-time algorithm for coprime \actcomp.
\end{remark}

For central extensions, let $p\neq 2$ be a prime. If $A$ and $Q$ are both abelian, then a $\Z$-bilinear map $f:Q\times Q\to A$ is a 2-cocycle, as the cocycle identity follows directly from bilinearity. Note that the action of $\aut(A)\times\aut(Q)$ preserves bilinearity. The following proposition is known (\cite{baer,lazard}, see also \cite[Section 5]{warfield} and \cite{Wil09a}); the standard reconstruction procedure is altered to make the image of $A$ the commutator subgroup.

\begin{fact} \label{fact:nec_cohiso}
Given a prime $p\neq 2$, and finite abelian $p$-groups $A$ and $Q$
let $f_i:Q\times Q\to A$ be an alternating 
bilinear map. Then \cohiso for $f_1, f_2$ Karp-reduces to \GpI.
\end{fact}

\begin{proof}
Given $f_i$ for $i=1,2$, alter the standard construction as follows. For $a, b\in A$ and $q, q'\in Q$, we define the group $G_i$ with operation $\circ$ over the set $A\times Q$ as
$
(a, q)\circ (b, q')=(a+b+\frac{1}{2}f_i(q, q'), q+q').
$
It is known that $G_i$'s are $p$-groups of class 2, 
the copy of 
$A$ in $G_i$ is the commutator subgroup, and $f_1\pc f_2$ if and only if $G_1\cong 
G_2$ (see, \eg, \cite[Section 5]{warfield}).
\end{proof}

Finally let us examine groups whose solvable radicals are abelian, a super-class of central-radical groups. When $Q$ is semisimple, the standard reconstruction procedure sends $A$ to the solvable radical. (A solvable normal subgroup $N$ is the solvable radical if and only if $G/N$ is semisimple.) This hints at the fact that for central-radical groups, group isomorphism is equivalent to \cohiso.
\begin{fact} \label{fact:nec_semisimple}
Let $A$ be abelian and $Q$ semisimple. For $i=1, 2$, let $\theta_i:Q\to\aut(A)$ be a homomorphism, and $f_i:Q\times Q\to A$ be a 2-cocycle in $Z^2(Q, A, \theta_i)$. Then \edpc for $(\theta_1, f_1)$ and $(\theta_2, f_2)$ Karp-reduces to \GpI.
\end{fact}

\section{Preliminaries for the algorithms}\label{sec:prel_concrete}

Some general notations are described at the beginning of \Sec{prel_framework}.

\paragraph{Further notations and some group-theoretic facts.} Given a finite set $\Omega$, $\sym(\Omega)$ denotes the symmetric group consisting of all permutations of $\Omega$. A permutation group acting on $\Omega$ is a subgroup of $\sym(\Omega)$. Given $\pi\in \sym(\Omega)$ and $a\in \Omega$, the image of $a$ under $\pi$ is denoted by $a^\pi$. If $\Omega=[n]$, $n\in\N$, we use $S_n$ to denote $\sym(\Omega)$, and $A_n\leq S_n$ consists of permutations of even signs. For a vector space $V$ over a field $\F$, the general linear group $\GL(V)$ consists of all non-singular linear transformations of $V$. If $V=\F_q^n$, $q$ is a prime power, we may write $\GL(n, q)$ for $\GL(V)$.

By the Fundamental Theorem of Finite Abelian Groups, a finite abelian group is isomorphic to a direct product of cyclic groups of prime power orders.  Formally, let $A$ be an abelian group, then there exists a direct product decomposition of $A$ as $A=\langle e_1\rangle\times \langle e_2\rangle \times\dots\times \langle e_n\rangle$, where $e_i\in A$ has order $p_i^{k_i}$, such that $p_1\le p_2\le\cdots\le p_n$, and if $p_i=p_{i+1}$, then $k_i\le k_{i+1}$, for all $i$. This decomposition is called the primary decomposition of $A$, and the tuple $(e_1, \dots, e_n)$ forms a basis of $A$.  The elementary abelian groups are those groups of the form $\Z_p^n$ for some prime $p$ and any $n$. Note that $\aut(\Z_p^n)\cong\GL(n, p)$.

A group $G$ is \emph{simple} if $|G|>1$ and $G$ has no proper nontrivial normal subgroups. The celebrated Classification of Finite Simple Groups lists all finite simple groups explicitly \cite{ATLAS}. The only abelian simple groups are the cyclic groups of prime order. We use the following fact, which (currently) depends on the Classification for its proof:
\begin{fact}[\cite{Stein, AG}]\label{fact:genby2}
Every nonabelian simple group can be generated by 2 elements.
\end{fact}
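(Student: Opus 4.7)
The plan is to invoke the Classification of Finite Simple Groups (CFSG) and then verify the 2-generation property family by family, since there is no known classification-free proof. First I would recall that, by CFSG, every finite nonabelian simple group falls into one of three classes: the alternating groups $A_n$ for $n \geq 5$; the finite simple groups of Lie type (classical groups and exceptional groups over finite fields, together with the Steinberg, Suzuki, and Ree twisted variants); and the 26 sporadic simple groups. It then suffices to exhibit, for each group in each of these classes, an explicit pair of generators.

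Second, I would handle the alternating case directly. For $A_n$ with $n \geq 5$ odd, the pair consisting of a $3$-cycle $(1\,2\,3)$ and the $n$-cycle $(1\,2\,\dots\,n)$ generates $A_n$; for $n$ even, one can take $(1\,2\,3)$ and $(2\,3\,\dots\,n)$. A short transitivity/primitivity argument combined with the fact that a primitive permutation group containing a $3$-cycle must contain $A_n$ closes this case. Third, I would invoke Steinberg's theorem \cite{Stein}, which establishes uniformly across all families of Lie type that each such simple group is generated by two elements—one can even take one generator to lie in a maximal torus and the other to be a suitable product of root elements realizing a Coxeter element. This part is the main obstacle, since Steinberg's argument requires genuine structural input about root systems, BN-pairs and the corresponding Bruhat decomposition; however, for the purposes of this fact it suffices to cite \cite{Stein} and not reprove it.

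Finally, I would dispatch the 26 sporadic groups by appealing to the ATLAS \cite{ATLAS} (or to the strengthening by Aschbacher--Guralnick \cite{AG}), where explicit pairs of standard generators are tabulated for each sporadic group; this is a finite case check. Collecting these three ingredients yields the statement. The one conceptual subtlety worth flagging is that the statement genuinely depends on CFSG, as it is otherwise unknown how to establish uniform 2-generation; this is why the fact is quoted rather than proved from scratch in the paper.
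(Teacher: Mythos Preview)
Your proposal is correct and matches the approach of the cited references; note, however, that the paper itself does not give a proof of this fact but simply states it with citations to \cite{Stein,AG}, remarking that it (currently) depends on the Classification. Your outline---CFSG followed by case analysis over alternating, Lie-type (via Steinberg), and sporadic groups---is precisely the content of those citations, so you are in full agreement with the paper.
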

Let $T$ be a nonabelian simple group, it is easily shown that $\aut(T^k)\cong \aut(T)\wr S_k$ where $\wr$ denotes the wreath product. If a group $G$ is a direct product of nonabelian simple groups, then this direct product decomposition is unique, not just up to isomorphism: if $G=T_1\times \dots \times T_k=S_1\times \dots\times S_\ell$, $T_i$, $S_j$ nonabelian simple, then $k=\ell$ and $\exists\sigma\in S_k$, $\forall i\in[k]$, $T_i=S_{i^{\sigma}}$ as subsets of $G$.

A group $G$ is \emph{perfect} if $G = [G,G]$, \emph{centerless} if $Z(G) = 1$, and \emph{(directly) indecomposable} if it cannot be written as a nontrivial direct product. Nonabelian simple groups are perfect, centerless, and indecomposable, and several properties of nonabelian simple groups generalize to this class. In particular, if $T$ is a perfect, centerless, indecomposable group then $\aut(T^k) \cong \aut(T) \wr S_k$. Furthermore, if $G = T_1 \times \dotsb \times T_k$ where each $T_i$ is perfect, centerless, and indecomposable, then the direct factors of $G$ are uniquely determined as subsets of $G$ (not just up to isomorphism), as in the case of nonabelian simple groups. 

(However, perfect, centerless, indecomposable groups can require many more than two generators, and correspondingly can have quite large automorphism groups, for example, the group $\{(\pi_1, \dotsc, \pi_k) | \pi_i \in S_5, \prod_{i \in [k]} \pi_i \in A_5\}$.)

The following observation is often useful to upgrade algorithms from merely deciding isomorphism to computing the entire coset of isomorphisms. The reader should have in mind that this observation will typically be applied in settings where the groups $X,Y,Z$ are automorphism groups of other groups. 

\begin{observation} \label{obs:genSubProd}
Let $X$ be a subgroup of an extension $\extensionl{Y}{\iota}{\overline{X}}{\pi}{Z}$. If $\mathcal{Y} \subseteq \iota(Y)$ generates $\iota(Y) \cap X$, and $\mathcal{Z} \subseteq Z$ generates $\pi(X)$, and for each $z \in \mathcal{Z}$, $x_z$ is such that $x_z \in X$ and $\pi(x_z) = z$, then $\mathcal{Y} \cup \{ x_z \in X : z \in \mathcal{Z}\}$ generates $X$.
\end{observation}

\begin{proof}
Given $x \in X$, first  write $z = \pi(x)$ as a word in the generators $\mathcal{Z}$, say $z = z_1 \dotsb z_{\ell}$, with each $z_i \in \mathcal{Z}$. Then $y = x \cdot (x_{z_1} \dotsb x_{z_\ell})^{-1}$ is in $X \cap \ker(\pi) = X \cap \iota(Y)$. Now write $y$ as a word in $\mathcal{Y}$.
\end{proof}

In Theorem~\ref{thm:ecentradwsoc}(1), we will need the following deep result bounding the size of the second cohomology of simple groups:

\begin{theorem}[{\cite[Thm.~B]{GKKL}}] \label{thm:GKKL}
For any finite (quasi\footnote{$G$ is \emph{quasisimple} if it is perfect, and $G/Z(G)$ is simple. Examples include $\text{SL}_n(\F_q)$ and $\GL_n(\F_q)$. We will only need the case of nonabelian simple groups, but the theorem holds in the generality of quasisimple groups, which might be useful in the future.})simple group $G$, any field $\F$, and any $\F$-representation $\theta \colon G \to \GL(V)$ ($V$ an $\F$-vector space), $\dim_{\F} H^2(G, V, \theta) \leq 17.5 \dim_{\F} V$.
\end{theorem}

\paragraph{Useful algorithms.} We shall need some known algorithmic results. 
Recall that in permutation group algorithms (see \cite{Luk93,seressbook}), a coset $P\sigma\subseteq S_n$ is represented by a set of generators for $P\leq S_n$ and a coset representative $\sigma$. A particularly relevant problem on permutation groups is
the \cosetint problem: given two
cosets of subgroups of $\sym(A)$, find their intersection.
\GrI can be Karp-reduced to \cosetint \cite{luks-bounded}.
The \cosetint problem for permutation groups of degree $n$ can be
solved in quasi-polynomial ($\exp((\log n)^{O(1)})$) time \cite{Bab16}, 
while a relatively easier singly exponential ($\exp(O(n))$) 
algorithm has been obtained by Luks \cite{Luk99}. Algorithms over 
finite-dimensional algebras have been considered in, \eg, \cite{CIK97,BL08,IKS10}. 
In particular, over finite fields, polynomial-time algorithms for \modiso and 
\modcyc are first shown in \cite{CIK97}. 

A linear code of length $n$ is a linear subspace $V \leq \F^n$, represented by a $d\times n$ matrix where $d = \dim(V)$, and the rows form a linear basis of $V$. $S_n$ acts on a linear code by permuting the coordinates (that is the columns of the matrices). Two linear codes $V, U\leq \F^n$ are equivalent if there exists a permutation $\sigma\in S_n$ such that $V^\sigma=U$ as linear subspaces. Such a $\sigma$ is called an equivalence between $V$ and $U$, and the set of all equivalences, denoted by $\CodeEq(V, U)$ is either empty or a coset in $S_n$. This problem is \GrI-hard in general \cite{PR97} while Babai presents a singly exponential time algorithm:

\begin{theorem}[Babai, \cite{BabCode}, cf. \cite{BCGQ}]\label{thm:code_eq}
The set of equivalences of two linear codes of length $n$
(over any field) given by generator matrices can be found in
$(2 + o(1))^{n}$ time, assuming field operations at unit cost.
\end{theorem}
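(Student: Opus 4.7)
The task is to compute $\CodeEq(V, U) = \{\sigma \in S_n : V P_\sigma = U\}$, which is either empty or a coset of $\aut(V)$. My plan is to phrase this as a coset-finding problem in $S_n$ and run a Luks-style recursion guided by code-theoretic invariants, exploiting the fact that the code structure gives many useful constraints beyond those available in bare graph isomorphism.

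First I would compute natural invariants of coordinates. For each $i \in [n]$, the multiset $\{v_i : v \in V\}$ must be preserved by any $\sigma \in \CodeEq(V, U)$; joint distributions of tuples of coordinates give higher-order invariants, including the weight distribution of codewords containing $i$. These induce a canonical partition $\Pi$ of $[n]$, and every equivalence must respect $\Pi$. If the refined partitions attached to $V$ and $U$ do not match, output $\emptyset$. Otherwise, on each part of $\Pi$ I would run a recursive individualization-and-refinement procedure: pick a class $C$, individualize a coordinate $p \in C$, branch over the $|C|$ candidate images of $p$ in the corresponding $U$-class, and then recompute invariants using the fact that $p$ is now distinguished. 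Puncturing and shortening $V$ at $p$ yields sharper invariants on $[n]\setminus\{p\}$, so the partition typically refines after each individualization. At the leaves of the recursion all coordinates are individualized, and one simply checks via linear algebra whether the resulting permutation sends $V$ to $U$; the returned coset is assembled by intersecting partial agreements, which is a standard coset intersection (itself solvable in singly exponential time by the results cited before the theorem).

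The main obstacle, where I expect Babai's argument to do the real work, is bounding the total branching to $(2+o(1))^n$. Naive individualization gives $n!$ and is far too weak. The correct tool appears to be a dichotomy: when the invariants force $\Pi$ to have all classes of size $O(\log n)$, direct enumeration over class-wise bijections already costs only $n^{O(n/\log n)} = 2^{o(n)}$; when some class $C$ remains large, the code must carry so much symmetry on $C$ that one can invoke group-theoretic results on primitive permutation groups, in the spirit of Luks's algorithm for bounded-degree graph isomorphism, combined with structural bounds on $\aut(V)|_C$ coming from the linearity of the code, to handle $C$ in $2^{|C|(1+o(1))}$ time. Multiplying the per-class costs then yields the global $(2+o(1))^n$ bound. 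Making this dichotomy rigorous---in particular choreographing \emph{which} invariants to compute at each recursion level, and proving the branching constant is exactly $2$ and not a larger number---is the subtle combinatorial heart of the proof and the step I would spend most of my time on.
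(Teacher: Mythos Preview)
The paper does not prove this theorem. It is quoted verbatim from Babai's unpublished work (with a pointer to \cite{BCGQ} for an exposition) and used purely as a black box in the proof of Theorem~\ref{thm:ecentradwsoc}(2). There is therefore no ``paper's own proof'' to compare your proposal against.

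As for the proposal itself, what you have written is a plausible high-level shape for a singly-exponential isomorphism algorithm, but it is not yet a proof: you correctly identify the branching bound as the crux and then defer it. In particular, your dichotomy is stated only heuristically---you do not specify which invariants guarantee small classes, nor which structural result about primitive groups or about $\aut(V)|_C$ would cap the cost on a large class at $2^{|C|(1+o(1))}$---so as written the argument has a genuine gap at exactly the point you flag. If you want to reconstruct Babai's bound you will need to consult \cite{BabCode} or the account in \cite[Thm.~7.1]{BCGQ}; the present paper offers no help with that.
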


We will also need the following results of Babai \etal \cite{BCGQ}:

\begin{theorem}[{\cite[Thm.~1.1]{BCGQ}}]\label{thm:semisimple}
All isomorphisms between two semisimple groups $Q_1$ and $Q_2$ of order $n$, can be listed in time $n^{c\log\log n+O(1)}$, where $c=1/\log(60)\approx 0.16929$.
\end{theorem}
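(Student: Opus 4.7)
The plan is to exploit the restrictive structure of semisimple groups through their socles and to reduce the bulk of the isomorphism problem to Babai's \lincode algorithm (Theorem~\ref{thm:code_eq}), with a combinatorial overhead that produces exactly the claimed $n^{c\log\log n + O(1)}$ bound. The relevant structural facts are: (i)~for semisimple $G$, $\soc(G)$ is a direct product of nonabelian simple groups $T_1 \times \cdots \times T_k$, and this set-level decomposition is unique (stated in the preliminaries); (ii)~since the smallest nonabelian simple group is $A_5$ of order $60$, we have $k \le \log_{60} n$; and (iii)~$\soc(G)$ is centerless, so Theorem~\ref{thm:centerless} applies, and in fact the conjugation map $G \to \Aut(\soc(G))$ is injective, because $C_G(\soc(G))$ is normal in $G$, meets $\soc(G)$ trivially, and hence must vanish in a semisimple group.

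First I would compute each $\soc(G_i)$ and its simple factorization in polynomial time, identify each factor via its order together with Fact~\ref{fact:genby2} (which makes isomorphism of the simple factors easy), and regroup as $\soc(G_i) \cong T_1^{m_1} \times \cdots \times T_r^{m_r}$ with $\Aut(\soc(G_i)) \cong \prod_j \Aut(T_j) \wr S_{m_j}$. If the multisets of simple factors disagree, output ``non-isomorphic''; otherwise both $G_i$ sit inside a common ambient group $\prod_j \Aut(T_j) \wr S_{m_j}$ via conjugation. Any isomorphism $G_1 \to G_2$ must restrict to an isomorphism of socles, so $\Iso(G_1, G_2)$ is naturally parameterized by: (a)~a permutation $\sigma \in \prod_j S_{m_j}$ matching simple factors, together with (b)~a tuple in $\prod_j \Aut(T_j)^{m_j}$ which, composed with $\sigma$, carries the image of $G_1$ onto the image of $G_2$ inside $\Aut(\soc)$.

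Enumerating $\sigma$ costs $\prod_j m_j! \le k! \le (\log_{60} n)!$, and $\log_2\bigl((\log_{60} n)!\bigr) \le (\log_2 n)(\log_2 \log_2 n)/\log_2 60$, producing exactly the $n^{c\log\log n}$ factor with $c = 1/\log_2 60$. For each fixed $\sigma$, the remaining task is a twisted code-equivalence instance over the alphabet $\prod_j \Aut(T_j)$: build length-$k$ codewords whose coordinates record the images in each $\Aut(T_j)$ of an $O(\log n)$-size generating set for $G_i$, with column permutations together with coordinate-wise $\Aut(T_j)$-twists modeling the remaining degrees of freedom. Theorem~\ref{thm:code_eq} then produces all equivalences in time $2^{O(k)} = n^{O(1)}$, and collecting the resulting cosets over all $\sigma$ lists the full $\Iso(G_1, G_2)$.

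The main obstacle is setting up this inner problem so that \lincode (or a mild twisted variant of it) applies directly, rather than invoking a catastrophic $|\Aut(T_j)|^{m_j}$ enumeration that would destroy the bound: one must package the per-column $\Aut(T_j)$-twists into the code-equivalence machinery and verify that an equivalence of the resulting codes lifts to an actual group isomorphism $G_1 \to G_2$, not merely a bijection of socles. A secondary subtlety is that the quotients $G_i/\soc(G_i)$ are not aligned a priori; but each such quotient embeds in $\Out(\soc(G_i))$ and is therefore automatically determined once the socle embedding has been aligned, which is precisely why the centerless case is clean and why Theorem~\ref{thm:centerless} is the right vehicle.
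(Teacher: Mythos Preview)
The paper does not prove this theorem; it is quoted from \cite{BCGQ} and used as a black box. So there is no proof in the paper to compare against, and I can only comment on your sketch on its own merits.

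Your structural setup is exactly right: $\soc(G_i)$ is a direct product of nonabelian simple groups, the conjugation map embeds $G_i$ into $\Aut(\soc(G_i))$, and every isomorphism $G_1\to G_2$ is induced by an element of $\Iso(\soc(G_1),\soc(G_2))\cong\prod_j \Aut(T_j)\wr S_{m_j}$. Enumerating the permutation part $\sigma\in\prod_j S_{m_j}$ indeed costs at most $k!$ with $k\le \log_{60} n$, which is precisely the $n^{c\log\log n}$ factor.

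The gap is in your treatment of the inner problem. For fixed $\sigma$ you propose to package the search over the diagonal $\prod_j \Aut(T_j)^{m_j}$ as a twisted code-equivalence instance and invoke Theorem~\ref{thm:code_eq}. But Theorem~\ref{thm:code_eq} is for \emph{linear} codes over a field with column permutations only; it says nothing about per-column twists by arbitrary finite groups $\Aut(T_j)$. Twisted code equivalence is a separate, harder problem introduced in \cite{BCQ}, and solving it is what yields the \emph{polynomial-time} algorithm for semisimple groups there, not the $n^{c\log\log n}$ bound of \cite{BCGQ}. So as written this step does not go through.

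Fortunately the fix is much simpler than what you propose: the diagonal part has polynomial size. By Fact~\ref{fact:genby2} each $|\Aut(T_j)|\le |T_j|^2$, so $\bigl|\prod_j \Aut(T_j)^{m_j}\bigr|\le \bigl(\prod_j |T_j|^{m_j}\bigr)^2 = |\soc(G)|^2 \le n^2$ (this is exactly the computation carried out later in the paper in the proof of Theorem~\ref{thm:ecentradwsoc}). Hence you can simply enumerate all of $\Aut(\soc(G))$ directly in time $n^{c\log\log n + O(1)}$, and for each element check in polynomial time whether conjugation by it carries the image of $G_1$ onto the image of $G_2$. No code-equivalence machinery is needed for this theorem.
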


It is also noted in \cite{BCGQ} that there exist semisimple groups $G$ of order $n$ with $|\aut(G)|\geq n^{c\log\log n}$, namely $G=A_5^k$. Hence, for listing all isomorphisms this result is essentially optimal.

The number of minimal normal subgroups of any semisimple group of order $n$ is at 
most $O(\log n)$. If it happens to be $O(\log n / \log \log n)$, they show:

\begin{theorem}[{\cite[Cor.~4.4]{BCGQ}}]\label{thm:semisimple2}
Suppose $Q_1$ and $Q_2$ are semisimple groups of order $n$ with at most $O(\log n / \log \log n)$ minimal normal subgroups. Then all isomorphisms between $Q_1$ and $Q_2$ can be listed in polynomial time.
\end{theorem}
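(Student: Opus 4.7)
The plan rests on Theorem~\ref{thm:centerless}: since $\soc(Q_i)$ is centerless (being a direct product of nonabelian simple groups), $Q_i$ embeds as a subgroup $\Gamma_i \leq \aut(\soc(Q_i))$ containing $\Inn(\soc(Q_i))$, and is determined up to isomorphism by this embedding. Thus an isomorphism $Q_1 \to Q_2$ is equivalent data to an isomorphism $\psi\colon \soc(Q_1) \to \soc(Q_2)$ that conjugates $\Gamma_1$ to $\Gamma_2$ (modulo $\Inn(\soc)$), and the task reduces to listing all such $\psi$ in polynomial time.

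First I compute in polynomial time the socles, their decompositions into minimal normal subgroups $\soc(Q_i) = N_{i,1} \times \cdots \times N_{i,k_i}$, and the simple factorizations $N_{i,j} \cong T_{i,j}^{m_{i,j}}$; if the multisets of types $\{(T_{i,j}, m_{i,j})\}_j$ do not match across $i = 1, 2$, I output non-isomorphic. Any viable $\psi$ must carry minimal normals of $Q_1$ to minimal normals of $Q_2$ of the same isomorphism type, so I enumerate all such bijections $\sigma$ between minimal normals. The hypothesis $k \leq O(\log n / \log \log n)$ gives $k! \leq n^{O(1)}$, so this outer enumeration is polynomial.

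The core step is, for each $\sigma$, to enumerate the isomorphisms $\psi$ realizing $\sigma$ that extend to $Q_1 \to Q_2$. Crucially, since $N_{i,j}$ is a \emph{minimal} normal subgroup of $Q_i$, the action of $Q_i/\soc(Q_i)$ on the $m_{i,j}$ simple direct factors of $N_{i,j}$ is transitive. Writing $\psi|_{N_{1,j}} = (\beta_1, \ldots, \beta_{m_{1,j}}; \tau) \in \aut(T_{1,j}) \wr S_{m_{1,j}}$, transitivity of the $Q$-action forces $\tau$ to be determined up to a point-stabilizer coset in the relevant permutation subgroup of $S_{m_{1,j}}$ once the image of a single distinguished simple factor is chosen (at most $m_{1,j}$ options). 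The $\aut(T_{1,j})$-data on the remaining factors is then pinned down by $Q$-equivariance up to a polynomial-size residual, using $|\aut(T)| \leq \poly(|T|)$ via the Classification of Finite Simple Groups. Combined across all $j$, the total enumeration is bounded by $\prod_j m_{1,j} \cdot \poly(n) \leq |\soc(Q_1)| \cdot \poly(n) \leq \poly(n)$.

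For each candidate $\psi$, verifying whether $\psi \Gamma_1 \psi^{-1} = \Gamma_2$ amounts to checking the induced conjugation action on a generating set of $Q_1/\soc(Q_1)$, which is polynomial in $n$. The main obstacle, and the step demanding real care, is making the above transitivity-based reduction algorithmically rigorous: it requires computing normalizers of transitive permutation subgroups of $S_{m_{1,j}}$, for which I would invoke standard permutation-group machinery in the style of \cite{Luk93,seressbook}, and then threading the resulting permutation data together with the $\aut(T_{1,j})$-data consistently across the $j$'s so that the product over $j$ genuinely collapses from $\prod_j m_{1,j}!$ to $\prod_j m_{1,j}$.
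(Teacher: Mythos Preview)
The paper does not prove this statement at all: it is quoted verbatim as \cite[Cor.~4.4]{BCGQ} and used as a black box to derive Corollaries~\ref{cor:quotient_list2} and~\ref{cor:quotient_list2_general}. So there is no ``paper's own proof'' to compare against, and strictly speaking no proof is expected here.

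That said, your sketch is in the right spirit---it is essentially the outline of the argument in \cite{BCGQ}---but you yourself flag the genuine gap in your final paragraph, and I want to be concrete about it. The outer enumeration over bijections $\sigma$ of minimal normals is fine: $k = O(\log n/\log\log n)$ gives $k! = n^{O(1)}$. The problematic step is the inner count. Your claim that ``$\tau$ is determined up to a point-stabilizer coset once the image of a single simple factor is chosen'' is not justified as stated: what you actually need is that $\tau$ lies in a coset of the normalizer (in $S_{m_j}$) of the transitive permutation group $P_j$ that $Q_1$ induces on the simple factors of $N_{1,j}$, and then a bound on $|N_{S_{m_j}}(P_j)|$. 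Transitivity of $P_j$ alone does not give a point-stabilizer bound; one needs additional input (e.g.\ that the centralizer of a transitive group is semiregular, together with bounds on $|\Aut(P_j)|$ or on $|Q/\soc(Q)|$ coming from the smallness of $\Out(T)$ via CFSG). Similarly, ``the $\aut(T)$-data on the remaining factors is pinned down by $Q$-equivariance up to a polynomial-size residual'' is asserted rather than proved: making this precise is exactly where \cite{BCGQ} does real work, and it does not follow from a one-line transitivity observation.

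In short: your plan is the correct architecture, but the two sentences you label as the ``core step'' are where the actual mathematics lives, and as written they are assertions rather than arguments. Since the present paper only cites the result, if you want a self-contained proof you should consult \cite{BCGQ} directly for the missing bounds.
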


In \Sec{autq} we extend both of these results to algorithms for isomorphism of groups with central radicals, with the same time bounds as above.

To find the coset of isomorphisms efficiently in Theorem~\ref{thm:ecentradwsoc}, beyond just deciding isomorphism or finding a single isomorphism, we need the following result. Recall that an $d \times d$ \emph{matrix algebra} over a field $\F$ is a linear subspace of $d \times d$ matrices over $\F$ that is also closed under multiplication of matrices. The \emph{unit group} of a (matrix) algebra $A$ is the group consisting of all invertible elements of $A$ (the group operation being the same as the product in the algebra).

\begin{theorem}[{\cite[Thm.~2.3]{BrooksbankOBrien} and \cite{RonyaiUnit}}] \label{thm:unit}
Given a linear basis of a $d \times d$ matrix algebra $A$ over a finite field $\F_q$, generators of the unit group of $A$ can be computed in time polynomial in $d$ and $q$.
\end{theorem}

Although the preceding theorem from the original papers is stated as a Las Vegas 
randomized algorithm that runs in time polynomial in $d$ and $\log q$, they only 
need randomization to get an algorithm whose running time depends polynomially on 
$\log q$ rather than $q$. Essentially the same algorithm works deterministically 
in time polynomial in $d$ and $q$.

We also mention a useful result for groups in the Cayley table model by Kayal and Nezhmetdinov \cite{KN09}, though it is not strictly required in the following. They show that decomposing a group $G$ into indecomposable direct factors can be done in polynomial time. Even in the stronger setting of permutation groups given by generators, Wilson showed \cite{Wil10} that this task can be performed in polynomial time.

\section{When enumerating \texorpdfstring{$\aut(Q)$}{Aut(Q)} is allowed} \label{sec:autq}

Our main results in this section are $n^{O(\log\log n)}$-time algorithms to test 
isomorphism of (1) groups with central radicals 
(Corollary~\ref{cor:quotient_list}) and (2) groups with elementary abelian 
radicals that need not be central (Corollary~\ref{cor:quotient_list_general}). 
\cosetComment{
In case (1), if the radical is furthermore \emph{elementary} abelian, we can 
compute the coset of isomorphisms in the same time bound.} These results follow 
from our more general Theorems~\ref{thm:quotient_list} and 
\ref{thm:quotient_list_general}, respectively, and a theorem on semisimple groups 
from \cite{BCGQ} (reproduced above as Theorem~\ref{thm:semisimple}). 

\subsection{For central extensions of general abelian groups (Theorem~A)}

We first consider the case when both extensions $G$ and $H$ are central.
\begin{theorem}\label{thm:quotient_list}
Let $\charfn$ be a polynomial-time-computable characteristic subgroup function. For two groups $G, H$ of order $n$, if $\charfn(G) \leq Z(G)$ and $\aut(G/\charfn(G))$ can be listed in time $t(n)$, then isomorphism of $G$ and $H$ can be decided in time $t(n) n^{O(1)}$.

\cosetComment{If, furthermore, $\charfn(G)$ is elementary abelian, then the coset of isomorphisms can be found in the same time bound.}
\end{theorem}

Before proving Theorem~\ref{thm:quotient_list}, let us see how it is applied to groups with central radicals. Combining Theorem~\ref{thm:quotient_list} with Theorem~\ref{thm:semisimple}, respectively Theorem~\ref{thm:semisimple2} we have our first two main results:

\begin{corollary}\label{cor:quotient_list}
Isomorphism of central-radical groups of order $n$ can be decided in time $n^{c \log \log n + O(1)}$, for $c = 1/\log_2(60) \approx 0.169$. \cosetComment{If, furthermore, the center is elementary abelian, the coset of isomorphisms can be found in the same time bound.}
\end{corollary}

\begin{corollary} \label{cor:quotient_list2}
Let $G$ and $H$ be central-radical groups of order $n$. If $G/\rad(G)$ has $O(\log n / \log \log n)$ minimal normal subgroups, isomorphism between $G$ and $H$ can be decided in $\poly(n)$ time. \cosetComment{If, furthermore, the center is elementary abelian, the coset of isomorphisms can be found in the same time bound.}
\end{corollary}

We give two different proofs of Theorem~\ref{thm:quotient_list}. The first proof has the advantage of dealing with cohomology classes in a very direct manner and working for arbitrary central $\charfn(G)$, but the disadvantage of not making it obvious how to compute the full coset of isomorphisms. For clarity, we first give this ``direct'' proof in the elementary abelian case, then in \Sec{elem} give it for the general abelian case, and finally in \Sec{coset} go back to the elementary abelian case to show how to compute the coset of isomorphisms.


Let us consider how to work with 2-cohomology classes in algorithms. Let $G$ be a central extension of $A = \Z_p^k$ by $Q$ (thinking of $A = \charfn(G)$ and $Q = G/\charfn(G)$). As the action is trivial in central extensions, we drop it from the notation, as in $Z^2(Q, A)$, $B^2(Q, A)$ and $H^2(Q, A)$. By choosing an arbitrary section, we get a 2-cocycle $f:Q\times Q\to A$. Let $e_1, \dotsc, e_k$ be the standard basis of $\Z_p^k$. We may view $f$ as a $k\times |Q|^2$-size $\Z_p$-matrix, which we denote by $M_{f}$. The rows are indexed by the set $[k]$ and the columns are indexed by $Q \times Q$. For $i\in[k]$ and $(q, q')\in Q\times Q$, $M_{f}[i, (q, q')]$ is the $i$th coordinate of $f(q, q')$ relative to the basis $\{e_1, \dotsc, e_k\}$. Note that the actions of $\aut(A)$ and $\aut(Q)$ commute.

Under the above identification, the set $C^2(Q, A)$ of 2-cochains is identified with the set of all $k\times |Q|^2$ matrices over $\Z_p$. Then $Z^2(Q, A)$ is not just a subgroup (under matrix addition), but also a $\Z_p$-linear subspace of $C^2(Q, A)$, and similarly $B^2(Q, A)$ is a $\Z_p$-linear subspace of $Z^2(Q, A)$. $\Aut(A) \cong \GL(k, p)$ acts on $C^2(Q,A)$ by left multiplication, and $\Aut(Q)$ acts on $C^2(Q,A)$ by permuting the columns according to the diagonal action of $\Aut(Q)$ on $Q \times Q$.

\begin{proposition}\label{prop:basis_1}
A basis of $B^2(Q, \Z_p)$ can be computed in time $O(|Q|^3 (\log|Q| + \log p))$.
\end{proposition}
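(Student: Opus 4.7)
In the central setting the action $\theta$ is trivial, so a normalized 2-coboundary is a cochain of the form $f_u(p, q) = u(p) + u(q) - u(pq)$ for some 1-cochain $u \colon Q \to \Z_p$ with $u(1) = 0$. Hence $B^2(Q, \Z_p)$ is the image of the $\Z_p$-linear coboundary map $\partial \colon C^1_0(Q, \Z_p) \to C^2(Q, \Z_p)$ given by $u \mapsto f_u$, where $C^1_0$ denotes normalized 1-cochains. A basis of the domain is $\{\delta_q : q \in Q \setminus \{1\}\}$, so $B^2(Q, \Z_p)$ is spanned by the $|Q|-1$ cochains $\partial \delta_q$. The plan is to materialize this spanning family as a matrix and then extract a basis.

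First I would write down the $(|Q|-1) \times |Q|^2$ matrix $M$ over $\Z_p$ whose rows are $\{\partial \delta_q : q \in Q \setminus \{1\}\}$. Each entry $(\partial \delta_q)(p, r) = \delta_q(p) + \delta_q(r) - \delta_q(pr)$ is evaluated from a single Cayley-table lookup of $pr \in Q$ (cost $O(\log|Q|)$ bit operations) plus a constant number of $\Z_p$-arithmetic operations (cost $O(\log p)$ bit operations). Filling in the $(|Q|-1)\cdot|Q|^2$ entries therefore takes $O(|Q|^3 (\log|Q| + \log p))$ bit operations, and by construction $\mathrm{rowspan}(M) = B^2(Q, \Z_p)$.

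Second I would extract a linearly independent subset of the rows of $M$ that still spans $B^2(Q, \Z_p)$, via Gaussian elimination over $\Z_p$. Two features keep this step efficient: (i) each row $\partial \delta_q$ is highly sparse, with support of size at most $3|Q|$, concentrated at columns of the form $(q, \cdot)$, $(\cdot, q)$, and $(p, p^{-1}q)$; and (ii) the rank of $M$ is at most $|Q|-1$ (since $M$ has only that many rows), which already matches the output size of at most $(|Q|-1)\cdot|Q|^2$ $\Z_p$-entries.

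The main obstacle is arguing that the elimination fits into the claimed $O(|Q|^3 (\log|Q| + \log p))$ bit-operation budget rather than the naive $O(|Q|^4)$ cost of dense row reduction. I would handle this by processing the rows incrementally, keeping them in a sparse representation, and maintaining the pivot columns via a dictionary indexed by $Q \times Q$, so that each incoming row is tested and (if independent) inserted at amortized cost $O(|Q|^2)$ $\Z_p$-operations. Since at most $|Q|-1$ rows are ever processed and at most $|Q|-1$ pivots are ever chosen, the elimination cost is $O(|Q|^3)$ field operations, or $O(|Q|^3 (\log|Q| + \log p))$ bit operations, as required.
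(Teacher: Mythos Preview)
Your first step---writing down the $(|Q|-1)\times|Q|^2$ matrix whose rows are the $\partial\delta_q$---is exactly what the paper does, with the same cost analysis. The paper, however, stops there: it asserts that $V=\{f_q:q\in Q,\ q\neq 1\}$ is already a \emph{basis} of $B^2(Q,\Z_p)$, so no elimination is performed and the $O(|Q|^3(\log|Q|+\log p))$ bound follows immediately from filling in the matrix.

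You are right to be more careful than the paper. The $\partial\delta_q$ are linearly independent precisely when $\ker\partial=\Hom(Q,\Z_p)$ is trivial, which fails whenever $Q$ admits a nonzero homomorphism to $\Z_p$ (e.g.\ $Q=S_5$, $p=2$). So the paper's $V$ is in general only a spanning set; this is harmless for how the proposition is later used (one only ever needs the span $\langle R_i,B^2(Q,\Z_p)\rangle$), but as literally stated the claim does require the extra work you attempt.

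The gap in your argument is the running time of that extra work. Dense Gaussian elimination on a $(|Q|-1)\times|Q|^2$ matrix costs $\Theta(|Q|^4)$ field operations, and your sparsity argument does not beat this: each $\partial\delta_q$ starts with $O(|Q|)$ nonzeros, but reducing the $i$th row against the first $i-1$ pivots can fill it to $\Theta(i\cdot|Q|)$ nonzeros, so the total work is $\sum_i \Theta(i^2|Q|)=\Theta(|Q|^4)$. The sentence ``at most $|Q|-1$ rows and at most $|Q|-1$ pivots, hence $O(|Q|^3)$'' does not account for the per-row reduction cost, and the ``amortized $O(|Q|^2)$ per row'' is asserted, not argued. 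If you want an honest basis within the stated bound, one clean fix is to first compute a basis $u_1,\dots,u_d$ of $\ker\partial=\Hom(Q,\Z_p)$ (via $Q^{ab}$; here $d\le\log_p|Q|$), pick $q_1,\dots,q_d\in Q$ on which the $u_i$ are independent, and observe that $\{\partial\delta_q:q\neq 1,q_1,\dots,q_d\}$ is then a basis of $B^2(Q,\Z_p)$---no elimination needed.
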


In the larger context of \GpI, note that the running time here is $O(|G|^3 \log |G|)$.

\begin{proof}
For $q\in Q$, $q\neq \id$, let $u_q:Q\to \Z_p$ be $u_q(q')=\delta(q, q')$ where $\delta$ is the Kronecker delta. Let $f_q:Q\times Q\to \Z_p$ be the 2-coboundary based on $u_q$. $V:=\{f_q\mid q\in Q\}$ then forms a basis of $B^2(Q, \Z_p)$. There are $|Q|$ basis elements, each of which is constructed by computing its $|Q|^2$ values; each value can be computed by a constant number of additions in $\Z_p$ (taking $O(\log p)$ steps) and one table lookup to compute a single product in $Q$ (taking $O(\log |Q|)$ steps).
\end{proof}

As we identified $C^2(Q, A)$ as $k\times |Q|^2$ matrices over $\Z_p$, let $E_{i, 
j}$, $i\in[k]$, $j\in Q\times Q$ be the $k\times |Q|^2$ matrix with a $1$ in the $(i,j)$ position and $0$ everywhere else. Then $\{E_{i, j}\mid i\in[k], 
j\in Q\times Q\}$ is a basis of $C^2(Q, A)$. Let $U_i$ be the subspace of $C^2(Q, 
A)$, spanned by $\{E_{i, j}\mid j\in Q\times Q\}$, corresponding to matrices whose 
only nonzero entries are in the $i$-th row. Then $C^2(Q, A)=\oplus_{i\in[k]} U_i$. 
The following proposition says that not only does $C^2(Q, A)$ split as a direct 
sum over the rows, but $B^2(Q, A)$ does as well. It follows directly from the fact 
that the condition to be a 2-coboundary in $B^2(Q, A)$ only depends on the columns 
($Q \times Q$) and not on the rows ($[k]$).

\begin{proposition}\label{prop:basis}
Let $V$ be the basis of $B^2(Q, \Z_p)$ constructed in the proof of Proposition~\ref{prop:basis_1}, and let $V_i\leq C^2(Q, A)$ be a copy of $V$ in $U_i$. Then $\bigsqcup_{i\in[k]}V_i$ (disjoint union) is a basis of $B^2(Q, \Z_p^k)$.
\end{proposition}

Given two 2-cocycles $f_1$ and $f_2$, let $M_i$ be the matrix representation of $f_i$, $i=1, 2$, and $R_i\subseteq \Z_p^{|Q|^2}$ be the set of rows in $M_i$. Recall that $\alpha\in\GL(k, p)$ acts on the rows of $M_i$. 
\begin{proposition}\label{prop:span}
With notation as above, there exists $\alpha\in\GL(k, p)$ such that $f_1$ and $f_2^\alpha$ are cohomologous if and only if $\langle R_1, B^2(Q, \Z_p) \rangle=\langle R_2, B^2(Q, \Z_p) \rangle$, where $\langle \cdot\rangle$ denotes the $\Z_p$-linear span.
\end{proposition}
\begin{proof}
Let $r_{i, j}\in\Z_p^{|Q|^2}$ be the $j$th row in $M_i$, $j\in[k]$, $i=1, 2$. Let $B$ denote $B^2(Q, \Z_p)$. Note that Proposition~\ref{prop:basis} says that $B^2(Q, \Z_p^k) = B \oplus B \oplus \dotsb \oplus B$ ($k$ summands).

($\Rightarrow$) $f_1$ and $f_2^\alpha$ are cohomologous if and only if $f_1-f_2^\alpha \in B^2(Q, \Z_p^k)$. Let $r^{\alpha}_{2, j}$ be the $j$th row in the matrix representation of $f_2^{\alpha}$. By Proposition~\ref{prop:basis}, for every $i\in [k]$, $r_{1, i} - r^{\alpha}_{2, i}\in \langle V_i \rangle = B^2(Q, \Z_p) = B$. That is $r_{1, i}\in\langle R_2, B\rangle$ as $r^{\alpha}_{2, j}\in\langle R_2\rangle$ (note that the linear span of $R_2$, \ie, the rowspan of $M_2$, is left unchanged by the action of $\alpha$). Similarly we have $r_{2, i}\in\langle R_1, B\rangle$, $\forall i\in[k]$. This shows $\langle R_1, B\rangle=\langle R_2, B\rangle$.

($\Leftarrow$) For $\alpha\in\GL(k, p)$, again let $r^{\alpha}_{2, j}$ be the $j$th row of $f_2^{\alpha}$. Given $\langle R_1, B\rangle = \langle R_2, B\rangle$, we have $\langle R_1, B\rangle / B$ and $\langle R_2, B\rangle /B$ are the same as subspaces of $\Z_p^{|Q|^2}/B$. That means that we can choose $\alpha\in\GL(k, p)$ such that $r_{1, i}+ B=r^{\alpha}_{2, i} + B$, $\forall i\in[k]$. This gives $f_1-f_2^\alpha\in B^2(Q, A)$.
\end{proof}

\begin{proof}[Proof of Theorem~\ref{thm:quotient_list} (decision version only) when $\charfn(G)$ is elementary abelian]
We list $\aut(Q)$ in time $t(n)$. For $i=1, 2$, choose an arbitrary section of $Q$ in $G_i$ to get a 2-cocycle $f_i$. By the Main Lemma~\ref{lem:main}, it is necessary and sufficient to test whether there exists an $(\alpha, \beta) \in \aut(A) \times \aut(Q)$ such that $f_1$ and $f_2^{(\alpha, \beta)}$ are cohomologous. 

For each $\beta\in\aut(Q)$ we get $f_2'=f_2^{(\id, \beta)}$. We first use Proposition~\ref{prop:basis_1} to get a basis $V$ of $B^2(Q, \Z_p)$. Let $M_1$ be the matrix representation of $f_1$, and $M_2$ for $f'_2$. We now need to determine whether there exists $\alpha$ such that $f_1$ and $f_2'$ are cohomologous. By Proposition~\ref{prop:span} it is enough to decide whether the linear span of the rows of $f_1$ with $V$, and the linear span of the rows of $f_2'$ with $V$, are the same. This is a standard task in linear algebra and can be determined in time polynomial in $|Q|$ and $\dim_{\Z_p}|A|=k$. 
Lemma~\ref{lem:main} implies that $G_1 \cong G_2$ if and only if the preceding test succeeds for some $\beta \in \aut(Q)$.
\end{proof}

\subsubsection{From elementary abelian to general abelian}\label{sec:elem}
The proof here follows the same steps as in the elementary abelian case. As each abelian group $A$ is the direct product of its Sylow $p$-subgroups $A_p$, we essentially treat the case of a single Sylow $p$-subgroup, that is, when $A$ is an abelian $p$-group $\Z_{p^{\mu_1}} \times \dotsb \times \Z_{p^{\mu_k}}$ (not necessarily elementary). We begin by extending Propositions~\ref{prop:basis_1}--\ref{prop:span} to the case of abelian $p$-groups.

As such groups are no longer just vector spaces over $\Z_p$, we must speak of subgroups of $A$ rather than subspaces, and generating sets rather than $\Z_p$-bases. To emphasize the similarities, we use the terminology ``$\Z$-basis'' for ``irredundant generating set.''  Similarly for $C^2(Q, A)$, $Z^2(Q, A)$, and $B^2(Q, A)$. We represent a 2-cochain $f\colon Q \times Q \to A$ by a $k \times |Q|^2$ integer matrix, where we consider the entries in the $i$-th row modulo $p^{\mu_i}$, that is, as elements of $\Z_{p^{\mu_i}}$. As before, we use $U_i$ to denote the subgroup of $C^2(Q, A)$ consisting of matrices whose only nonzero entries are in the $i$-th row (in particular, $U_i \cong \Z_{p^{\mu_i}}^{|Q|^2}$). 

For these first two propositions, the proofs are the same as the analogous propositions above for elementary abelian $A$. 

\begin{proposition} \label{prop:basis_mu}
For any $\mu \geq 1$, a $\Z$-basis of $B^2(Q, \Z_{p^\mu})$ can be computed in time $O(|Q|^3 (\log|Q| + \mu \log p))$.
\end{proposition}

\begin{proposition} \label{prop:basis_1_mu}
Let $V^{(\mu)}$ denote the $\Z$-basis of $B^2(Q, \Z_{p^{\mu}})$ constructed in the proof of Proposition~\ref{prop:basis_mu}, and let $V_i^{(\mu_i)} \leq C^2(Q, A)$ be a copy of $V^{(\mu_i)}$ in $U_i$. Then $\sqcup_{i \in [k]} V_i^{(\mu_i)}$ (disjoint union) is a $\Z$-basis of $B^2(Q, \Z_{p^{\mu_1}} \times \dotsb \times \Z_{p^{\mu_k}})$.
\end{proposition}

Before giving the analog of Proposition~\ref{prop:span} for general abelian $A$, we recall the structure of $\aut(A)$ (see, \eg, the exposition in \cite{hillarRhea}); it is only slightly more complicated than the fact that $\aut(\Z_p^k) = \GL(k, p)$. First, if $A_p$ is the $p$-Sylow subgroup of $A$, then $\aut(A) = \aut(A_{p_1} \times \dotsb A_{p_d}) = \aut(A_{p_1}) \times \dotsb \times \aut(A_{p_d})$ where $p_1, \dotsc, p_d$ are the distinct primes dividing $|A|$. So we reduce to the case where $A$ is an abelian $p$-group $\Z_{p^{\mu_1}} \times \dotsb \times \Z_{p^{\mu_k}}$ with $1 \leq \mu_1 \leq \mu_2 \leq \dotsb \leq \mu_k$. Think of elements of $A$ as integer column vectors of length $k$, where the $i$-th entry is considered modulo $p^{\mu_i}$. As in the elementary abelian case (where $\mu_1 = \dotsb = \mu_k = 1)$, an automorphism may replace each entry with a $\Z$-linear combination of the entries, as follows. For $i < j$, the $i$-th coordinate can contribute to the $j$-th coordinate by multiplying by $p^{\mu_j - \mu_i}$---in other words, by using the unique inclusion $\Z_{p^{\mu_i}} \hookrightarrow \Z_{p^{\mu_j}}$. In the opposite direction, the $j$-th coordinate can contribute to the $i$-th coordinate by taking the $j$-th coordinate modulo $p^{\mu_i}$---in other words, using the natural surjection $\Z_{p^{\mu_j}} \twoheadrightarrow \Z_{p^{\mu_i}}$. (Note that when $\mu_i = \mu_j$ these two operations are the same, corresponding to the identity map on $\Z_{p^{\mu_i}}$.) 

More symbolically, we may consider each element of $\aut(A)$ as an integer $k \times k$ matrix $\alpha$ such that: (1) for $i > j$, $p^{\mu_j - \mu_i}$ divides the $(i,j)$ entry, (2) the entries in row $i$ are considered modulo $p^{\mu_i}$, and (3) $\alpha$ is invertible when taken modulo $p$. 

Finally, consider (row) subgroups $R \leq \Z_{p^{\mu_i}}^{|Q|^2}$. In accord with the above description of the automorphisms of $A$, for $\mu < \mu_i$ let $R^{(\mu)}$ denote the subgroup of $\Z_{p^{\mu}}^{|Q|^2}$ that is given by taking $R$ modulo $p^{\mu}$; for $\mu > \mu_i$, let $R^{(\mu)}$ denote the subgroup of $\Z_{p^{\mu}}^{|Q|^2}$ that is given by multiplying every element of $R$ by $p^{\mu - \mu_i}$. For any prime $q$, let $R^{(q,\mu)}$ denote $R^{(\mu)}$ if $q = p$ and the trivial subgroup $0$ otherwise.

Now, return to $A$ being an arbitrary abelian group. For a 2-cochain $f_1 \in C^2(Q, A)$ with corresponding $k \times |Q|^2$ matrix $M$ with $i$-th row $R_{1,i} \leq \Z_{p_i^{\mu_i}}^{|Q|^2}$, let $R_1^{(p,\mu)}$ denote the subgroup of $\Z_{p^\mu}^{|Q|^2}$ generated by all the $R_{1,i}^{(p,\mu)}$; we write $R_1^{(p,\mu)} = \langle R_{1,1}^{(p,\mu)}, \dotsc, R_{1,k}^{(p,\mu)} \rangle$. 

\begin{proposition} \label{prop:span_mu}
Let $A = \Z_{p_1^{\mu_1}} \times \dotsb \Z_{p_k^{\mu_k}}$ be an arbitrary abelian group (the $p_i$ are primes, not necessarily distinct). With other notation as above, there exists $\alpha \in \aut(A)$ such that $f_1$ and $f_2^{\alpha}$ are cohomologous if and only if $\langle R_1^{(p_i,\mu_i)}, B^2(Q, \Z_{p_i^{\mu_i}}) \rangle = \langle R_2^{(p_i,\mu_i)}, B^2(Q, \Z_{p_i^{\mu_i}}) \rangle$ for each $1 \leq i \leq k$, where $\langle \cdot \rangle$ denotes the $\Z$-span (=group generated by). 
\end{proposition}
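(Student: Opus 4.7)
My plan is to generalize the proof of Proposition~\ref{prop:span} (the elementary-abelian case). The first step is to reduce to the case where $A$ is an abelian $p$-group: since $A$ decomposes canonically as $\bigoplus_p A_p$ into its $p$-Sylow subgroups, and the groups $\aut(A)$, $C^2(Q,A)$, $B^2(Q,A)$, and the subgroups $R_s^{(p,\mu)}$ all split compatibly along the prime decomposition (note that $R_s^{(p,\mu)}$ only involves rows whose prime equals $p$), the cohomology relation and the hypothesis both decompose prime-by-prime. Henceforth I assume $A = \Z_{p^{\mu_1}} \times \dotsb \times \Z_{p^{\mu_k}}$ with $1 \leq \mu_1 \leq \dotsb \leq \mu_k$.

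For the forward direction, given $\alpha \in \aut(A)$ with $f_1 - f_2^\alpha \in B^2(Q,A)$, the explicit description of $\aut(A)$ preceding the proposition shows that the $i$-th row of $f_2^\alpha$ equals $\sum_j \alpha_{ij}(r_{2,j})$, where each $\alpha_{ij}\colon \Z_{p^{\mu_j}} \to \Z_{p^{\mu_i}}$ is an admissible $\Z$-linear map; by definition, this row lies in $R_2^{(p,\mu_i)}$. Proposition~\ref{prop:basis_1_mu} splits $B^2(Q,A)$ as $\bigoplus_i B^2(Q, \Z_{p^{\mu_i}})$ along rows, so the $i$-th row of $f_1 - f_2^\alpha$ must lie in $B^2(Q, \Z_{p^{\mu_i}})$, yielding $r_{1,i} \in \langle R_2^{(p,\mu_i)}, B^2(Q, \Z_{p^{\mu_i}})\rangle$. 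Pushing forward through the natural maps $\Z_{p^{\mu_i}} \to \Z_{p^\mu}$ (which commute with both the coboundary construction and the admissible entries defining $R_s^{(p,\cdot)}$) upgrades this to $r_{1,j}^{(p,\mu)} \in \langle R_2^{(p,\mu)}, B^2(Q, \Z_{p^\mu})\rangle$ for all $j$ and $\mu$, yielding the ``$\subseteq$'' containment; the reverse containment follows by symmetry applied to $\alpha^{-1}$.

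The main obstacle is the reverse direction, which requires producing an \emph{invertible} admissible $\alpha$. The hypothesis readily furnishes row-by-row admissible decompositions $r_{1,i} = \sum_j \gamma_{ij}(r_{2,j}) + b_i$, but the resulting matrix $(\gamma_{ij})$ need not be invertible. Admissibility forces $\bar\alpha := \alpha \bmod p$ to be block upper-triangular when blocks are indexed by the $\mu$-values in increasing order, and $\alpha \in \aut(A)$ is equivalent to each diagonal block of $\bar\alpha$ being invertible over $\F_p$. My plan is to construct $\bar\alpha$ block-by-block, from the largest $\mu$ downwards, using the hypothesis reduced modulo $p$: one checks that the image of $R_s^{(p,\mu)}$ in $\F_p^{|Q|^2}$ equals $\langle \bar r_{s,j} : \mu_j \geq \mu\rangle$ (rows with $\mu_j < \mu$ become $0$ modulo $p$ since they are scaled by $p^{\mu - \mu_j}$), so the hypothesis gives $\langle \bar r_{1,j} : \mu_j \geq \mu\rangle + B^2(Q,\F_p) = \langle \bar r_{2,j} : \mu_j \geq \mu\rangle + B^2(Q,\F_p)$ for every $\mu$. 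Writing $V_\mu$ for this common subspace, the elementary-abelian lifting argument from the proof of Proposition~\ref{prop:span}, applied in each successive quotient $V_\mu / V_{\mu+1}$, produces an invertible diagonal block $(\bar\alpha_{ij})_{i,j \in I_\mu}$, and the strictly upper-triangular entries in row $i$ are then selected to absorb the $V_{\mu_i+1}$-part of the remaining error.

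Once $\bar\alpha$ is built, lifting to some $\alpha_0 \in \aut(A)$ is routine: choose integer lifts of each entry respecting the admissibility divisibility constraints. To upgrade the resulting approximate equality ``mod $p$'' to the exact equality $f_1 - f_2^\alpha \in B^2(Q,A)$, I plan to iterate the same construction along the filtration $pA \supseteq p^2 A \supseteq \dotsb$, correcting $\alpha_0$ at each stage by admissible perturbations divisible by increasing powers of $p$ --- a Hensel-style argument that terminates after $\mu_k$ stages since $p^{\mu_k}$ annihilates $A$. The hardest part will be verifying that at each stage of the lifting, the hypothesis at the appropriate finer $\mu$-value supplies exactly the information needed to correct the approximation without disturbing the invertibility already established.
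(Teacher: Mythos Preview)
Your forward direction is essentially the paper's, with the useful addition that you explicitly push the containment $r_{1,i} \in \langle R_2^{(p,\mu_i)}, B^2(Q,\Z_{p^{\mu_i}})\rangle$ forward along the natural maps $\Z_{p^{\mu_i}} \to \Z_{p^{\mu}}$ so as to cover \emph{all} generators $r_{1,j}^{(p,\mu)}$ of $R_1^{(p,\mu)}$, not just the one with $j=i$. The paper's write-up of this step is terser and arguably elides this point.

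Your reverse direction diverges substantially from the paper. The paper's argument is a single sentence: from $\langle R_1^{(p_i,\mu_i)}, B^{(p_i,\mu_i)}\rangle = \langle R_2^{(p_i,\mu_i)}, B^{(p_i,\mu_i)}\rangle$ for each $i$ it simply asserts that ``we can choose $\alpha\in\aut(A)$ such that $r_{1,i}+B^{(p_i,\mu_i)}=r^{\alpha}_{2,i}+B^{(p_i,\mu_i)}$ for all $i$'', exactly mirroring the elementary-abelian Proposition~\ref{prop:span}, and then concludes. In particular, the step you flag as the main obstacle --- producing an \emph{invertible} admissible matrix rather than just an admissible one --- is treated by the paper as routine and is not argued further.

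Your concern here is legitimate: the hypothesis at level $\mu_i$ furnishes admissible coefficients for row $i$ of $\alpha$, but stacking these rows need not give a unit of $\mathrm{End}(A)$. Your proposed remedy (build $\bar\alpha$ block-by-block over $\F_p$ using the filtration $V_\mu$, then Hensel-lift along $pA\supseteq p^2A\supseteq\cdots$) is a sound plan, and your mod-$p$ analysis --- that the image of $R_s^{(p,\mu)}$ in $\F_p^{|Q|^2}$ is exactly $\langle \bar r_{s,j}:\mu_j\ge\mu\rangle$ --- is correct and is the right way to see the block-triangular structure. However, you have not actually executed the lifting, and you are right that this is where the real work lies: at each stage the correction $p^t\delta$ must be chosen admissibly so that the new error lands in $B^2(Q,\Z_{p^{\mu_i}})+p^{t+1}\Z_{p^{\mu_i}}^{|Q|^2}$ for every $i$ simultaneously, and it is the \emph{full} hypothesis at level $\mu_i$ (not just its mod-$p$ shadow) that must supply this. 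This can be made to work, but it is considerably more than what the paper writes down; what your approach buys is an honest account of invertibility, at the price of a much longer argument.
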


\begin{proof}
Let $r_{i, j}\in\Z_{p_j^{\mu_j}}^{|Q|^2}$ be the $j$-th row in $M_i$, $j\in[k]$, $i=1, 2$. Let $B^{(p,\mu)}$ denote $B^2(Q, \Z_{p^\mu})$. Note that Proposition~\ref{prop:basis} says that $B^2(Q, \Z_{p_1^{\mu_1}} \times \dotsb \Z_{p_k^{\mu_k}}) = B^{(p_1,\mu_1)} \oplus \dotsb \oplus B^{(p_k,\mu_k)}$.

($\Rightarrow$) $f_1$ and $f_2^\alpha$ are cohomologous if and only if $f_1-f_2^\alpha \in B^2(Q, A)$. Let $r^{\alpha}_{2, j}$ be the $j$-th row in the matrix representation of $f_2^{\alpha}$. By Proposition~\ref{prop:basis_mu}, for every $i\in [k]$, $r_{1, i} - r^{\alpha}_{2, i}\in \langle V_i^{(p,\mu_i)} \rangle = B^2(Q, \Z_{p_i^{\mu_i}}) = B^{(p_i,\mu_i)}$. That is $r_{1, i}\in\langle R_2^{(p_i,\mu_i)}, B^{(p_i,\mu_i)}\rangle$ as $r^{\alpha}_{2, j}\in\langle R_2^{(p_i,\mu_i)}\rangle$ (note that the subgroup generated by of $R_2^{(p_i,\mu_i)}$ is, by definition, left unchanged by the action of $\alpha$). Similarly we have $r_{2, i}\in\langle R_1^{(p_i,\mu_i)}, B^{(p_i,\mu_i)}\rangle$, $\forall i\in[k]$. This shows $\langle R_1^{(p_i,\mu_i)}, B^{(p_i,\mu_i)}\rangle=\langle R_2^{(p_i,\mu_i)}, B^{(p_i,\mu_i)}\rangle$ for each $i$.

($\Leftarrow$) For $\alpha\in\aut(A)$, again let $r^{\alpha}_{2, j}$ be the $j$th row of $f_2^{\alpha}$. Given $\langle R_1^{(p_i,\mu_i)}, B^{(p_i,\mu_i)}\rangle = \langle R_2^{(p_i,\mu_i)}, B^{(p_i,\mu_i)}\rangle$ for each $i$, we have $\langle R_1^{(p_i,\mu_i)}, B^{(p_i,\mu_i)}\rangle / B^{(p_i,\mu_i)}$ and $\langle R_2^{(p_i,\mu_i)}, B^{(p_i,\mu_i)}\rangle /B^{(p_i,\mu_i)}$ are the same as subgroups of $\Z_{p_i^{\mu_i}}^{|Q|^2}/B^{(p_i,\mu_i)}$. That means that we can choose $\alpha\in \aut(A)$ such that $r_{1, i}+ B^{(p_i,\mu_i)}=r^{\alpha}_{2, i} + B^{(p_i,\mu_i)}$, $\forall i\in[k]$. This gives $f_1-f_2^\alpha\in B^2(Q, A)$.
\end{proof}

Finally, we come to the proof of Theorem~\ref{thm:quotient_list} for general abelian $A$:

\begin{proof}[Proof of Theorem~\ref{thm:quotient_list} (decision version only) for general abelian $\charfn(G)$]
The proof is the same as for the elementary abelian case, but using Propositions~\ref{prop:basis_1_mu} and \ref{prop:span_mu} instead of Propositions~\ref{prop:basis_1} and \ref{prop:span}, respectively. Checking the condition of Proposition~\ref{prop:span_mu} amounts to solving a system of equations over the abelian group $A$ (``linear algebra over $A$''), which can be done in polynomial time (see, \eg, \cite{GR02}).
\end{proof}

\subsubsection{Computing the coset of isomorphisms in the elementary abelian case} \label{sec:coset}
Here we give an alternative proof of Theorem~\ref{thm:quotient_list} in the elementary abelian case, which also provides the structure needed to compute the full coset of isomorphisms. The proof follows the same general lines as the proof above, except instead of including a basis of $B^2(Q,A)$ among the rows of an extended matrix $\widetilde{\tb{f_j}}$, it uses the following $\aut(A)$-invariant projection from $C^2(Q,A)$ to $C^2(Q,A)/B^2(Q,A)$. One might think that this approach could be generalized to the general abelian case following the idea in the previous section, thereby computing the coset of isomorphisms in that case as well; see the end of this section for a discussion of the difficulties in using that approach.

\begin{proposition}\label{prop:cob}
For $A=\Z_p^k$ and a group $Q$, let $n=|A|\cdot |Q|$. In time $O(n^2\log n)$ one can compute an $\aut(A)$-invariant complement $W$ of $B^2(Q, A)$ in $C^2(Q, A)$, and a $\Z_p$-linear projection $\pi$ from $C^2(Q, A)$ to $W$ that commutes with every $\alpha \in \aut(A)$.
\end{proposition}

\begin{proof}
Let $V_0$ be the basis of $B^2(Q, \Z_p)$ from Proposition~\ref{prop:basis_1}, and let $W_0$ be a linear complement to $V_0$ in $C^2(Q, \Z_p)$ (which we can think of as the space of row vectors of length $|Q|^2$). For each row index $i \in [k]$, let $R_i$ denote the subgroup of $C^2(Q,A)$ consisting of those matrices whose only nonzero entries are in the $i$-th row, let $B_i$ be the copy of $B^2(Q, \Z_p)$ in $R_i$, let $W_i$ be the copy of $W$ in $R_i$, and let $\pi_i \colon R_i \to W_i$ be the projection of $R_i$ to $W_i$ along $B_i$. (Here we are identifying each $R_i$ with $C^2(Q, \Z_p)$ as in Proposition~\ref{prop:basis}.) Collectively, the $\pi_i$'s define a projection $\pi$ to $\oplus_i W_i$ along $\oplus_i B_i$ in $\oplus_i R_i=C^2(Q, A)$.

This $\pi$ makes it easy to identify the 2-cohomology class of $f\in Z^2(Q, A)\leq C^2(Q, A)$: $f, g\in Z^2(Q, A)$ are cohomologous if and only if $\pi(f)=\pi(g)$. Furthermore, since we have chosen $\pi$ to be ``the same'' on each row, it follows that for any $\alpha\in\aut(A)$, $\alpha\pi=\pi\alpha$---an important fact we shall need later. Also, the above procedure involves only standard linear algebra tasks, and such $\pi$ can be constructed in time $O(n^2\log n)$.
\end{proof}

To summarize, each $\alpha\in\aut(A)$, $\beta\in\aut(Q)$, and the map $\pi\colon C^2(Q,A) \to C^2(Q,A)/B^2(Q, A)$ just introduced can be viewed as a linear map on $C^2(Q, A)$, with $\alpha$ and $\beta$ nonsingular. We then note: (1) $\alpha$ and $\beta$ commute; (2) $\alpha$ and $\pi$ commute.

\begin{proof}[Proof of Theorem~\ref{thm:quotient_list} for elementary abelian $\mathcal{S}(G)$, with the coset of isomorphisms]
We list $\aut(Q)$ in time $t(n)$. For $i=1, 2$, choose an arbitrary section of $Q$ in $G_i$ to get a 2-cocycle $f_i$. Use Proposition~\ref{prop:cob} to get the projection $\pi:C^2(Q, A)\to W$ for some complement $W$ of $B^2(Q, A)$ such that $W$ is invariant under $\alpha$ and $\alpha\pi=\pi\alpha$ for every $\alpha\in\aut(A)$. By the Lemma~\ref{lem:main}, it is necessary and sufficient to test whether there exists an $(\alpha, \beta) \in \aut(A) \times \aut(Q)$ such that $\pi(f_1) = \pi(f_2^{(\alpha, \beta)})$. As every $\alpha$ commutes with both $\pi$ and $\beta$, this condition is equivalent to $\pi(f_1)=(\pi(f_2^{(\id, \beta)}))^{(\alpha, \id)}$. In other words, we may leave $\alpha$ unspecified until the final step.

For each $\beta\in \aut(Q)$, compute $f_1'=\pi(f_1)$, and $f_2'=\pi(f_2^{(\id, \beta)})$. Note that $f_i'$ are in $W\leq C^2(Q, A)$. The goal is then to find $\alpha\in\aut(A)$ such that $f_1'=\alpha^{-1}(f_2')$. 
$f_1'=\alpha^{-1}(f_2')$ for some $\alpha\in\aut(A)$ if and only if the row spans of $\tb{f_1'}$ and $\tb{f_2'}$ are the same in $\Z_p^{|Q|^2}$. The latter task is standard in linear algebra and can be checked in time $O(|Q|^6\log p)$. Lemma~\ref{lem:main} implies that $G_1 \cong G_2$ if and only if the preceding test succeeds for some $\beta \in \aut(Q)$.

\paragraph{Computing the coset of isomorphisms.} (This is essentially the same procedure as in Smith's thesis \cite{smith}; we review the procedure here in detail, in order to give a rigorous analysis of its running time.) Note that $\aut(G)$ maps into $\aut(A) \times \aut(Q)$; let $\rho$ denote this homomorphism. We apply Observation~\ref{obs:genSubProd} twice: once, to split the computation into computing generators of $\im(\rho) \leq \aut(A) \times \aut(Q)$ and $\ker(\rho)$, and then a second time to split the computation of $\im(\rho)$ into computing its projections onto $\aut(A)$ and $\aut(Q)$, respectively. We handle this latter step first.

Let $\aut(A)$ play the role of $Y$ in Observation~\ref{obs:genSubProd} 
and $\aut(Q)$ play the role of $Z$. In our algorithm, we may in fact take 
$\mathcal{Z}$ to be the entirety of the projection of $\aut(G)$ into $\aut(Q)$, since we enumerate 
over all automorphisms of $Q$. To use Observation~\ref{obs:genSubProd}, for each 
automorphism of $Q$ that admits a compatible automorphism of $A$, we need one such 
automorphism of $A$, which are easily  found as solutions to the linear algebra, as described above. 

We then also need a generating set of the automorphisms of $A$ that preserve the cohomology class $f\colon Q \times Q \to A$, that is, those elements of $\aut(A)$ that send $f$ to itself \emph{modulo $B^2(Q, A)$}. Equivalently, we want the stabilizer of the matrix $M_{\pi(f)}$ in $\aut(A) \cong \GL(k, p)$. Since $\aut(A)$ acts on each column of a 2-cochain independently, this stabilizer is the same as the pointwise stabilizer of the \emph{set} of columns of $M_{\pi(f)}$. If the columns of $M_{\pi(f)}$ span $A$, then this stabilizer is trivial, and we are done. More generally, let $B \leq A$ be the $\Z_p$-linear span of the columns of $M_{\pi(f)}$ (although we won't need it here, Lemma~\ref{lem:direct_factor} may help the reader's intuition, which says that, in this case, there is a complementary subgroup $B' \leq A$ such that $A \cong B \oplus B'$ and $G \cong (G/B') \times B'$). Choose a basis $\{e_1, \dotsc, e_k\}$ of $A$ such that a prefix of this basis, say $\{e_1, \dotsc, e_{\dim B}\}$ is a basis of $B$. In this basis, the stabilizer of $M_{\pi(f)}$ in $\GL(k,p)$ then consists of all block matrices of the form 
\[
\left( 
\begin{array}{cc}
\id_B & * \\
0 & \eta
\end{array}
\right),
\]
where $\eta \in \GL(k-\dim B, p)$ and ``$*$'' indicates any $(\dim B) \times (\dim A - \dim B)$ matrix. A generating set for this subgroup is easily written down directly, taking advantage of a standard generating set for $\GL$ (\eg, elementary matrices). Observation~\ref{obs:genSubProd} then gives us a generating set for $\im(\rho) \leq \aut(A) \times \aut(Q)$.

Finally, we apply Observation~\ref{obs:genSubProd} again, now with $\ker(\rho)$ in the role of $Y$ and $\im(\rho)$ in the role of $Z$. We have already computing a generating set $\mathcal{Z}$. For each such generator, we must compute a lift of it in $\aut(G)$, and then we also need to compute generators for $\ker(\rho)$. The lifts of $\im(\rho)$ to $\aut(G)$ are easily computed, once we have chosen our section $s\colon Q \to G$: given $(\alpha,\beta) \in \im(\rho) \leq \aut(A) \times \aut(Q)$, every element of $G$ is uniquely represented as $as(q)$ for some $a \in A, q \in Q$, and then the corresponding automorphism of $G$ acts by sending $as(q)$ to $\alpha(a) s(\beta(q))$. It is readily verified that this is an automorphism of $G$.

The kernel of $\rho$ consists of those automorphisms of $G$ that fix $Q$ pointwise, and fix $A$ pointwise. Thus, all they can do is move elements around within their respective cosets of $A$. That is, any automorphism in $\ker(\rho)$ is determined by a map $\delta\colon Q \to A$ such that the automorphism is given by sending $as(q)$ to $a\delta(q) s(q)$. Since the automorphism fixes $A$ pointwise, we must have $\delta(\id_Q) = \id_A$. To see the condition to be a homomorphism, recall that $(a s(q))(a' s(q')) = a (s(q) a' s(q)^{-1}) f_s(q,q') s(q q')$ so we must have
$
\delta(q) \delta(q') = \delta(qq').
$
These $|Q|^2$ equations are homogeneous linear equations in $|Q|\dim A$ variables over the field $\Z_p$; such linear equations can be solved in polynomial time, yielding a generating set for $\ker(\rho)$. Putting it all together, this yields a generating set for $\aut(G)$.
\end{proof}

To see the difficulty in extending this proof to the general abelian case, we first mention an alternative viewpoint on the preceding proof: The set $C^2(Q, \Z_p^k)$, viewed as $k \times |Q|^2$ matrices, is acted on by $\aut(A) \cong \GL(k,p)$ by left multiplication. Let $V \cong \Z_p^k$ be the defining representation of $\GL(k,p)$; then $C^2(Q,A)$ is isomorphic, as an $\aut(A)$-module, to $V^{\oplus |Q|^2}$. Since $V$ is irreducible, this module is semisimple, so every submodule is a direct summand, and thus every submodule is of the form $V^{\oplus \ell}$ for some $\ell$. So $H^2(Q, A)$ is isomorphic to $V^{\oplus \ell}$ as an $\aut(A)$-module, and finding the $\aut(A)$-stabilizer of a point in $V^{\oplus \ell}$ amounts to finding the stabilizer of a point in $V$ several times, which is easily done since $V$ is the defining representation of $\aut(A) \cong \GL(V)$.

The difficulty in extending this proof to the general abelian case is that $B^2(Q, A)$ need not, in general, be a direct summand of $Z^2(Q, A)$, nor of $C^2(Q, A)$; in the general case $C^2(Q,A) \cong A^{\oplus |Q|^2}$ as $\aut(A)$-modules, but this module is no longer semisimple in general. For example, $C^2(A_5, \Z_4) \cong \Z_4^{60^2}$, which contains many copies of $\Z_2$ that are submodules but not direct summands. Note that $H^2(A_5, \Z_4) \cong \Z_2$ (use the well-known fact that $H_2(A_5, \Z) \cong \Z_2$, together with the Universal Coefficient Theorem, \eg, \cite[Theorem~3.2]{hatcher}, and the fact that $H_1(A_5,\Z) = 0$, since $A_5$ is perfect). So either $Z^2(A_5, \Z_4)$ is not a direct summand of $C^2(A_5, \Z_4)$ or $B^2(A_5, \Z_4)$ is not a direct summand of $Z^2(A_5, \Z_4)$. 

Without a direct summand decomposition, the straightforward way to obtain an action of $\aut(A)$ on $H^2(Q, A)$ is to realize $H^2(Q, A)$ explicitly as a quotient module. This can be done quite easily, using the fact that $H^2(Q, \Z_{p_1^{\mu_1}} \times \dotsb \Z_{p_k^{\mu_k}}) \cong \bigoplus_{i \in [k]} H^2(Q, \Z_{p_i^{\mu_i}})$. This even yields an $\aut(A)$-invariant projection $Z^2(Q, A) \to H^2(Q, A)$ (the best we can hope for in the absence of a direct sum decomposition). But then the difficulty becomes that we do not know the structure of this quotient module well enough. If it were just the action of $\aut(A)$ on $A^{\oplus \ell}$ for some $\ell$, then we would be very close to done, as one can find the linear subspace of $\aut(A)$ that fixes a point in $A^{\oplus \ell}$ using Smith normal form, and then all that would be needed is the general abelian analogue of Theorem~\ref{thm:unit} (see Open Problem~\ref{open:unit}). However, while the group action of $\aut(A)$ on $H^2(Q, A)$ looks very similar to the action of $\aut(A)$ on $A^{\oplus \ell}$ for some $\ell$, it differs in an important way: The elements in $\aut(A)$ that would normally induce surjections $\Z_{p^{\mu}}^d \twoheadrightarrow \Z_{p^{\mu'}}^d$ with $\mu' < \mu$ induce maps $H^2(Q, \Z_{p^\mu}) \to H^2(Q, \Z_{p^{\mu'}})$ which need not be surjections. (The failure to be a surjection is measured precisely by $H^3(Q, \Z_{p^{\mu - \mu'}})$.) So in this case, the best we can do naively is to apply generic algorithms for finding stabilizers in matrix groups, but the current state of the art is not fast enough for this to run in time $\poly(|Q|,|A|)$. 

\subsection{For general extensions of elementary abelian groups (Theorem~B)} \label{sec:abelian}

\begin{theorem}\label{thm:quotient_list_general}
Let $\charfn$ be a polynomial-time-computable characteristic subgroup function. For two groups $G, H$ of order $n$, if $\charfn(G) \cong \Z_p^k$ and $\aut(G/\charfn(G))$ can be listed in time $t(n)$, then isomorphism of $G$ and $H$ can be decided\cosetComment{, and the coset of isomorphisms found,} in time $t(n) n^{O(1)}$.
\end{theorem}

As before, let us first see how this applies to groups with elementary abelian radicals. Combining Theorem~\ref{thm:quotient_list_general} with Theorem~\ref{thm:semisimple}, respectively Theorem~\ref{thm:semisimple2}, we have:

\begin{corollary}\label{cor:quotient_list_general}
Isomorphism of groups of order $n$ with elementary abelian radicals can be decided\cosetComment{, and the coset of isomorphisms found,}  in time $n^{c \log \log n + O(1)}$, for $c = 1/\log_2(60) \approx 0.169$.
\end{corollary}

\begin{corollary} \label{cor:quotient_list2_general}
Let $G$ and $H$ be groups of order $n$ with elementary abelian radicals. If $G/\rad(G)$ has $O(\log n / \log \log n)$ minimal normal subgroups, isomorphism between $G$ and $H$ can be decided\cosetComment{, and the coset of isomorphisms found,}  in $\poly(n)$ time. 
\end{corollary}

The proof of Theorem~\ref{thm:quotient_list_general} is a reduction to \modcyc, 
for which a deterministic polynomial-time algorithm over finite fields is provided 
by Chistov, Ivanyos and Karpinski \cite{CIK97}; this was recently generalized to 
finite modules (not necessarily over a field) \cite{CT15}, 
which we use to decide the case of general abelian radicals, but where the 
extension by the radical is split (see Remark~\ref{rmk:CT}), and we expect to have 
further uses.
Before the reduction it might be helpful to see this problem in a special case, when the extensions are split. 


\begin{proof}[Proof of Theorem~\ref{thm:quotient_list_general} for split extensions, a.k.a. {\sc Module Isomorphism}] Recall that $G_1$ and $G_2$ are extensions of $A=\Z_p^k$ by $Q$. Furthermore suppose both extensions split. Then to test isomorphism we are left with the \actcomp, that is, we extract the actions of $Q$ on $A$ in $G_i$ as $\theta_i:Q\to \aut(A)=\GL(k, p)$, and the goal is to find $(\alpha, \beta)\in\aut(A)\times \aut(Q)$ such that $\theta_1=\theta_2^{(\alpha, \beta)}$. As $\aut(Q)$ is enumerable, we fix a $\beta$ and all that remains is to test whether there exists $\alpha\in\GL(k, p)$ such that $\forall q\in Q$ $\theta_1(q)=\alpha^{-1}\theta_2(q)\alpha$. In other words, viewing $\theta_i$ as linear representations of $Q$ over the field $\F_p$, the problem is to test whether these two representations are equivalent. This can also be formulated as finding a nonsingular matrix $\alpha$ such that $\alpha\theta_1(q)=\theta_2(q)\alpha$, $\forall q\in Q$, namely \modiso. Over finite fields this problem admits deterministic polynomial-time algorithms \cite{CIK97,BL08}.

\paragraph{Computing the coset of isomorphisms.} When $A$ is elementary abelian, we can also find the full coset of isomorphisms. For this, we need a generating set of the group of units of the matrix algebra $\{\alpha \in M(k,p) : \alpha \theta_1(q) = \theta_2(q) \alpha \forall q \in Q\}$. First, compute an $\F_p$-linear basis for this algebra by solving the given equations, which are linear in the entries of $\alpha$. Next, from this linear basis for this matrix algebra, use Theorem~\ref{thm:unit} to compute a generating set of the group of units.
\end{proof}

\begin{remark} \label{rmk:CT}
In the case of split extensions, we can also decide isomorphism even when $A$ is a general abelian group, not necessarily elementary. For this, we cannot use $\GL(k,p)$, but must stick with the more general group $\aut(A)$; the rest of the above proof for deciding isomorphism goes through \emph{mutatis mutandis}, reducing now to \modiso for finite $\Z G$-modules, rather than finite(-dimensional) $\F_p G$-modules. \modiso for finite $\Z G$-modules can also be solved deterministically in polynomial time \cite{CT15}. For computing the coset of isomorphisms, we do not yet know how to compute the unit group of a finite $\Z$-algebra (rather than one over a field). See Open Question~\ref{open:unit}.
\end{remark}

Now we present the reduction for the general case. 

\begin{proof}[Proof of Theorem~\ref{thm:quotient_list_general}: reduction to {\sc Module Cyclicity}] As always, we start with the elementary abelian case. Let $G_1$ and $G_2$ be extensions of $A=\Z_p^k$ by $Q$, and $(\theta_i, f_i)$ the extension data of $\extension{A}{G_i}{Q}$. It can be verified that if $(\alpha, \beta)$ satisfies $\theta_1=\theta_2^{(\alpha, \beta)}$, then $(\alpha, \beta)$ sends $Z^2(Q, A, \theta_2)$ to $Z^2(Q, A, \theta_1)$ and sends $B^2(Q, A, \theta_2)$ to $B^2(Q, A, \theta_1)$. As $\aut(Q)$ is enumerable, the problem is to find $\alpha\in\GL(k, p)$ such that (1) $\forall q\in Q$, $\alpha\theta_1(q)=\theta_2(q)\alpha$; (2) $\alpha f_1=f_2$ as cohomology classes in $Z^2(Q, A, \theta_2)$ (that is $[\alpha f_1]=[f_2]$). 

This task can be reduced to \modcyc over finite-dimensional algebras, in almost 
the same way as the reduction from \modiso to \modcyc \cite{CIK97}. We include a 
full proof here for completeness. The basic idea is as follows: just as $\Iso(G_1, 
G_2)$ is a coset of $\aut(G_1)$, $\Hom(G_1, G_2)$ is a ``coset of''---more 
precisely, is acted on by---$\End(G_1) := \Hom(G_1, G_1)$. If one could find 
$\Hom(G_1, G_2)$, the problem would be reduced to finding an invertible element 
there. The following is the linear-algebraic version of these ideas, in which 
finding an invertible element inside a set of not-necessarily-invertible 
homomorphisms is reduced to \modcyc, for the analog of $\Hom(G_1, G_2)$ considered as 
a module over the analog of $\End(G_1)$.

Let $M(k, p)$ be the linear space of $k\times k$ matrices over $\Z_p$. Consider a linear subspace of $M(k, p)$, $V=\{\alpha\in M(k, p)\mid \forall q\in Q, \alpha\theta_1(q)=\theta_2(q)\alpha, \text{ and } \exists a\in\Z_p, [\alpha f_1]=[af_2]\}$.  Also consider $U=\{\gamma\in M(k, p)\mid \forall q\in Q, \gamma\theta_2(q)=\theta_2(q)\gamma, \text{ and } \exists a\in \Z_p, [\gamma f_2]=[af_2]\}$. It can be verified that $U$ is an associative algebra over $\Z_p$ with identity. Note that we need to allow the possible scalar $a$ in the definition of $U$ in order for $U$ to be closed under addition. This does not hurt the multiplicative condition, since $\alpha \theta_2(q) = \theta_2(q) \alpha$ is not changed if we multiply both sides by a scalar. Once the scalar is allowed in the definition of $U$, we must also allow it in the definition of $V$, in order for $V$ to be a $U$-module (see next paragraph). However, if $[\gamma f_2] = [a f_2]$, then $a^{-1}\gamma \in U$ and $a^{-1} \gamma) [f_2] = [f_2]$, so $U$ essentially consists of endomorphisms of $G$ ``up to scale.''

Then $V$ is a left $U$-module: for $\alpha\in V$, $\gamma\in U$ and $q\in Q$, $\gamma\alpha\theta_1(q)=\gamma\theta_2(q)\alpha=\theta_2(q)\gamma\alpha$. To show that $[\gamma\alpha f_1]=[af_2]$ is a little subtle, and for this we need to recall the fact that, if $\gamma\theta_2(q)=\theta_2(q)\gamma$ for every $q\in Q$, then $\gamma$ preserves $B^2(Q, A, \theta_2)$. 
That is, $\alpha f_1=af_2+g$ for some $a\in\Z_p$ and $g\in B^2(Q, A, \theta_2)$, 
and $\gamma\alpha f_1=\gamma (af_2+g)=a\gamma f_2+\gamma g=a'f_2+g'+g''$ where 
$a'\in\Z_p$, $a\gamma f_2=a' f_2+g'$ and $\gamma g=g''$. Now we claim that if $V$ 
contains invertible elements, then (1) it is cyclic, and (2) every generator is 
invertible. To show (1), let $\alpha'\in V$ be invertible, and form $\phi:U\to V$ 
by sending $\gamma \to \gamma\alpha'$. Then $\phi$ is an $U$-module isomorphism 
between $U$ and $V$, whose inverse is $V\to U$ by $\alpha \to \alpha\alpha'^{-1}$; 
$\alpha\alpha'^{-1}\in U$ again follows from that $\alpha$ and $\alpha'$ can be 
shown to send $B^2(Q, A, \theta_2)$ to $B^2(Q, A, \theta_1)$ as a consequence of 
$\alpha\theta_1(q)=\theta_2(q)\alpha$. For (2), if $\alpha''$ generates $V$, then 
$\alpha''\alpha'^{-1}$ generates $U$ as a left $U$-module, and thus 
$\alpha''\alpha'^{-1}$ is invertible, showing that $\alpha''$ is invertible. 
Finally we note that if some invertible $\alpha'\in V$ sends $[f_1]$ to $[af_2]$ 
for some $a\in \Z_p$, then $a^{-1}\alpha'\in V$ is also invertible and sends 
$[f_1]$ to $[f_2]$.

Given the above reduction, here is an algorithm for the general case: we still represent 2-cocycles by $k\times |Q|^2$ matrices over $\Z_p$. We first compute a $\Z_p$-basis of $B^2(Q, A, \theta_2)$ as in Proposition~\ref{prop:basis}. Using these 2-cocycles we can represent $V$ and $U$ as solution spaces of homogeneous linear equations. Finally we apply the polynomial-time \modcyc algorithm \cite{CIK97}, either to get that $V$ is not cyclic, and thus does not contain invertible elements, or to get a generator $\alpha'\in V$. In the latter case we conclude based on whether $\alpha'$ is invertible or not. 


\paragraph{Computing the coset of isomorphisms.} Any single invertible element of $V$ yields an isomorphism. As in Theorem~\ref{thm:quotient_list}, a generating set of automorphisms is determined by (1) its image in $\aut(A) \times \aut(Q)$, and (2) the kernel of the natural map $\aut(G) \to \aut(A) \times \aut(Q)$. In this case, (1) is easily determined by a generating set of the group of units of $U$, and (2) is a generalization of a homomorphism from $Q \to A$ called a ``derivation,'' which we describe below (see, \eg, \cite[Section~3.3]{smith}). In the elementary abelian case, the unit group of $U$ can be computed from a linear basis for $U$ by Theorem~\ref{thm:unit}. All that remains is to compute the derivations $Q \to A$.

Recall that, for a section $s\colon Q \to G$, every element of $G$ is uniquely represented as $a s(q)$ for some $a \in A, q \in Q$. Then we have that $(a s(q))(a' s(q')) = a (s(q) a' s(q)^{-1}) f_s(q,q') s(q q')$. An automorphism that fixes both $Q$ and $A$ pointwise is fully determined by a map $\delta\colon Q \to A$, as in Theorem~\ref{thm:quotient_list}, but now satisfying the more general condition:
\[
\delta(q) (s(q) \delta(q') s(q)^{1}) = \delta(qq').
\]
Such maps $\delta$ are called \emph{derivations}. As conjugation of $Q$ on $A$ is an action by automorphisms, this amounts to $|Q|^2$ homogeneous linear equations in $|Q|\dim A$ variables over the field $\Z_p$; such linear equations can be solved in polynomial time, yielding a generating set for the kernel of the map $\aut(G) \to \aut(A) \times \aut(Q)$. Putting this all together yields a generating set for $\aut(G)$.
\end{proof}

\section{When \texorpdfstring{$\aut(Q)$}{Aut(Q)} is too big} \label{sec:centrad_bb}
\label{sec:centrad} 
In this section we present polynomial-time algorithms for certain central-radical groups even when $\aut(Q)$ cannot be enumerated in polynomial time (unconditionally, simply because its size is super-polynomial). In particular, we present two fixed-parameter polynomial-time algorithms for central radical groups with $G/\rad(G)$ a direct product nonabelian simple groups, or a direct product of small perfect groups:

\begin{theorem}\label{thm:ecentradwsoc}
Isomorphism between two groups $G_1, G_2$ with central, elementary abelian radicals can be decided, and the coset of isomorphisms found, in $\poly(|G_i|)$ time if either:
\begin{enumerate}
\item\label{thm:item:simple} $G_1/\rad(G_1)$ is a direct product of simple groups%
; or
\item\label{thm:item:perfect} $G_1/\rad(G_1)$ is a direct product of perfect groups, each of order $O(1)$.
\end{enumerate}
\end{theorem}

Theorem~\ref{thm:ecentradwsoc} yields polynomial-time algorithms for the 
following concrete cases, for example: (\ref{thm:item:simple}) covers groups with $\rad(G) = Z(G) = \Z_p^k$ and $G/Z(G) \cong A_m^k$; (\ref{thm:item:perfect}) covers the case when $\rad(G) = Z(G) = \Z_p^k$ and $G/Z(G) \cong (A_5 \wr \text{PSL}_{20}(\F_7))^{k}$, where the wreath product is taken with respect to any permutation representation of $\text{PSL}_{20}(\F_7)$.

\begin{remark} \label{rmk:tradeoff}
In Theorem~\ref{thm:ecentradwsoc}(\ref{thm:item:perfect}) we can relax the $O(1)$ size bound to $O(f(n))$ to get an algorithm that runs in time $n^{O(f(n)^2 \log f(n))}$. For $f(n) = (\log \log n)^{1/2 - \varepsilon}$ for some fixed $\varepsilon > 0$, Theorem~\ref{thm:ecentradwsoc}(\ref{thm:item:perfect}) becomes $n^{o(\log \log n)}$-time (really-very-nearly-polynomial time). 
\end{remark}

We remark that while it's possible that some of the algorithms in this paper could 
yield practically efficient implementations by using suitable sub-routines (with 
or without worst-case guarantees), 
Theorem~\ref{thm:ecentradwsoc}(\ref{thm:item:perfect}) is admittedly not 
practical: the square of the $O(1)$ bound appears in the exponent of the runtime 
of the algorithm, and the smallest centerless, indecomposable perfect group that 
isn't simple---since simple groups are handled by the much more efficient 
algorithm from the first part of the theorem---has order $960$ (the group is a 
semi-direct product of $A_5$ acting on $\F_2^4$; see \cite[Section~5.3, Group 
(4,1)]{HP}).


For Theorem~\ref{thm:ecentradwsoc}, currently we can only work with elementary abelian groups; an open problem posed in \cite[Section 7.7]{BCGQ}, namely the group code equivalence problem over cyclic groups, seems to be the only current obstacle for part (\ref{thm:item:perfect}); see Observation~\ref{obs:groupcode}. For part (\ref{thm:item:simple}), we show in Lemma~\ref{lem:GKKL} that the cohomological bound \emph{can} be extended to the general abelian case; however, essentially the same obstacle as in Section~\ref{sec:coset} still needs to be overcome in this approach.

As remarked before, for Theorem~\ref{thm:ecentradwsoc} we need a more detailed understanding of central extensions in this special group 
class, including the deep Theorem~\ref{thm:GKKL} \cite{GKKL}. We also remind the reader that singly exponential algorithms for \lincode 
and \cosetint play an important role in Theorem~\ref{thm:ecentradwsoc} (\ref{thm:item:simple}).

\subsection{Preparations from cohomology} \label{sec:prep_coho}
Let $A$ be an abelian group, and $T_1, \dots, T_\ell$ be perfect, centerless, directly indecomposable groups (this includes the case in which the $T_i$ are nonabelian simple; in fact, the reader unfamiliar with perfect groups will not lose too much by just considering the nonabelian simple case instead). For an extension $\extension{A}{G}{Q}$ with $Q=\prod_{i\in[\ell]}T_i$, let $U_i$ be the inverse image of $T_i$ in $G$ under the natural projection from $G$ to $Q$. The following proposition adapted from Suzuki \cite{Suzuki2} is crucial (see Appendix~\ref{app:lem} for a proof).

\begin{proposition}[{Cf. \cite[Chapter 6, Proposition 6.5]{Suzuki2}}]\label{prop:central}
Let notations be as above. For $i, j\in[\ell]$, $i\neq j$, $[U_i, U_j]=\id$. That is, $\forall x\in U_i$, $\forall y\in U_j$, $xy=yx$.
\end{proposition}

We now consider the $U_i$ not just as subgroups of $G$, but as extensions $\extension{A}{U_i}{T_i}$. As Proposition~\ref{prop:central} shows that $[U_i, U_j]=\id$ for $i\neq j$, these extensions determine the extension $\extension{A}{G}{Q}$ as follows:

\begin{lemma}\label{lem:prod}
Given two central extensions $\extension{A}{G_j}{Q}$ ($j=1,2$) with $A = Z(G_j)$ and $Q = \prod_{i=1}^{\ell} T_i$, with the $T_i$ perfect, centerless, and indecomposable, let $U_{j,i}$ be the inverse image of $T_i$ under the natural projection $G_j \to G_j / A$. The extensions $\extension{A}{G_j}{Q}$ ($j=1, 2$) are equivalent if and only if for each $i\in[\ell]$, the extensions $\extension{A}{U_{j,i}}{T_i}$ ($j=1, 2$) are equivalent.
\end{lemma}

\begin{proof}
The only if direction is trivial. For the other direction, for $j=1, 2$ and $i\in[\ell]$, let $f_{j, i}$ be the 2-cocycle of the extension $\extension{A}{U_{j, i}}{T_i}$ induced by some section $s_{j, i}:T_i\to U_i$. By standard cohomology (see, \eg, Appendix~\ref{app:gentle}), as $U_{1,i}$ and $U_{2,i}$ are equivalent extensions for each $i$, $f_{1,i} - f_{2,i}$ is some 2-coboundary $b_i \in B^2(T_i, A)$. To show the equivalence of $\extension{A}{G_j}{\prod_{i\in[\ell]}T_i}$, we only need to exhibit two 2-cocycles $f_j$ for $\extension{A}{G_j}{Q}$ that differ by a 2-coboundary.

As $Q$ is decomposed uniquely as $\prod_i T_i$ (that is, uniquely as subsets of $Q$, not just up to isomorphism), we can identify elements in $Q$ as from $\prod_i T_i$ without ambiguity. Let $(p_1, \dots, p_\ell)$ and $(q_1, \dots, q_\ell)$ be two elements in $Q$, $p_i, q_i\in T_i$ for $i\in[\ell]$. Then define $b:Q\times Q\to A$ as
\begin{equation}\label{eq:coboundary_resp_prod}
b((p_1, \dots, p_\ell), (q_1, \dots, q_\ell))=\sum_{i\in[\ell]} b_i(p_i, q_i).
\end{equation}
Using the $\Z$-linearity of the coboundary condition, one can verify that $b$ is a 2-coboundary in $B^2(Q, A)$.

Recall that the 2-cocycle $f_{j, i}$ is induced by the section $s_{j, i}:T_i\to U_i$. We define a section $s_j:Q\to G_j$, as $s_j((p_1, \dots, p_\ell))=s_{j, 1}(p_1)\dots s_{j, \ell}(p_\ell)$. Let $f_j$ be the 2-cocycle induced by $s_j$, then---noting that for $i_1\neq i_2$, $s_{j, i_1}(p_{i_1})$ and $s_{j, i_2}(p_{i_2})$ commute by Proposition~\ref{prop:central}---it can be verified that
\begin{equation}\label{eq:cocycle_resp_prod}
f_j((p_1, \dots, p_\ell), (q_1, \dots, q_\ell))=\sum_{i\in[\ell]}f_{j, i}(p_i, q_i).
\end{equation}
Thus $f_1-f_2=b\in B^2(Q, A)$, finishing the proof.
\end{proof}

For convenience, in the following we shall call $U_{j, i}$ the \emph{restriction of $G_j$ to $T_i$} and use $G_j|_{T_i}$ to denote it. The next lemma concerns the direct product structure of the normal part; its proof is included in Appendix~\ref{app:lem} for completeness.

\begin{lemma}\label{lem:direct_factor}
Let $A' \times A'' \hookrightarrow G \twoheadrightarrow Q$ be a central extension of $A' \times A''$ by $Q$. Let $p_{A'}\colon A' \times A'' \to A'$ be the projection onto $A'$ along $A''$. If there is a 2-cocycle $f\colon Q \times Q \to A' \times A''$ such that $p_{A'} \circ f \colon Q \times Q \to A'$ is a 2-coboundary, then $G$ is isomorphic (even equivalent) to the direct product $A' \times (G / A')$.

Furthermore, $A'$ can be computed in $\poly(|G|)$ time using linear algebra over abelian groups.
\end{lemma}

Using general algorithms for decomposing direct products \cite{KN09,Wil10}, we could compute $A'$ in polynomial time without the ``furthermore.'' However, in the setting of Lemma~\ref{lem:direct_factor}, we give a much simpler algorithm to compute $A'$ using linear algebra over abelian groups.

\subsection{Warm-up result}
The full proof of Theorem~\ref{thm:ecentradwsoc} requires several ideas. To highlight these ideas separately, we start by proving a warm-up result that only needs some of these ideas. The following result is the same as Theorem~\ref{thm:ecentradwsoc}, except that instead of requiring that the central radical be elementary abelian, it requires that $|\aut(\rad(G))| \leq \poly(|G|)$. (Formally, these two conditions are incomparable, but the condition used here ``feels'' more stringent.)

\begin{proposition} \label{prop:warmup}
Isomorphism between two groups $G_1, G_2$ with central radicals can be decided, and the coset of isomorphisms found, in $\poly(|G_1|)$ time if $|\aut(\rad(G_1))| \leq \poly(|G_1|)$ and either:
\begin{enumerate}
\item\label{prop:item:simple} $G_1/\rad(G_1)$ is a direct product of simple groups%
; or
\item\label{prop:item:perfect} $G_1/\rad(G_1)$ is a direct product of perfect groups, each of order $O(1)$.
\end{enumerate}
\end{proposition}

As a few examples of when the condition on $\aut(\rad(G))$ holds, note that $\aut(R) \leq |R|^{\log |R|}$ for all groups $R$, and $|\aut(\Z_{p^e}^\ell)| \sim p^{e \ell^2}$, so the condition holds if $|\rad(G)| \leq 2^{O(\sqrt{\log |G|})}$ or if $\rad(G) = \Z_{p^e}^\ell$ with $e \ell^2 \log p \leq O(\log |G|)$.

\begin{proof}

Decompose each $G_j$ ($j=1,2$) as an extension of $A$ by $Q=\prod_{i\in[\ell]}T_i$, where each $T_i$ is perfect and directly indecomposable; each $T_i$ is necessarily centerless because $A = \rad(G)$. Regardless of whether we are in case (\ref{prop:item:simple}) or (\ref{prop:item:perfect}), this decomposition is algorithmically straightforward by \cite[Proposition 2.1]{BCGQ}; although it was stated there only for case (\ref{prop:item:simple}), the same algorithm works in both cases, because the direct factors are uniquely determined not only up to isomorphism, but as subsets of $G_j$, even in the perfect case.

Now we group the $T_i$ according to their isomorphism types, identifying $Q=\prod_{i\in[r]}Q_i^{\ell_i}$, where $r$ is the number of isomorphism types among the $T_i$, each $Q_i$ is isomorphic to some $T_i$, and the $Q_i$'s are pairwise nonisomorphic. In case (\ref{prop:item:simple}), this can be done as each $T_i$ is generated by 2 elements, and in case (\ref{prop:item:perfect}) it can be done as each $T_i$ has size $O(1)$ by assumption. 

Then $\aut(Q)\cong \prod_{i\in[r]}\aut(Q_i)\wr S_{\ell_i}\cong \prod_{i\in[r]}(\aut(Q_i)^{\ell_i}\rtimes S_{\ell_i})$. A \emph{diagonal} of $\aut(Q)$ is an element in $\prod_{i\in[r]}\aut(Q_i)^{\ell_i}$. All diagonals are enumerable in polynomial time by the $n^{\text{\# generators}}$ technique: In case (\ref{prop:item:simple}), by Fact~\ref{fact:genby2}, $|\prod_{i\in[r]}\aut(Q_i)^{\ell_i}|\leq (\prod_{i\in[r]}|Q_i|^2)^{\ell_i})\leq |G_j|^2$, and in case (\ref{prop:item:perfect}), $|\prod_{i\in[r]}\aut(Q_i)^{\ell_i}|\leq (\prod_{i\in[r]}|Q_i|^{\log |Q_i|})^{\ell_i})\leq O(1)^{O(\log |G_j|)} = \poly(|G_j|)$.

By Lemma~\ref{lem:main}, $G_1$ and $G_2$ are isomorphic if and only if they are pseudo-congruent extensions of $A$ by $Q$. The extensions are pseudo-congruent if and only if there is an element of $\Aut(A) \times \Aut(Q)$ such that, after twisting by this element, the resulting extensions are equivalent. Once an element of $\aut(A)\times \aut(Q)$ is fixed, by Lemma~\ref{lem:prod}, the latter problem is reduced to determining the equivalence of $G_1|_{T_i}$ and $G_2|_{T_i}$ for each $i\in[\ell]$.

Note that the equivalence type of the extension $G_j|_{T_i}$ can be computed in polynomial time by Theorem~\ref{thm:quotient_list}, as $\Aut(T_i)$ can be listed in polynomial time: In case (\ref{prop:item:simple}) each $T_i$ is generated by $2$ elements, and in case (\ref{prop:item:perfect}) each $T_i$ has size $O(1)$.

Having decomposed $G_j$ as $\extension{A}{G_j}{\prod_{i \in [r]} Q_i^{\ell_i}}$, the algorithm then proceeds as follows.
For every $\alpha\in\aut(A)$, and every diagonal $\prod_{i\in[\ell]}\delta_i$ of $\prod_{i\in[\ell]} \aut(T_i)$, do the following. Apply $\alpha^{-1}$ and $\delta_i$ to each restricted extension $G_2|_{T_i}$. Now compute the equivalence types of $G_2|_{T_i}$. If the multiset of equivalence types coming from $G_1|_{T_i}$ is equal to the multiset of equivalence types from the $(\alpha, \prod_i\delta_i)$-twisted $G_2|_{T_i}$, then $G_1$ and $G_2$ are pseudo-congruent as extensions, and the algorithm reports ``isomorphic.'' On the other hand, if the equivalence of multisets is not detected for any $(\alpha, \prod_i\delta_i)$ then the algorithm returns ``not isomorphic.''

It is obvious that the above procedure runs in polynomial time in $|G_1|$ and $|\aut(A)|$. We remark that if $T_{i_1}\not\cong T_{i_2}$ then $G_1|_{T_{i_1}}$ and the $(\alpha, \delta_{i_2})$-twisted $G_2|_{T_{i_2}}$ cannot be equivalent. Thus the multiset of equivalence types distinguishes the isomorphism types of $T_i$'s automatically. Finally, it is enough to compare the multisets because we have full symmetric groups $S_{\ell_i}$ acting on the isomorphic factors.

To find the coset of isomorphisms, note that the preceding algorithm enumerates over every element of $\aut(A)$ and every diagonal of $\prod_{i \in [\ell]} \aut(T_i)$. For each such pair, the algorithm determines a multiset of equivalence types, so the relevant subgroup of $S_{\ell_i}$ is just a set transporter, which is easily calculated. Compute the derivations and apply Observation~\ref{obs:genSubProd} twice, as in the proof of Theorem~\ref{thm:quotient_list_general}.
\end{proof}

\subsection{Proof of Theorem~C} 
\label{sec:proof_ecentradwsoc}
Unlike the warm-up result from the previous section, for Theorem~\ref{thm:ecentradwsoc} we can no longer afford to enumerate $\aut(\rad(G))$. Avoiding this enumeration leads us to two distinct, more advanced methods, each of which may have further applications.

\begin{proof}[Proof of Theorem~\ref{thm:ecentradwsoc}]
We start as in the proof of Proposition~\ref{prop:warmup}. For $G_j$, $j=1, 2$, decompose it as an extension of $A=\Z_p^k$ by $Q=\prod_{i\in[\ell]}T_i$. Classify $T_i$'s according to their isomorphism types and group them together, identifying $Q=\prod_{i\in[r]}Q_i^{\ell_i}$. Then $\aut(Q)\cong \prod_{i\in[r]}\aut(Q_i)\wr S_{\ell_i}\cong \prod_{i\in[r]}(\aut(Q_i)^{\ell_i}\rtimes S_{\ell_i})$. A \emph{diagonal} of $\aut(Q)$ is an element in $\prod_{i\in[r]}\aut(Q_i)^{\ell_i}$. All diagonals are enumerable in polynomial time by the $n^{\text{\# generators}}$ technique. By Lemma~\ref{lem:main}, $G_1 \cong G_2$ if and only if they are pseudo-congruent extensions of $A$ by $Q$. The extensions are pseudo-congruent if and only if there is an element of $\Aut(A) \times \Aut(Q)$ such that, after twisting by this element, the resulting extensions are equivalent. Once an element of $\aut(A)\times \aut(Q)$ is fixed, by Lemma~\ref{lem:prod}, the latter problem is reduced to determining the equivalence of $G_1|_{T_i}$ and $G_2|_{T_i}$ for each $i\in[\ell]$.

We will also need an additional consequence of Lemma~\ref{lem:prod}. For $G_j$, $j=1, 2$, we say a 2-cocycle $f_j:Q\times Q\to A$ respects the direct factors if there exist $f_{j, i}:T_i\times T_i\to A$, $i\in[\ell]$ such that Equation~\ref{eq:cocycle_resp_prod} holds. Let $\cocycleprod$ denote the set of 2-cocycles respecting the direct factors. The proof of Lemma~\ref{lem:prod} shows that $\cocycleprod\neq\emptyset$. Similarly we can define 2-coboundaries that respect the direct factors $\coboundaryprod$ using Equation~\ref{eq:coboundary_resp_prod}. The difference of two cohomologous 2-cocycles in $\cocycleprod$ is automatically in $\coboundaryprod$. We define $\cochainprod$ as the 2-cochains respecting the direct factors. We may view elements of $\cochainprod$ as $k \times \sum_{i \in [\ell]} |T_i|^2$ matrices, whose rows are indexed by $[k]$ and whose columns are indexed by triples $(i; p,q)$ with $p,q \in T_i$. That is, $\cochainprod = \bigoplus_{i \in [\ell]} C^2(T_i, A)$. 

\paragraph{(\ref{thm:item:simple})} Note that the equivalence type of the extension $G_j|_{T_i}$ can be computed in polynomial time by Theorem~\ref{thm:quotient_list}, as each $T_i$ is generated by $2$ elements, and thus $\Aut(T_i)$ can be listed in polynomial time. The algorithm then proceeds as follows.

As every $\beta\in\aut(Q)$ can be represented as $(\delta, \sigma)\in (\prod_i\aut(Q_i)^{\ell_i})\rtimes (\prod_i S_{\ell_i})$, it follows that $\cocycleprod$ (resp., $\coboundaryprod$) is an invariant subset in $Z^2(Q, A)$ (resp., $B^2(Q,A)$) under the actions of both $\aut(A)$ and $\aut(Q)$. Using Proposition~\ref{prop:cob}, for each $i \in [\ell]$ we get projection $\pi_i\colon C^2(T_i, A) \to W_i$ for some $\aut(A)$-invariant $W_i \leq C^2(T_i, A)$ such that $C^2(T_i, A) = B^2(T_i, A) \oplus W_i$ and $\pi_i$ commutes with the action of $\aut(A)$.

Theorem~\ref{thm:GKKL} tells us that $\dim W_i = \dim H^2(T_i, A) \leq 17.5 \dim A$, and the ``alternative viewpoint'' at the end of \Sec{coset} implies that $W_i$ is a direct sum of copies of $A$. By a judicious choice of basis, we may thus write elements of $W_i$ as $k \times 17$ matrices, in a way that is still $\aut(A)$-equivariant. (This does not imply that the 17.5 in Theorem~\ref{thm:GKKL} can be replaced by 17 in general, but rather only in the case when $Q$ acts trivially on $A$.) From now on, let $\pi_i$ denote the composition of the previous $\pi_i$, followed by this mapping onto $k \times 17$ matrices. 

For each choice of diagonal $\delta \in \prod_{i \in [\ell]} \aut(T_i)$, we may choose the complements $W_i$ so that whenever $i,j$ are such that $T_i \cong T_j$, that $W_i$ and $W_j$ are identified by $\delta$. With this choice, we may direct sum these $\pi_i$ together to get a homomorphism $\pi = \pi_\delta\colon \cochainprod \to W \leq M(k, 17 \ell)$ for some $\aut(A)$-invariant $W \leq \cochainprod$ such that $\cochainprod = \coboundaryprod \oplus W$ and $\pi$ commutes with the action of $\aut(A) \times \prod_{i} S_{\ell_i}$. 
For each diagonal $\delta$, the question then is to decide the existence of 
$(\alpha, \sigma)\in\aut(A)\times (\prod_i S_{\ell_i})$ such that 
$\pi(f_1)=\pi(f_2^{(\id, \delta, \id)})^{(\alpha, \id, \sigma)}$. 
(Note that although $(\id, \id, \sigma)$ does not commute with $(\id, \delta, 
\id)$, we do have that $(\alpha, \delta, \sigma) = (\id, \delta, \id) (\alpha, 
\id, \sigma)$, since $\sigma$ is already ``on the right'' of $\delta$, and 
$\alpha$ and $\delta$ do commute.)

Let $\tb{1}=\tb{\pi(f_1)}$. By Lemma~\ref{lem:direct_factor}, without loss of generality we may assume $\tb{1}$ is of rank $k$. Otherwise Lemma~\ref{lem:direct_factor} splits a direct factor out of the center as $A'\times G_j/A'$ (although Lemma~\ref{lem:direct_factor} is concerned with $Z^2(Q, A)$ and $B^2(Q, A)$, it is readily adapted to $\cocycleprod$ and $\coboundaryprod$). By the Remak--Krull--Schmidt theorem, we then reduce to testing isomorphism between $G_1/A'$ and $G_2/A'$, where the desired rank condition holds. Lemma~\ref{lem:direct_factor} allows us to compute such $A' \leq Z(G_j)$.

The algorithm thus proceeds as follows. For every diagonal $\delta = \prod_{j\in[\ell]}\delta_j$ of $\prod_j T_j$, compute $\pi(f_1)$ and $\pi(f_2^{(\id, \delta, \id)})$, and let $\tb{1}=\tb{\pi(f_1)}$ and $\tb{2}=\tb{\pi(f_2^{(\id, \delta, \id)})}$. Thus $\tb{j}$ is the matrix of size $k\times (17 \ell)$ corresponding to $f_j$. 

As the action of $\aut(A)$ on the $M_i$ is by left multiplication, and the action of $\prod_{i \in [\ell]} S_{\ell_i}$ is by permuting blocks of columns, we treat the $M_i$ as generator matrices of two $\Z_p$-linear codes of dimension $k$ and length $17 \ell$. Compute the coset of equivalences $\CodeEq(\tb{1}, \tb{2})\subseteq S_{17 \ell}$ using  Theorem~\ref{thm:code_eq}. On the other hand $\prod_{i}S_{\ell_i}$ induces an action on $[17 \ell]$, permuting blocks of size 17, and these are the only permutations we want. Thus we intersect $\CodeEq(\tb{1}, \tb{2})$ with $\prod_iS_{\ell_i}$. If the intersection is nonempty, the algorithm reports ``isomorphic.'' On the other hand, if for all diagonals we get empty intersection, then the algorithm returns ``not isomorphic.''

To analyze the running time, the outer loop depending on the diagonals is polynomially related to $n$. Both the applications of the \lincode algorithm (Theorem~\ref{thm:code_eq}), and the singly-exponential time algorithm for \cosetint (\cite{Luk99}), take time $c^{17 \ell}$ for some absolute constant $c$. Since $\ell \leq \log |G|$, this is $\poly(|G|)$ time.

\paragraph{(\ref{thm:item:perfect})} This case proceeds as in the previous case, with minor modifications. Now, to decide isomorphism of the centerless, indecomposable, perfect groups $T_i$ we rely on the assumption that they have size $O(1)$ (since we do not, otherwise, know good bounds on testing them for isomorphism). Also, in the absence of an analog of Theorem~\ref{thm:GKKL} for this class of groups, rather than $17 \ell$, we use the naive bound that $\dim H^2(T_i, A) \leq \dim C^2(T_i, A) = |T_i|^2 \dim A$. Let $D \leq O(1)$ be the maximum of the $|T_i|$; then the singly-exponential algorithms for \lincode and \cosetint run in time $c^{\ell D^2} \leq c^{\ell \cdot O(1)} \leq \poly(|G|)$. 

\paragraph{Computing the coset of isomorphisms.}  The algorithm is essentially the same in both cases (\ref{thm:item:simple}) and (\ref{thm:item:perfect}). To find a single isomorphism, whenever $\delta$ is a diagonal such that the coset intersection $\CodeEq(\tb{1}, \tb{2}) \cap \prod_iS_{\ell_i}$ is nonempty, we pick a single element $\sigma$ of this coset intersection. Apply $(\delta,\sigma) \in \aut(Q)$ to the columns of $\tb{2}$, and then compute using linear algebra a matrix in $\aut(A) \cong \GL(k,p)$ that makes $\tb{1}$ equal to the column-permuted $\tb{2}$. 

To find a generating set for the automorphisms of $G_1$, we proceed as follows. As in Theorems~\ref{thm:quotient_list} and \ref{thm:quotient_list_general}, we can compute the homomorphisms $Q \to A$ using linear algebra; these constitute the kernel of the map $\aut(G_1) \to \aut(A) \times \aut(Q)$. Next, we compute generators of the image of $\aut(G_1)$ in $\aut(A) \times \aut(Q)$. We apply Observation~\ref{obs:genSubProd} again, viewing $\aut(G_1)$ as an extension of $\aut(A) \times \prod_i \aut(Q_i)^{\ell_i}$ by $\prod_i S_{\ell_i}$. For each diagonal $\delta \in \prod_i \aut(Q_i)^{\ell_i}$, we compute the coset intersection $\CodeEq(\tb{1}, \tb{1}) \cap \prod_i S_{\ell_i}$. The union of these coset intersections generates the image of $\aut(G_1)$ in $\prod_i S_{\ell_i}$. For each of these generators $\sigma$, we know which diagonal $\delta$ it came from, and we can compute an $\alpha \in \aut(A)$ such that $(\alpha,\delta,\sigma) \in \aut(G_1)$, as in the preceding paragraph. This gives us lifts of all the generators of the image of $\aut(G_1)$ in $\prod_i S_{\ell_i}$. Finally, we need to compute generators of the kernel of the map $\aut(G_1) \to \prod_i S_{\ell_i}$, as a subgroup of $\aut(A) \times \prod_i \aut(Q_i)^{\ell_i}$. 

For this, we will apply Observation~\ref{obs:genSubProd} one last time. Let $\aut_0(G_1)$ denote the subgroup of $\aut(G_1)$ whose projection to $\prod_i S_{\ell_i}$ is trivial. We can determine the projection of $\aut_0(G_1)$ to $\prod_i \aut(Q_i)^{\ell_i}$ by enumerating over all diagonals, and including only those diagonals $\delta$ for which the matrix associated to $\pi(f_1)$ and the matrix associated to $\pi(f_1^{(\id,\delta,\id)})$ have the same rowspan. For each one, we lift it to $\aut_0(G_1)$ by using linear algebra to find an element of $\aut(A) \cong \GL(k,p)$ that makes the two matrices equal. Finally, we need to compute generators of the subgroup of $\aut(A)$ that fix $\pi(f_1)$. This proceeds exactly as in Theorem~\ref{thm:quotient_list}.
\end{proof}


\subsection{Towards the general abelian case} \label{sec:towards_genab}
In this section we give two results headed in the direction of extending Theorem~\ref{thm:ecentradwsoc} from elementary abelian central radicals to general central radicals. First, we give an alternative proof of Theorem~\ref{thm:ecentradwsoc} (\ref{thm:item:perfect}), without using Proposition~\ref{prop:cob}. Avoiding Proposition~\ref{prop:cob} is key, because it may not have an analogue for the general abelian case (see the remarks at the end of \Sec{coset}). This alternative proof also has the advantage that it makes clear how to extend to the general abelian case by using group code equivalence over cyclic $p$-groups (Observation~\ref{obs:groupcode}). Then we observe that Theorem~\ref{thm:GKKL} can be extended from modules over fields to general finite $\Z G$-modules, which we expect to be useful for proving the general abelian case of Theorem~\ref{thm:ecentradwsoc}(\ref{thm:item:simple}). 

\begin{proof}[Alternative proof of Theorem~\ref{thm:ecentradwsoc}(\ref{thm:item:perfect})]
As in \Sec{autq}, let $\tb{f_j}$ denote the matrix representation of $f_j$ with row index set $[k]$ and column index set $Q\times Q$. As $f_j$ is completely determined by the direct factors, we can focus on $\tb{f_j}$ with column indices from $\cup_{i\in[\ell]} T_i\times T_i$. Thus for $f_j\in\cocycleprod$ the size of $\tb{f_j}$ becomes $k\times (\sum_{i\in[\ell]}|T_i|^2)$. 

We will need the following analogue of Proposition~\ref{prop:span}. Let $\widetilde{\tb{f_j}}$ denote the matrix with $\sum_{i\in[\ell]}|T_i|^2$ columns, and whose first rows are just $\tb{f_j}$. The remaining rows will be the union of bases for $B^2(T_i, \Z_p)$ for each $i$ (see Proposition~\ref{prop:basis_1}). 

\begin{proposition} \label{prop:basisCode}
Suppose that $\widetilde{\tb{f_j}}$ has full rank for $j=1,2$. Fix some diagonal $\delta \in \prod_i \aut(T_i)$ and let $\widetilde{\tb{f_1}}' = \widetilde{\tb{f_1}}^{(\id, \delta, \id)}$. Then the intersection $\CodeEq(\widetilde{\tb{f_1}}', \widetilde{\tb{f_2}}) \cap \prod_i S_{\ell_i}$ is non-empty if and only if there exists $(\alpha, \sigma) \in \aut(A) \times \prod_{i} S_{\ell_i}$ such that $(\alpha, \delta, \sigma)$ is an isomorphism of $f_1$ and $f_2$ as cohomology classes.
\end{proposition}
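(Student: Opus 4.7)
The plan is to translate the code-equivalence condition on $\widetilde{\tb{f_1}}'$ and $\widetilde{\tb{f_2}}$ into the row-span condition of a prod-cocycle analog of Proposition~\ref{prop:span}, using that the tail rows of $\widetilde{\tb{f_j}}$ are chosen to span $B := \bigoplus_i B^2(T_i, \Z_p) \subseteq \Z_p^{\sum_i |T_i|^2}$. First, I would unpack the row span of $\widetilde{\tb{f_j}}$: its first $k$ rows are the rows of $\tb{f_j}$, recording the prod cocycle $f_j$ one $\Z_p$-coordinate at a time, while the remaining rows form, by construction, a basis of $B$. Hence $\mathrm{rowspan}(\widetilde{\tb{f_j}}) = \langle R_{f_j}, B \rangle$, where $R_{f_j}$ denotes the set of rows of $\tb{f_j}$. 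The full-rank hypothesis makes each $\widetilde{\tb{f_j}}$ a valid generator matrix of a code of the expected dimension, so that the code-equivalence condition truly corresponds to the equality of these row spans.

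Next, I would check that $\sigma \in \prod_i S_{\ell_i}$ acts on the columns in a way compatible with this splitting. Because $\sigma$ only permutes simple factors within a common isomorphism type, its action on columns sends each $(p, q)$-column of the $T_i \times T_i$ block to the $(p, q)$-column of the $T_{\sigma(i)} \times T_{\sigma(i)}$ block, under the canonical identification of isomorphic factors. Consequently, the first $k$ rows of $\widetilde{\tb{f_1}}' \cdot P_\sigma$ are exactly the rows of $\tb{f_1^{(\id, \delta, \sigma)}}$, by definition of the joint $(\id, \delta, \sigma)$-twist as a column action. Moreover, the tail rows are sent to a basis of $\bigoplus_i B^2(T_{\sigma(i)}, \Z_p) = B$, so their span is preserved. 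Combining both observations, $\mathrm{rowspan}(\widetilde{\tb{f_1}}' \cdot P_\sigma) = \langle R_{f_1^{(\id, \delta, \sigma)}}, B \rangle$, and similarly $\mathrm{rowspan}(\widetilde{\tb{f_2}}) = \langle R_{f_2}, B \rangle$.

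Finally, I would invoke the prod-cocycle analog of Proposition~\ref{prop:span}, which follows from the same linear-algebra argument used there together with the direct-sum decomposition $B = \bigoplus_j B^2(T_j, \Z_p)$ and the coordinatewise splitting of $\coboundaryprod$ into $k$ copies of $B$: for any $f, g \in \cocycleprod$, the equality $\langle R_f, B \rangle = \langle R_g, B \rangle$ is equivalent to the existence of $\alpha \in \aut(A)$ such that $\alpha \cdot f$ and $g$ differ by a prod coboundary, hence by a genuine $2$-coboundary in $B^2(Q, A)$. Applied to $f = f_1^{(\id, \delta, \sigma)}$ and $g = f_2$, this says precisely that $\sigma$ lies in $\CodeEq(\widetilde{\tb{f_1}}', \widetilde{\tb{f_2}}) \cap \prod_i S_{\ell_i}$ iff some $\alpha \in \aut(A)$ makes $(\alpha, \delta, \sigma)$ an isomorphism of $f_1, f_2$ as cohomology classes. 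The main obstacle I anticipate is carrying out the identification in step two rigorously: one must choose the tail bases uniformly across each isomorphism type so that block permutations really do map basis to basis, and verify the twist-action formula at the matrix level. Once these bookkeeping points are in place, the rest is routine linear algebra over $\Z_p$.
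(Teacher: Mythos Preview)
Your proposal is correct and follows essentially the same approach as the paper: both arguments hinge on the observation that the tail rows of $\widetilde{\tb{f_j}}$ span $B=\bigoplus_i B^2(T_i,\Z_p)$, that this span is preserved by any $\sigma\in\prod_i S_{\ell_i}$, and that consequently a code equivalence realized by such a $\sigma$ amounts to the row-span condition $\langle R_{f_1^{(\id,\delta,\sigma)}},B\rangle=\langle R_{f_2},B\rangle$, which is exactly the criterion (as in Proposition~\ref{prop:span}) for the existence of $\alpha\in\aut(A)$ making the cohomology classes match. The paper phrases this more concretely by exhibiting the block upper-triangular form of the transition matrix $\Lambda\in\GL(N,p)$, while you invoke Proposition~\ref{prop:span} directly; your remark about choosing the tail bases uniformly across each isomorphism type is precisely the bookkeeping the paper also relies on.
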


\begin{proof}
Let $N$ be the number of rows of $\widetilde{\tb{f_j}}$, and let $b = N-k$ be the number of rows of $\widetilde{\tb{f_j}}$ that were added to $\widetilde{\tb{f_j}}$ compared to $\tb{f_j}$. 

($\Rightarrow$) Suppose $\CodeEq(\widetilde{\tb{f_1}}', \widetilde{\tb{f_2}}) \cap \prod_i S_{\ell_i}$ is non-empty. Then there is some permutation $\sigma \in \prod_i S_{\ell_i}$ and some $\Lambda \in \GL(N, p)$ such that $\Lambda \widetilde{\tb{f_1}}^{(\delta, \sigma)} = \widetilde{\tb{f_2}}$. As the last $b$ rows of $\widetilde{\tb{f_1}}$ and $\widetilde{\tb{f_2}}$ have the same rowspan (namely, $B^2_{prod}(Q,A)$) and this rowspan is preserved by all permutations of the columns (\ie, elements of $\prod_i S_{\ell_i}$), we may assume without loss of generality that $\Lambda$ has the following block form:
\[
\Lambda = \left(
\begin{array}{cc}
\alpha & \gamma \\
 0 & \eta
\end{array}
\right).
\]
In other words, to make the row spans of the bottom $b$ rows equal, it is never necessary to add any multiples of the top $k$ rows to the bottom $b$ rows, as the bottom $b$ rows already have equal row spans that are preserved by all relevant permutations, in particular, by $\sigma$.

The sub-matrix $\gamma$ contributes by adding elements of $B^2_{prod}(Q, A)$ to $\tb{f_1}$, so that $\left(\begin{array}{cc}\id & \gamma \\ 0 & \eta \end{array} \right) \tb{f_1}$ corresponds to a cocycle that is cohomologous to $f_1$. Finally, the contribution of the sub-matrix $\alpha$ is to send $f_1$ to a pseudo-congruent cocycle, since $\alpha \in \aut(A)$. Therefore we have shown that $(\alpha, \delta, \sigma)$ is an isomorphism of the cohomology classes $f_1, f_2$. 

($\Leftarrow$) Suppose that $f_1^{(\alpha, \delta, \sigma)}$ is cohomologous to $f_2$ for some $\alpha \in \aut(A)$ and $\sigma \in \prod S_{\ell_i}$. Then some matrix $\Lambda' = \left(\begin{array}{cc} \alpha & \gamma \\ 0 & \id \end{array}\right)$ will make the first $k$ rows of $\Lambda' \widetilde{\tb{f_1}}^{(\delta, \sigma)}$ equal to the first $k$ rows of $\widetilde{\tb{f_2}}$. The last $b$ rows of $\Lambda' \widetilde{\tb{f_1}}^{(\delta, \sigma)}$ and $\widetilde{\tb{f_2}}$ have the same row span, so there is some $\eta$ such that $\Lambda = \left(\begin{array}{cc} \alpha & \gamma \\ 0 & \eta \end{array}\right)$ makes the two matrices equal. (In fact, $\eta$ will be a block-permutation matrix, which permutes the blocks in the same way that $\delta$ permutes the factors.) In particular, this shows that $\sigma$ is a code equivalence, and hence that $\CodeEq(\widetilde{\tb{f_1}}', \widetilde{\tb{f_2}}) \cap \prod_i S_{\ell_i}$ is nonempty.
\end{proof}

To finish the proof of Theorem~\ref{thm:ecentradwsoc} (\ref{thm:item:perfect}), we proceed as follows. First, find a direct decomposition of $G_j$ \cite{KN09}. If any direct factor is contained in $Z(G_j)$, set this direct factor aside. Then by Lemma~\ref{lem:direct_factor}, the matrices $\tb{f_j}$ will be full rank, \emph{even after modding out by coboundaries}. The latter fact implies that the rank of $\widetilde{\tb{f_j}}$ is then the rank of $\tb{f_j}$ plus the dimension of $B^2_{prod}(Q, A)$, in other words, the matrices $\widetilde{\tb{f_j}}$ are also of full rank.

Finally, we compute the coset of equivalences $\CodeEq(\widetilde{\tb{1}}, \widetilde{\tb{2}})\subseteq S_m$ using  Theorem~\ref{thm:code_eq}. On the other hand $\prod_{i}S_{\ell_i}$ induces an action on $[m]$, which contains the permutations we want. Thus we need to intersect $\CodeEq(\tb{1}, \tb{2})$ with $\prod_iS_{\ell_i}$. By Proposition~\ref{prop:basisCode}, if this intersection is nonempty, the algorithm returns ``isomorphic.'' If the intersection is empty for all diagonals $\delta$, the algorithm returns ``non-isomorphic.''

To analyze the running time, the outer loop depending on the diagonals is polynomially related to $n$. Both the applications of the \lincode algorithm (Theorem~\ref{thm:code_eq}), and the singly-exponential time algorithm for \cosetint (\cite{Luk99}), take time $c^m \leq c^{\ell D^2}$ for some absolute constant $c$, where $D$ is the maximum size of any of the $T_i$.

\paragraph{Computing the coset of isomorphisms} As in the other proof, the key is to compute the image of $\aut(G_1)$ in $\aut(A) \times \prod_{i \in [r]} \left(\aut(Q_i)^{\ell_i} \rtimes S_{\ell_i} \right)$. 
For each diagonal $\delta \in \prod_{i \in [\ell]} \aut(T_i)$, we compute the coset intersection of Proposition~\ref{prop:basisCode}, which is a subcoset of $\prod_{i \in [r]} S_{\ell_i}$. This subcoset is given by a single element of $\prod_{i \in [r]} S_{\ell_i}$ together with a generating set for the image of $\aut(G_1)$ in $\prod_{i \in [r]} S_{\ell_i}$.  Given the diagonal $\delta$ and one of the generators $\sigma$ of this subcoset, we compute an element $\alpha \in \aut(A)$ such that $(\alpha, \delta, \sigma) \in \aut(G_1)$ following the first half of the proof of Proposition~\ref{prop:basisCode}. The only part unspecified is how to find $\Lambda$ (in the notation of that proof); given $\delta$ and $\sigma$, this $\Lambda$ is just an arbitrary element of a linear transporter, which is easily computed.

Finally, we need to compute a generating set for the $\aut_0(G_1)$ (notation as in the first proof of this theorem). We compute such a generating set by enumerating all diagonals of $\prod \aut(T_i)$ and computing generators for the linear stabilizers of the rowspan of the matrix $\widetilde{\tb{f_1}}^{(\id, \delta, \id)}$.
\end{proof}

Given a group $A$, a \emph{group code} of length $n$ over $A$ is a subgroup of $A^n$. Note that when $A = \Z_p$ is a prime field, group codes are the same as linear codes.

\begin{observation} \label{obs:groupcode}
If the coset of group code equivalences of length $n$ over cyclic $p$-groups $\Z_{p^k}$ can be found in time $2^{O(n)}$ with arithmetic operations in $\Z_{p^k}$ at unit cost, then Theorem~\ref{thm:ecentradwsoc} (\ref{thm:item:perfect}) also holds when $\rad(G) = Z(G)$ is an arbitrary abelian group. 
\end{observation}

\begin{proof}
Following the definition of $R^{(\mu)}$ in \Sec{elem}, for a given $d \times n$ matrix $M$ representing a 2-cochain over an abelian group $A = \prod_{i=1}^{k} \prod_{j=1}^{\ell_i} \Z_{p_{i}^{\mu_{i,j}}}$, let $M^{(p,\mu)}$ denote the subgroup of $\Z_{p^{\mu}}^{n}$ that is generated by the corresponding (possibly scaled/quotiented) rows $R^{(p,\mu)}$. That is, $M^{(p,\mu)}$ is the subgroup that is generated by the following: a) the rows of $M$ that correspond to $\Z_{p^{\mu}}$ factors of $A$, b) $p^{\mu - \mu'}$ times the rows of $M$ that correspond to $\Z_{p^{\mu'}}$ with $\mu' < \mu$; c) the rows of $M$ that correspond to $\Z_{p^{\mu'}}$ with $\mu' > \mu$, taken modulo $p^{\mu}$. (Rows corresponding to primes $q \neq p$ do not contribute to $M^{(p,\mu)}$.) In extending the argument of Theorem~\ref{thm:ecentradwsoc} (\ref{thm:item:perfect}) to arbitrary central radicals, it is not difficult to show that the corresponding cocycles are isomorphic if and only if there is a product-respecting automorphism of $Q$ that is simultaneously a code equivalence for each $M^{(p,\mu)}$. Thus, to solve this we compute, for each $p^{\mu}$ that appears as a factor of $\rad(G)$, the coset of code equivalences of the $M^{(p,\mu)}$. Intersecting these cosets then yields the desired permutations, and the rest of the proof proceeds as in the theorem.
\end{proof}

Finally, we observe that the cohomological bound of Theorem~\ref{thm:ecentradwsoc} extends from elementary abelian $G$-modules to general abelian $G$-modules. Although we do not yet know how to prove Theorem~\ref{thm:ecentradwsoc} in the general abelian case, because we do not have the general abelian analogue of Proposition~\ref{prop:cob}, we expect the following to be useful.

There are two possible ways to generalize the notion of ``dimension'' from 
$G$-modules over $\F_p$ to general finite $G$-modules, both of which are 
potentially useful for complexity analysis of algorithms for \GpI; our result will 
cover both simultaneously. 
The first is the log of the cardinality of a finite 
$G$-module $M$, which we denote $\ell(M) := \log |M|$ (with the base of the logarithm unspecified, as the result will hold in any base); this generalizes dimension in that for any $\F_p G$-module $M$, 
$|M|=p^{\dim M}$. 
The second measure is the minimal number of generators, $d(M)$ 
(which equals dimension for $\F_p$-modules), which is also equal to the smallest 
$d$ such that $M$ is a quotient of the free abelian group $\Z^d$. 
The measure $d(M)$ is also closely related to the number of direct factors 
appearing in the primary decomposition of $M$; more precisely, $d(M)$ is at most 
the number of direct factors in the primary decomposition of $M$, and at least the 
number of $p$-primary direct factors in the primary decomposition of $M$, for any 
prime $p$. The ratio $d(H^2(G, M)) / d(M)$ governs the number of columns that 
appear in the matrices representing 2-cohomology classes as in the first proof of 
Theorem~\ref{thm:ecentradwsoc}(1).

\begin{lemma} \label{lem:GKKL}
For any quasi-simple finite group $G$, and any finite $\Z G$-module $M$, $\ell(H^2(G, M)) \leq 17.5 \ell(M)$ and $d(H^2(G,M)) \leq 17.5 d(M)$.
\end{lemma}

If the value $17.5$ in \cite[Theorem~B]{GKKL} gets improved to a smaller constant $c$, then the $17.5$ in this lemma also immediately can be replaced by $c$. (The authors of \cite{GKKL} state that they believe that with additional work the $17.5$ could be replaced by $2$.)

\begin{proof}
We give the proof for $\ell(\cdot)$; the same proof works \emph{mutatis mutandis} for $d(\cdot)$.

$M$ is a finite abelian group, so the Sylow $p$-subgroups of $M$ are characteristic subgroups, and therefore are $G$-invariant. Let $M_p$ denote the Sylow $p$-subgroup of $M$. Then $M \cong \bigoplus_{p | |M|} M_p$ is a direct sum decomposition of $\Z G$-modules. As $H^2$ is additive in its second factor, we have $H^2(G, M) \cong \bigoplus_{p | |M|} H^2(G, M_p)$. Thus, if we can show the result when $M$ is an abelian $p$-group, we are done, for then we have $\ell(H^2(G, M)) = \sum_{p | |M|} \ell(H^2(G, M_p)) \leq \sum_p 17.5 \ell(M_p) = 17.5 \ell(M)$.

So we now assume that $M$ is a finite abelian $p$-group acted on by $G$. In any abelian $p$-group, the $p$-th powers form a characteristic subgroup, which we denote $pM$ since we are using additive notation. So $pM$ is a $G$-submodule, albeit not in general a direct summand. Since $H^2$ is a left-exact covariant functor in its second argument, from the exact sequence $0 \to pM \hookrightarrow M \twoheadrightarrow M/pM \to 0$, we get a left exact sequence
\[
0 \to H^2(G, pM) \hookrightarrow H^2(G, M) \to H^2(G, M/pM),
\]
in which the final map need not be a surjection. However, even if it is not a surjection, by bounding the measures of the first and last term of this sequence, we still get a bound on the measure of the middle term $H^2(G, M)$.

Now, let $e$ be the exponent of $M$ (that is, $p^e M = 0$ but $p^{e-1} M \neq 0$). We proceed by induction on $e$. For $e = 1$, $M$ is in fact an $\F_p G$-module, and we have that $\ell(H^2(G, M)) \leq 17.5 \ell(M)$ by \cite[Theorem~B]{GKKL}.

For $e > 1$, we have that $\ell(H^2(G, M/pM)) \leq 17.5\ell(M/pM)$ by \cite[Theorem~B]{GKKL}, since $M/pM$ is an $\F_p G$-module. Since the exponent of $pM$ is strictly less than the exponent of $M$, we have, by induction, that $\ell(H^2(G, pM)) \leq 17.5 \ell(pM)$. Thus we have that $\ell(H^2(G, M)) \leq \ell(H^2(G, pM)) + \ell(H^2(G, M/pM)) \leq 17.5 (\ell(pM) + \ell(M/pM)) = 17.5 \ell(M)$.
\end{proof}

\section{Future directions} \label{sec:future}

In this paper we developed an $n^{O(\log\log n)}$-time algorithm to test isomorphism of groups with central radicals, extending the results of \cite{BCQ} and beginning to resolve an open problem from \cite{BCGQ}. We also developed an $n^{O(\log\log n)}$-time algorithm for groups with elementary abelian radical (not necessarily central), and polynomial-time algorithms for several prominent subclasses of central radical groups. The ``difficult'' cases---those where we do not yet know how to improve beyond $n^{O(\log \log n)}$---seem to be when the radical $\rad(G)$ and the semisimple quotient $G/\rad(G)$ are roughly of the same size---say both are of order $\sqrt{n}$---and $G/\rad(G)$ is complicated (without this last condition, we handle such groups in Theorem~\ref{thm:ecentradwsoc}; see Remark~\ref{rmk:tradeoff}). Although a polynomial-time algorithm for the general case of central radicals remains open, we propose several directions for extending our work which we believe may now be within reach.

\subsection{{Abelian} radical} \label{sec:future:abelian}
A nearby next step is to extend our results to groups with general abelian radicals (not necessarily central nor elementary abelian):

\begin{openproblem} \label{prob:abelrad}
Extend Theorem~\ref{thm:quotient_list_general} to groups whose solvable radicals $\rad(G)$ are general abelian. Ultimately, decide isomorphism of groups with abelian radicals in polynomial time.
\end{openproblem}

We note that even with the recent algorithm for \modcyc for arbitrary finite modules (not necessarily over a field) \cite{CT15}, it is not immediately clear how to generalize the proof of Theorem~\ref{thm:quotient_list_general} to the general abelian case. The issue is in the definition of the algebra $U$: in order to be closed under addition, we must allow the cocycles to get scaled arbitrarily. However, if an element of $\End(A)$ scales a cocycle by a non-unit in $\Z_{p^e}$, it is unclear how to proceed (since then we cannot argue that by rescaling the endomorphism by the inverse of the scalar, we get an automorphism, since the scalar has no inverse). 

In both Theorems~\ref{thm:quotient_list} and \ref{thm:quotient_list_general}, we are currently only able to compute the coset of isomorphisms in the elementary abelian case. This could be generalized to the abelian case of Theorem~\ref{thm:quotient_list} by resolving the following:

\begin{openproblem} \label{open:unit}
Extend Theorem~\ref{thm:unit} (\cite{BrooksbankOBrien, RonyaiUnit}) from matrix algebras over a field to finite $\Z$-algebras. That is, given a $\Z$-linear spanning set of a finite algebra over $\Z$, compute a generating set of the group of units in polynomial time.
\end{openproblem}

\subsection{The Babai--Beals filtration} \label{sec:babaibeals}
The Babai--Beals filtration was defined and used in the context of algorithms for matrix groups \cite{BB99,BBS09}---where the groups are given by a generating set of matrices, and the goal is algorithms which run in time polynomial in the input size, which can be polylogarithmic in $|G|$. In the context of \GpI, it has also been used successfully in the polynomial-time algorithm for semisimple groups \cite{BCGQ,BCQ}.

The Babai--Beals filtration is the following chain of characteristic subgroups:
\begin{equation}\label{eq:bb_chain}
1\leq \rad(G) \leq \soc^*(G) \leq \pker(G) \leq G,
\end{equation}
where $\rad(G)$ is the solvable radical of $G$ and $\soc^*(G)$ is the subgroup such that $\soc^*(G)/\rad(G)=\soc(G/\rad(G))$. Note that the socle of the semisimple group $G/\rad(G)$ is a direct product of non-abelian simple groups. $G$ then acts on this direct product by, amongst other things, permuting the factors. The final subgroup in the Babai--Beals filtration, $\pker(G)$, consists of those $g \in G$ which do not permute the direct factors of $\soc^*(G)/\rad(G)$.

In Theorem~\ref{thm:ecentradwsoc} we make progress on the case of groups $G$ with central radical which further satisfy $G = \soc^*(G)$. It is then natural to consider groups with the next step of the Babai--Beals filtration, $G = \pker(G)$. As a polynomial-time algorithm for isomorphism of \emph{semisimple} groups $G$ satisfying $G = \pker(G)$ \cite{BCGQ} was significantly simpler than the polynomial-time algorithm for general semisimple groups \cite{BCQ}, we have hope that the following is achievable:

\begin{openproblem}
Extend Theorem~\ref{thm:ecentradwsoc} to groups with central radical which satisfy $G = \pker(G)$.
\end{openproblem}

\subsection{The Cannon--Holt strategy} \label{sec:cannonHolt}
Cannon and Holt \cite{CH03} suggest the following strategy for computing $\aut(G)$ 
for a finite group $G$, as well as for isomorphism testing 
(see also \cite{howden}, which contains improvements and new ideas related to this 
strategy). 
They consider the following chain of characteristic subgroups:
\begin{equation}\label{eq:ch_chain}
1=N_r\unlhd N_{r-1}\unlhd \dots \unlhd N_1=\rad(G)\unlhd G,
\end{equation}
where the $N_i$ refine the derived series of $\rad(G)$ and each $N_i / N_{i+1}$ is elementary abelian. More specifically, let $R^{(i)}$ be the terms of the derived series of $\rad(G)$; so $R^{(0)} = \rad(G)$, $R^{(1)} = [\rad(G), \rad(G)]$, and so on. Then the quotients $R^{(i)} / R^{(i+1)}$ are abelian. From an abelian group we get a characteristic series as follows: first break the abelian group into abelian $p$-groups for each $p$, taking the primes in ascending order. Within an abelian $p$-group, we take the subgroup of $p$-th powers as the largest characteristic subgroup in the series, and then recurse. These groups are then lifted from the abelian quotients $R^{(i)} / R^{(i+1)}$ back to subgroups of $\rad(G)$ in the natural way.
The algorithm proceeds by first computing $\aut(G/N_1) = \aut(G/\rad(G))$, and then iteratively computing $\aut(G/N_{i+1})$ from $\aut(G/N_i)$.

This chain is convenient for describing known results in the Cayley table model: the case when $\rad(G)=1$ (equivalently $r=1$) corresponds to the semisimple case, which can be solved in polynomial time \cite{BCQ}. When $G = \rad(G)$ and $r=2$, the case of $|N_2|$ and $|N_1 / N_2|$ being coprime can be solved in polynomial time \cite{BQ}. When $|N_2|$ and $|N_1 / N_2|$ are not coprime, this includes the notorious case of $p$-groups of class 2. Finally, the present work considers a special case of $r=2$, namely when $\rad(G)=Z(G)$.

In light of \cite{BCQ}, in the Cayley table model the second step in the Cannon--Holt strategy---to compute $\aut(G/N_2)$ from $\aut(G/\rad(G))$---is equivalent to the special case of Problem~\ref{prob:abelrad} in which $\rad(G)$ is elementary abelian, which we have solved in $n^{O(\log \log n)}$ time in Theorem~\ref{thm:quotient_list_general}. 

However, even before getting polynomial-time algorithms for groups with general abelian radicals (Problem~\ref{prob:abelrad}), it may be possible to give a reduction from the third step of the Cannon--Holt strategy to listing isomorphisms of two-step solvable groups. This is headed in the direction of a formal reduction from general group isomorphism to the solvable case. In a related vein, in Proposition~\ref{prop:action_triv} we showed how isomorphism of groups whose outer action on $\rad(G)$ is trivial reduces to isomorphism of central-radical groups and isomorphism of solvable groups.

\begin{openproblem}
Extend Theorems~\ref{thm:quotient_list} and \ref{thm:ecentradwsoc} to groups whose radicals are two-step solvable, allowing access to an oracle for listing $\Aut(\rad(G))$.
\end{openproblem}



\section*{Acknowledgments}
The authors thank Nikki Pfarr for producing Figure~\ref{fig:groups}, and Vipul 
Naik for Example~\ref{ex:isonotpc}, as well as several useful discussions early 
on. We thank James Wilson and several anonymous referees for numerous helpful 
comments and suggestions. We also thank I. Cioc\v{a}nea-Teodorescu for sharing and 
discussing the preprint \cite{CT15}. J.A.G. was supported by A. Borodin's NSERC 
Grant \# 482671. Y. Q. was supported by the Australian Research Council 
DECRA DE150100720. 
Both authors were supported by NSF grant DMS-1620484 during the preparation of the 
manuscript.

\nocite{GZ}
\bibliographystyle{alpha}
\bibliography{ref}

\appendix
\addtocontents{toc}{\setcounter{tocdepth}{1}}

\section{A gentle introduction to group extensions and cohomology} 
\label{app:gentle}

In this appendix we present a gentle introduction to group extensions and 
cohomology from the viewpoint of isomorphism testing, for readers encountering
such notions for the first time. While this content exists in many textbooks, the typical textbook treatments often require reading several chapters first; our goal here is to cover the needed material in a self-contained, leisurely and expository manner.

\subsection{An intuitive starting point}

Suppose we have a group $G$ with a normal subgroup $N$, whose quotient $G/N$ we call $Q$. We write this as $\extension{N}{G}{Q}$, and refer to $G$ as an \emph{extension} of $N$ by $Q$ (some authors use the opposite convention, and call this as an extension of $Q$ by $N$). Given $N$ and $Q$, what additional data determines $G$ up to isomorphism? 

If $G$ is the direct product $N \times Q$, then we essentially need no additional information. 

A slightly more complicated kind of extension is captured by the semi-direct 
product: In a direct product, $G$ contains subgroups $N$ and $\overline{Q} \cong 
Q$ such that $G = N\overline{Q} = \{nq : n \in N, q \in \overline{Q}\}$ and $N 
\cap \overline{Q}=1$, and both $N$ and $\overline{Q}$ are normal. As a 
consequence, every element of $n$ commutes with every element of $\overline{Q}$. A 
semi-direct product is defined similarly, except that $\overline{Q}$ need no 
longer be normal, and consequently the elements of $\overline{Q}$ need no longer 
commute with the elements of $N$. However, as $N$ is normal, given any $q \in 
\overline{Q}$, $q N q^{-1} = N$, so the map $\varphi\colon Q \to \Aut(N)$ given by 
sending $q$ to the map $(n \mapsto q n q^{-1})$ gives an action of $\overline{Q}$ 
on $N$. We denote this semi-direct product by $N \rtimes_\varphi Q$. (When the 
action is trivial, we recover the direct product.) So to specify $G$ from $N$ and 
$Q$, we also need to specify something about this action. 

However, consider the two groups $\Z_2 \times \Z_2$ and $\Z_4$, both as extensions of $\Z_2$ by $\Z_2$. (For $\Z_4$, written additively, we have $\{0,2\} \cong \Z_2 \unlhd \Z_4$, with the quotient $\Z_4 / \Z_2$ again being a copy of $\Z_2$. The quotient map is the parity map.) However, these two extensions are not distinguished by any action of the quotient on the normal subgroup: In $\Z_2 \times \Z_2$ this action is trivial, as always happens for direct products; in $\Z_4$, since $\Z_4$ is abelian, conjugation by any element of $\Z_4$ is trivial, so the action of $Q \cong \Z_2$ on $N \cong \Z_2$ is also trivial. If we think of the group elements in both cases as specified by length-2 binary strings, the difference between the groups is that when adding in $\Z_4$, one must perform \emph{carries}, whereas in $\Z_2 \times \Z_2$ one does not. Carrying is a simple example of nontrivial cohomology.

How can we capture this difference formally? Note that in the case of $\Z_4$, there is no choice of coset representatives of $\Z_2$ such that the coset representatives themselves form a subgroup; that is, there is no subgroup $\overline{Q}$ as in the discussion of semi-direct product above. It turns out that measuring the \emph{failure} of the existence of such a subgroup is precisely what we need to capture the difference between $\Z_4$ and $\Z_2 \times \Z_2$; as we will see, in this case, this ``measure of failure'' will turn out to be described exactly by carrying.

Let $N = \Z_2$, $G = \Z_4$, and $Q = G/N = \Z_2$, and let $\pi\colon G \to Q$ 
denote the natural quotient map. Let $s\colon Q \to G$ denote any choice of coset 
representatives; that is, $\pi(s(q)) = q$ for all $q \in Q$. If there \emph{were} 
a subgroup $\overline{Q}$ as above, then there would be some choice of $s$ so that 
$s$ is a homomorphism (and, in fact, it is not hard to see that such an $s$ would 
be an isomorphism $Q \stackrel{\cong}{\to} \overline{Q}$). So we will measure the 
failure of $s$ to be a homomorphism, and then consider this measure over all 
possible $s$.

For $s$ to be a homomorphism is to say precisely that $s(p)s(q) = s(pq)$ for all $p,q \in Q$, or equivalently that $s(p)s(q) s(pq)^{-1} = 1$ for all $p,q \in Q$. We will use this expression to measure the failure of $s$ to be a homomorphism, and say that the failure of $s$ to be a homomorphism is exactly captured by the function $f_s(p,q) \defeq s(p)s(q) s(pq)^{-1}$. This function is identically $1$ if and only if $s$ is a homomorphism. Any $f_s$ of this form is called a \emph{2-cocycle}, and all such 2-cocycles arise in this fashion. We then define:

\begin{definition}[Provisional definition of a cohomology class]
The 2-cohomology class of an extension $\extension{N}{G}{Q}$ is the collection $[f_s] \defeq \{ f_s | s\colon Q \to G \text{ is a choice of coset representatives}\}$. 
\end{definition}

Before understanding what this collection is, let us return to our example.

\begin{example}
The map $\pi \colon \Z_4 \to \Z_2$ maps $\{0,2\} \mapsto 0$ and $\{1,3\} \mapsto 1$. So we have four choices for $s\colon \Z_2 \to \Z_4$: 
\[
s_0 \begin{cases}
0 & \mapsto 0 \\
1 & \mapsto 1
\end{cases}
\qquad
s_1 \begin{cases}
0 & \mapsto 0 \\
1 & \mapsto 3
\end{cases}
\qquad
s_2 \begin{cases}
0 & \mapsto 2 \\
1 & \mapsto 1
\end{cases}
\qquad
s_3 \begin{cases}
0 & \mapsto 2 \\
1 & \mapsto 3
\end{cases}
\]
We calculate $f_{s_0}$ for exposition, and leave the calculation of the remaining $f_{s_i}$ to the reader. Since we are writing $\Z_4$ and $\Z_2$ additively, our expression for $f_s$ becomes $s(p) + s(q) - s(pq)$.
\[
\begin{array}{lclclcr}
f_{s_0}(0,0) & = & s_0(0) + s_0(0) - s_0(0 + 0) & = & s_0(0) & = & 0 \\
f_{s_0}(0,1) & = & s_0(0) + s_0(1) - s_0(0 + 1) & = & s_0(0) & = & 0 \\
f_{s_0}(1,0) & = & s_0(1) + s_0(0) - s_0(1 + 0) & = & s_0(0) & = & 0 \\
f_{s_0}(1,1) & = & s_0(1) + s_0(1) - s_0(1 + 1) & = & 1 + 1 - 0 & = & 2 \\
\end{array}
\]
There are two things to note. First, every $f_{s_0}(p,q)$ actually landed in the normal subgroup $N = \{0,2\}$; as we will see in a moment, this was no accident. Second, $f_{s_0}$ is exactly the carry function: If we write the values in $\Z_4$ in binary, then $f_{s_0}(p,q)$ was 00 except when $p=q=1$, when it was 10. This should help give some intuition for what a cohomology class ``really is.'' It is easily verified that none of the $f_{s_i}$ are trivial (with image $0$), so this extension is not a semi-direct product, as we already knew.
\end{example}

Note that the cohomology class depends not only on $G$, but also on the choice of $N$ (see Example~\ref{ex:isonotpc}), \emph{and} on the choice of map $\pi\colon G \to Q$ (see Example~\ref{ex:isonotcong}). We will give examples of both of these in the next section, after we've developed some more structure on cohomology classes, which will make the examples clearer. (At first, the reader may find the dependence on $\pi$ surprising, since given $N$, we think of $Q$ as being the same as $G/N$, for which there is the natural map $g \mapsto gN$. However, note that for any given map $\pi\colon G \to Q$ with kernel $N$, and any automorphism $\beta \in \Aut(Q)$, the map $\beta \circ \pi \colon G \to Q$ is also a homomorphism $G \to Q$ with kernel $N$, and it is this choice that is relevant.)

\subsection{The group structure on cohomology}
Defining the cohomology class as $\{f_s | s \colon Q \to G\}$ is a perfectly fine definition, but what is this collection? It turns out that, when $N$ is abelian, this collection is a coset of a subgroup of some group, as we now explain. For the remainder of this appendix, $N$ will be abelian, so we will denote $N$ by $A$. In $A$ we will use additive notation; although $A \leq G$ and $G$ need not abelian, this will turn out not to cause much confusion, and using this notation helps simplify things greatly.

Let $C^2(Q, A)$ denote the group of all functions (not necessarily homomorphisms) $Q \times Q \to A$; $C^2(Q, A)$ is an abelian group under pointwise addition: $(f+g)(p,q) \defeq f(p,q) + g(p,q)$. Elements of $C^2(Q,A)$ are called ``2-cochains of $Q$ with coefficients in $A$.'' First, we show that any $f_s$ as above is in fact a 2-cochain. To show this, we need merely show that $f_s(p,q) \in A$ for all $p,q \in Q$. Recalling that $A = \ker(\pi)$, we apply $\pi\colon G \to Q$ to $f_s(p,q)$: 
\[
\pi(f_s(p,q)) = \pi(s(p) s(q) s(pq)^{-1}) = \pi(s(p)) \pi(s(q)) \pi(s(pq)^{-1}) = p q (pq)^{-1} = \id_Q,
\]
so $f_s(p,q)$ is indeed in $\ker(\pi) = A$. So $f_s$ is an element of the abelian group $C^2(Q, A)$.

However, not every 2-cochain arises in this manner; it turns out that those that do are a subgroup of $C^2(Q, A)$. To determine this subgroup, we need to determine the constraint that all $f_s$ as above satisfy. One constraint comes from the fact that multiplication in $G$ is associative; it will turn out that this one constraint suffices (that is, any 2-cochain satisfying this associativity-like constraint will be of the form $f_s$ for some choice of coset representatives $s\colon Q \to G$).

To derive this associativity-like constraint, let us write the elements of $G$ in terms of those of $A$ and $Q$. Let $s\colon Q \to G$ be a choice of coset representatives; then every element of $G$ can be written uniquely as $a s(q)$ for some $a \in A, q \in Q$. Let's determine what the product in $G$ looks like when we represent elements of $G$ in this form:
\begin{eqnarray*}
(a s(p)) (b s(q)) & = & a s(p) b (s(p)^{-1} s(p)) s(q) \\
 & = & a (s(p) b s(p)^{-1}) s(p) s(q) \\
 & = & a \theta_p(b) s(p) s(q) \\
 & = & a \theta_p(b) s(p) s(q) s(pq)^{-1} s(pq)\\
 & = & a \theta_p(b) f_s(p,q) s(pq) \\
 \end{eqnarray*}
In other words, if we use the notation $(a,q)_s$ to mean $a s(q)$, then we have
\begin{equation} \label{app:eq:mult}
(a, p)_s \cdot (b,q)_s = (a + \theta_p(b) + f_s(p,q), pq)_s.
\end{equation}

\begin{exercise*}
Using Equation (\ref{app:eq:mult}), expand out the two expressions $((a,p)_s \cdot (b,q)_s) \cdot (c,r)_s$ and $(a,p)_s \cdot ( (b,q)_s \cdot (c,r)_s)$ (note that the order of parentheses matters for the syntactic form of the final expressions; don't just assume associativity!).
\end{exercise*}

Since multiplication in $G$ is associative, the two expressions in the preceding exercise must be equal. Expanding them out and setting them equal, we find that $f_s$ must satisfy the condition:
\[
f_s(p,q) + f_s(pq, r) = \theta_p(f_s(q,r)) + f_s(p,qr) \qquad \text{(the 2-cocycle identity)}.
\]
Reversing the above reasoning, we find that given an action $\theta \colon Q \to \Aut(A)$, any 2-cochain $f\colon Q \times Q \to A$ which satisfies the 2-cocycle identity arises as $f_s$ for some extension $\extension{A}{G}{Q}$ and some choice of coset representatives $s\colon Q \to G$. Any 2-cochain $f$ satisfying the 2-cocycle identity is called a \emph{2-cocycle} (with respect to the action $\theta$). Since the 2-cocycle identity is easily seen to be $\Z$-linear---if $f$ and $g$ are both 2-cocycles, then so is their pointwise sum $f + g$---the 2-cocycles form a subgroup of the 2-cochains, which we denote $Z^2(Q, A, \theta)$ (unlike 2-cochains, the 2-cocycle identity depends on the action $\theta$). Thus each cohomology class (with respect to $\theta$) is a subset of $Z^2(Q, A, \theta)$

Finally, let us determine what kind of subset a cohomology class is. Towards this end, suppose $s,t\colon Q \to G$ are two choices of coset representatives of $Q$ in 
$G$. What is the difference between $f_s$ and $f_t$? As $s(q)$ and $t(q)$ lie in 
the same coset of $A$, there is a function $u\colon Q \to A$ such that $s(q) = 
u(q)t(q)$ for all $q \in Q$. Then $f_s(p,q) = f_{t}(p,q) + \left(u(p) + 
\theta_p(u(q)) - u(pq)\right)$, where $\theta_p \colon A \to A$ is the automorphism given by conjugation by $p \in Q$. A \emph{2-coboundary} is a function of the form 
$b_u(p,q) \defeq u(p) + \theta_p(u(q)) - u(pq)$ for any function $u\colon Q \to 
A$. Hence, if two 2-cocycles come from the same extension, they differ by a 
2-coboundary.

Now, for a 2-coboundary of the form $f_s - f_t$, it is clear that it lies in $Z^2(Q, A, \theta)$, since both $f_s$ and $f_t$ are elements of this group. More generally, we have:

\begin{exercise*}
Show that any 2-coboundary satisfies the 2-cocycle identity. Show that if $f_s$ is a 2-cocycle corresponding to the extension $\extension{A}{G}{Q}$, and $b_u$ is any 2-coboundary, then $f_s + b_u$ is another 2-cocycle corresponding to the same extension.
\end{exercise*}

Thus, the 2-coboundaries are elements in $Z^2(Q, A, \theta)$. Finally, note that for any two functions $u,v\colon Q \to A$, we have $b_{u+v} = b_u + b_v$, so the 2-coboundaries in fact form a sub\emph{group} of $Z^2(Q, A, \theta)$, which we denote $B^2(Q, A, \theta)$. Thus, we have finally arrived at:

\begin{definition}
A 2-cohomology class is an element of the abelian group $Z^2(Q, A, \theta) / B^2(Q, A, \theta)$. We denote this quotient group $H^2(Q, A, \theta)$, called the second cohomology group of $Q$ with coefficients in $A$, relative to the action $\theta$.
\end{definition}

We thus arrive at one of the central notions in this paper:

\begin{definition}\label{def:ext_data}
For $A$ an abelian group and $Q$ any group, a pair $(\theta, f)$ of an action 
$\theta\colon Q \to \aut(A)$ and a 2-cocycle $f \colon Q \times Q \to A$, $f \in 
Z^2(Q, A, \theta)$ is \emph{extension data}. Two extension data for the pair $(Q, 
A)$ are \emph{equivalent} if they have the exact same action and if the two 
2-cocycles are cohomologous (differ by a coboundary).
\end{definition}

Given an extension $A \hookrightarrow G \twoheadrightarrow Q$, the extension data 
for this particular extension are the action $\theta$ as defined above, and any 
2-cocycle $f_s$ for any section $s\colon Q \to G$.
Note that extension data are non-unique, as we may choose any representative of 
the corresponding 2-cohomology class. 

Two important special cases of extension data $(\theta, f)$ are as follows.
\begin{description}
\item[$f$ is trivial (as 2-cohomology class).] This implies that there exists $P\leq G$ such that $AP=G$ and $P\cap A=\id$, \ie, that $G$ is the semi-direct product $A \rtimes P$. Such $P$ is called a \emph{complement} of $A$ in $G$, and the extension is called a \emph{split} extension. In this case, an isomorphism test need only focus on one of the two aspects of \GpI: \edpc simplifies to \actcomp.
\item[$\theta$ is trivial.] This implies that $A\leq Z(G)$, and the extension is called central. In this case, an isomorphism test need only focus on the other aspect of \GpI: \edpc simplifies to \cohiso.
\end{description}

\begin{remark}
It is not difficult to test whether an input satisfies one of the above 
conditions: it is trivial to test whether an extension is central. We leave it as 
an exercise for the interested reader to devise an algorithm to test whether an 
extension is split when the normal subgroup is abelian. See \cite[Section 
7.6.2]{HEO05} for a practical algorithm for the latter problem.
\end{remark}

\subsection{Equivalence of extensions and extension data}
Now that we see there is a group structure on $H^2(Q, A, \theta)$, we return to give examples of the dependence on $A$ and $\pi\colon G \to Q$. For this, we introduce a standard, slightly different viewpoint on the notion of ``equivalence.''

\begin{definition} \label{def:ext_equiv}
Two extensions of $A$ by $Q$, $G_1$ and $G_2$, are \emph{equivalent} if there 
exists an isomorphism $\gamma:G_1\to G_2$ such that, in the following diagram 
\begin{equation*}
\xymatrix@C=50pt{
A \ar@{^{(}->}[r]^{\iota_1} \ar@{=}[d] & G_1 \ar@{->>}[r]^{\pi_1} \ar@{->}[d]^{\gamma}_{\cong} & Q  \ar@{=}[d]\\
A \ar@{^{(}->}[r]^{\iota_2} & G_2 \ar@{->>}[r]^{\pi_2} & Q
}
\end{equation*}
we have $\gamma \iota_1 = \iota_2$ and $\pi_1 = \pi_2 \gamma$, in which case we say that the diagram \emph{commutes}. 
The vertical double-lines are stretched out, rotated equality signs; that is, they denote the identity map.
\end{definition}

It is classical, going back to O. H\"{o}lder and O. Schreier, that two extensions 
have equivalent extension data (Definition~\ref{def:ext_data}) if and only if the 
extensions are equivalent according to Definition~\ref{def:ext_equiv}. The 
explicit connection with cohomology was developed by Eilenberg and Mac Lane 
\cite{em2}.

\begin{theorem}[{See \cite[Chapter~11]{Rob} and Eilenberg--Mac Lane \cite{em2}}]
There is a bijection between equivalence classes of extensions of $A$ by $Q$ with action $\theta$, and elements of the group $H^2(Q, A, \theta)$.
\end{theorem}

Now we give the promised example of the dependence of the cohomology class on the choice of map $\pi\colon G \to Q$. This example also serves a secondary purpose, as an example of two extensions where $G_1 \cong G_2$ but the extensions are not equivalent according to the above definition(s).

\begin{example}[Isomorphic groups from non-equivalent extensions] \label{ex:isonotcong}
$\Z_9$ can be viewed as an extension of $\Z_3$ by $\Z_3$ in two ways. First, $\Z_3\hookrightarrow \Z_9$ by sending $1$ to $3$. Then define $\pi_i(1)=i$ for $i\in[2]$. To see that $\pi_1$ and $\pi_2$ yield non-equivalent extensions, consider any $\phi\in\aut(\Z_9)$. Let $k = \phi(1)$; since $\phi$ is an automorphism, $k$ is one of $\{1, 2, 4, 5, 7, 8\}$. In order for $\phi$ to be an equivalence of extensions, it must induce the identity on $\langle 3\rangle$, implying that $3=\phi(3)=3\phi(1)=3k\mod 9$, so $k \mod 3$ must be $1\mod 3$. On the other hand, to be an equivalence of extensions, we must also have $1=\pi_1(1)=\pi_2(\phi(1))=\pi_2(k)=2k\mod 3$, that is $k=2\mod 3$, giving a contradiction.
\end{example}

Given the preceding example, the question then becomes how to leverage cohomology for isomorphism testing (since what we'll ultimately care about is the isomorphism class of $G$, and not merely the equivalence class of any given extension with $G$ as its total group). This is what we tackle next.

\subsection{Main lemma for abelian characteristic subgroups}\label{subsec:main}

Recall that a characteristic subgroup is a subgroup invariant under all automorphisms. The analogous notion for isomorphisms (rather than automorphisms) is a function $\mathcal{S}$ that assigns to each group $G$ a subgroup $\mathcal{S}(G) \leq G$ such that any isomorphism $\varphi\colon G_1 \to G_2$ restricts to an isomorphism $\varphi|_{\mathcal{S}(G_1)}\colon \mathcal{S}(G_1) \to \mathcal{S}(G_2)$. We call such a function a \emph{characteristic subgroup function}.
Note that if $G_1=G_2$, this says that $\mathcal{S}(G_1)$ is sent to itself by every automorphism of $G_1$, that is, $\mathcal{S}(G_1)$ is a characteristic subgroup of $G_1$. Most natural characteristic subgroups encountered are characteristic subgroup functions, for example the center $Z(G)$, the commutator subgroup $[G, G]$, or the radical $\rad(G)$.

Let $\charfn$ denote a fixed characteristic subgroup function, and suppose we are 
given two groups $G_1, G_2$ such that $\charfn(G_1)$ and $\charfn(G_2)$ are both 
abelian. To determine how to use cohomology in isomorphism testing, we first examine the consequences of an isomorphism $G_1\cong G_2$, as an exercise in reverse engineering. Let 
$\gamma:G_1\to G_2$ be an isomorphism. By the definition of characteristic 
subgroup function, $\gamma(\charfn(G_1))=\charfn(G_2)$, so 
$\gamma$ induces $\gamma_1: \charfn(G_1)\to \charfn(G_2)$ and $\gamma_2: 
G_1/\charfn(G_1)\to G_2/\charfn(G_2)$. Using these isomorphisms, we identify 
$A=\charfn(G_2)=\charfn(G_2)^{\gamma_1}$ and $Q=G_2/\charfn(G_2)=(G_2/\charfn(G_2))^{\gamma_2}$.
Let $(\theta_i, f_i)$ be the extension data of $A \hookrightarrow G_i 
\twoheadrightarrow Q$, where $\theta_i:Q\to\aut(A)$ and $f_i\in Z^2(Q, A, 
\theta_i)$. As we have identified $A=\charfn(G_1)=\charfn(G_2)$ and 
$Q=G_1/\charfn(G_1)=G_2/\charfn(G_2)$, $\gamma$ induces some $\alpha\in\aut(A)$ 
and $\beta\in\aut(Q)$. We write $\theta_{i, q}$ as the shorthand for $\theta_i(q)$ 
for $i=1, 2$ and $q\in Q$. It can then be verified that for $q \in Q$ and $a \in 
A$,
\begin{equation}\label{eqn:action_pc}
\theta_{1,q}(a) = \alpha^{-1}(\theta_{2, \beta(q)}(\alpha(a))) =: \theta_2^{(\alpha, \beta)}(q)(a),
\end{equation}
and we record this as $\theta_1 = \theta_2^{(\alpha, \beta)}$, where $\theta_2^{(\alpha, \beta)}$ is defined as above.

It can be similarly verified that $[f_1] = [f_2^{(\alpha, \beta)}]$ as cohomology classes in $H^2(Q, A, \theta_1)$, where $f_2^{(\alpha, \beta)}(p,q) \defeq \alpha^{-1}(f_2(\beta(p), \beta(q)))$ for all $p, q \in Q$. In other words, we have:
\begin{equation}\label{eqn:cohom_pc}
f_1(p, q) = \alpha^{-1}(f_2(\beta(p), \beta(q))) + b_u(p,q)
\end{equation}
for some 2-coboundary $b_u \in B^2(Q, A, \theta_1)$. Note that Equation~\ref{eqn:action_pc} ensures $f_2^{(\alpha, \beta)}$ is indeed a 2-cocycle relative to $\theta_1$, \ie, in $Z^2(Q, A, \theta_1)$. This discussion leads to the following definition:

\begin{definition}\label{def:pc}
Let $A$ be an abelian group and $Q$ any group, and let $(\theta_1, f_1)$ and $(\theta_2, f_2)$ be two extension data for $A$-by-$Q$. Then the extension data are \emph{pseudo-congruent}\footnote{See Footnote~\ref{fn:pc} on page~\pageref{fn:pc}.} if there exists $(\alpha, \beta)\in \aut(A)\times\aut(Q)$, such that $\theta_1 = \theta_2^{(\alpha, \beta)}$ and $[f_1] = [f_2^{(\alpha, \beta)}]$, that is, Equations~(\ref{eqn:action_pc}) and (\ref{eqn:cohom_pc}) hold. In this case we write $(\theta_1, f_1)\pc (\theta_2, f_2)$.
\end{definition}

\begin{lemma}[Main Lemma, abelian case]
Let $\charfn$ be a characteristic subgroup function. Given two finite groups $G_1$ and $G_2$, suppose $\charfn(G_1)$ and $\charfn(G_2)$ are abelian. Then $G_1\cong G_2$ if and only if both of the following conditions hold:
\begin{enumerate}
\item $\charfn(G_1) \cong \charfn(G_2)$ (which we denote by $A$) and $G_1 / \charfn(G_1) \cong G_2 / \charfn(G_2)$ (which we denote by $Q$); 
\item $(\theta_1, f_1)\pc (\theta_2, f_2)$, where $(\theta_i, f_i)$ is the extension data of the extensions $A \hookrightarrow G_i \twoheadrightarrow Q$. 
\end{enumerate}
\end{lemma}

\begin{proof}
The above discussion shows the only if direction. For the other direction, suppose we are given an abelian group $A$, a group $Q$, an action $\theta:Q\to\aut(A)$, and a 2-cocyle $f:Q\times Q\to A$, $f\in Z^2(Q, A, \theta)$. We shall need the following procedure of Eilenberg and Mac Lane \cite{em2} that takes $A$, $Q$, $\theta$ and $f$ as input, and outputs a group $H$ as an extension of $A$ by $Q$ with extension data $(\theta, f)$. We refer to this as the \emph{standard reconstruction procedure}. The set of group elements of $H$ is $A\times Q$. For $(a, p), (b, q)\in A\times Q$, the group operation $\circ_H$ is defined as
\[
(a, p)\circ_H (b, q)=(a+\theta_p(b)+f(p, q), pq).
\]
A simple but tedious calculation verifies that $A \hookrightarrow H \twoheadrightarrow Q$ is an extension with extension data $(\theta, f)$.

Getting back to our problem, from $(\theta_1, f_1)\pc (\theta_2, f_2)$, we can choose appropriate sections $s_i \colon Q \to G_i$ such that the corresponding 2-cocycles satisfy $f_1=f_2^{(\alpha, \beta)}$ in $Z^2(Q, A, \theta_1)$. Note that as $\theta_1=\theta_2^{(\alpha, \beta)}$, $f_2^{(\alpha, \beta)}\in Z^2(Q, A, \theta_1)$. Now apply the standard reconstruction procedure to $(\theta_i, f_i)$ to get $H_i \cong G_i$ (isomorphism follows from Eilenberg and Mac Lane \cite{em1}. It is then straightforward to verify that the bijection $\gamma\colon H_1 \to H_2$ defined by $\gamma((a, p))=(\alpha(a), \beta(p))$ is in fact an isomorphism.
\end{proof}

\subsubsection{Pseudo-congruence of extensions and extension data}\label{subsec:pc}
As with the the standard concept of equivalence, the standard concept of pseudo-congruence applies directly to group extensions themselves, rather than extension data as in our definitions. We use our definitions because the standard definitions seem to presuppose that the total groups are isomorphic, whereas in our setting the whole goal is to determine whether this is the case. However, we show in this section that the definitions are in fact equivalent (which is closely related to the Main Lemma~\ref{lem:main}). We present the standard definition here as it has more intuitive appeal and we believe it makes some discussions in the paper clearer, for example the proof of Theorem~\ref{thm:ecentradwsoc}.

Throughout this section, $A$ denotes an abelian (normal sub-)group. Non-abelian normal subgroups are handled in \Sec{main_lemma_nonab}.

\begin{definition} \label{def:pc_grp}
Two extensions $A \hookrightarrow G_i \twoheadrightarrow Q$ ($i=1,2$) of $A$ by $Q$ are \emph{pseudo-congruent}  if there is an isomorphism $\gamma\colon G_1 \to G_2$ such that $\gamma(A) = A$. In particular, $\gamma$ induces automorphisms $\alpha \in \aut(A)$ and $\beta \in \aut(Q)$.
\end{definition}

Pictorially, $G_1$ and $G_2$ are pseudo-congruent as extensions if there exist $\alpha\in\aut(A)$, $\beta\in\aut(Q)$ and $\gamma\in\Iso(G_1, G_2)$ such that the following diagram commutes:
\begin{equation*}
\xymatrix@C=50pt{
A \ar@{^{(}->}[r]^{\iota_1} \ar@{->}[d]^{\alpha}_{\cong} & G_1 \ar@{->>}[r]^{\pi_1} \ar@{->}[d]^{\gamma}_{\cong} & Q \ar@{->}[d]^{\beta}_{\cong} \\
A \ar@{^{(}->}[r]^{\iota_2} & G_2 \ar@{->>}[r]^{\pi_2} & Q
}
\end{equation*}
where $\iota_i$ is the injective homomorphism from $A$ to $G_i$ and $\pi_i$ is the surjective homomorphism from $G_i$ to $Q$ with $\ker(\pi_i)=\im(\iota_i)$. 

It is possible for the total groups $G_1$ and $G_2$ to be isomorphic without the extensions being pseudo-congruent. The following example was provided by Vipul Naik \cite{naik2}. This, finally, also serves as an example of the dependence on the choice of $A$.

\begin{example}[Isomorphic groups from non-pseudo-congruent extensions] \label{ex:isonotpc}
$A = \Z_{p^2} \times \Z_p \times \Z_p$, $Q = \Z_{p^2} \times \Z_p$, $G = \Z_{p^3} \times \Z_{p^2} \times \Z_p \times \Z_p$. In one extension, $\iota_1(a,b,c) = (pa, 0, b, c)$ and in the other $\iota_2(a,b,c) = (pa, pb, a\pmod{p}, c)$. To see that there is no automorphism of $G$ sending $\im \iota_1$ to $\im \iota_2$---and hence that the two extensions are not pseudo-congruent---note that $\im \iota_1$ contains elements that are $p$ times an element of order $p^3$ in $G$, but $\im \iota_2$ contains no such elements.
\end{example}

Despite the fact that the usual Definition~\ref{def:pc_grp} presupposes that the 
total groups are isomorphic, in fact it is equivalent to our 
Definition~\ref{def:pc}. The isomorphism of the total groups follows for free from 
pseudo-congruence of the extension data:

\begin{lemma}\label{lem:eq_pc}
Definitions~\ref{def:pc} and \ref{def:pc_grp} are equivalent. In detail: let $A \hookrightarrow G_i \twoheadrightarrow Q$ ($i=1,2)$ be extensions of $A$ by $Q$, and let $(\theta_i, f_i)$ be the corresponding extension data. Then $G_1$ and $G_2$ are pseudo-congruent as extensions of $A$ by $Q$ if and only if $(\theta_1, f_1)\pc (\theta_2, f_2)$.
\end{lemma}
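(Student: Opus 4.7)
The plan is to establish both implications directly, relying on the discussion preceding Definition~\ref{def:pc} for the forward direction and on the reconstruction argument from the proof of Main Lemma~\ref{lem:main} for the backward direction. In both cases the content is essentially already in hand; the only genuinely new piece of information is that the isomorphism produced in Main Lemma~\ref{lem:main} happens to send $A$ to $A$.

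For the forward direction (Def.~\ref{def:pc_grp} $\Rightarrow$ Def.~\ref{def:pc}), I would start from a pseudo-congruence $\gamma\colon G_1 \to G_2$, which by Definition~\ref{def:pc_grp} restricts to some $\alpha \in \aut(A)$ and descends to some $\beta \in \aut(Q)$. First, I would derive the action equation $\theta_1 = \theta_2^{(\alpha,\beta)}$ by applying $\gamma$ to the conjugation identity $\theta_1(q)(a) = s_1(q)\, a\, s_1(q)^{-1}$ for any section $s_1$ of $\pi_1$; the abelianness of $A$ absorbs the coset ambiguity in lifting $\beta(q)$ into $G_2$. Next, to get the cohomology equation, I would transport the section across $\gamma$ by setting $s_2'(q) \defeq \gamma(s_1(\beta^{-1}(q)))$ and check that $s_2'$ is a section of $\pi_2$. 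A direct computation then gives $f_{s_2'}(\beta(p), \beta(q)) = \alpha(f_1(p,q))$, i.e.\ $f_1 = (f_{s_2'})^{(\alpha,\beta)}$. Since $f_{s_2'}$ and $f_2$ represent the same cohomology class in $H^2(Q, A, \theta_2)$ and the assignment $g \mapsto g^{(\alpha,\beta)}$ descends to a well-defined map $H^2(Q, A, \theta_2) \to H^2(Q, A, \theta_1)$, one concludes $[f_1] = [f_2^{(\alpha,\beta)}]$ in $H^2(Q, A, \theta_1)$.

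For the backward direction (Def.~\ref{def:pc} $\Rightarrow$ Def.~\ref{def:pc_grp}), I would essentially rerun the ``if'' part of the proof of Main Lemma~\ref{lem:main}, adding one bookkeeping observation. Given the witness $(\alpha,\beta)$, I would first adjust the section of $G_2$---equivalently, modify the representative $f_2$ by the 2-coboundary $f_u$ produced by Definition~\ref{def:pc}---so that without loss of generality $f_1 = f_2^{(\alpha,\beta)}$ holds as 2-cocycles, not merely modulo $B^2$. Then I would apply the standard Eilenberg--MacLane reconstruction to obtain $H_i \cong G_i$ as extensions of $A$ by $Q$, and define $\gamma\colon H_1 \to H_2$ by $\gamma(a,p) \defeq (\alpha(a), \beta(p))$. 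That this $\gamma$ is a homomorphism is exactly the calculation already carried out inside Main Lemma~\ref{lem:main}, going through precisely because both equations of Definition~\ref{def:pc} now hold on the nose. The new observation needed here is that $\gamma$ sends the copy $A \times \{1\} \subseteq H_1$ onto $A \times \{1\} \subseteq H_2$ and induces $\alpha$ on $A$ and $\beta$ on $Q$; composing with the canonical pseudo-congruences $G_i \to H_i$ coming from the reconstruction procedure then produces a pseudo-congruence of extensions in the sense of Definition~\ref{def:pc_grp}, not just a raw isomorphism of the total groups.

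The main obstacle I anticipate is the section-adjustment step in the backward direction: one must verify that replacing $s_2$ by $s_2'(q) = u(q) s_2(q)$ for the appropriate $u \colon Q \to A$ changes the cocycle by precisely the prescribed coboundary, while leaving both the action $\theta_2$ and the inclusion $A \hookrightarrow G_2$ intact. This is routine once the signs are tracked carefully, but it is where the notation piles up. Everything else is a transcription of arguments already present in \S\ref{subsec:main}.
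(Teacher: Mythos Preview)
Your proposal is correct and follows essentially the same approach as the paper: the forward direction is exactly the ``readily verified'' computation flagged in the discussion before Definition~\ref{def:pc}, and the backward direction is the reconstruction argument from the Main Lemma~\ref{lem:main}, together with the (immediate) observation that the map $\gamma(a,p)=(\alpha(a),\beta(p))$ sends the copy of $A$ in $H_1$ onto that in $H_2$. The paper's proof is two sentences long and simply points to these two ingredients, whereas you have (correctly) unpacked them.
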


\begin{proof}
Suppose that the extensions are pseudo-congruent (Definition~\ref{def:pc_grp}), and let $\gamma \in \Iso(G_1, G_2)$, $\alpha \in \aut(A)$, $\beta \in \aut(Q)$ be as in Definition~\ref{def:pc_grp}. It is readily verified that $\theta_1 = \theta_2^{(\alpha, \beta)}$ and $[f_1] = [f_2^{(\alpha, \beta)}]$, that is, that the extension data are pseudo-congruent under Definition~\ref{def:pc}.

Conversely, suppose the extension data are pseudo-congruent (Definition~\ref{def:pc}). Then the isomorphism $\gamma$ constructed in the proof of the Main Lemma~\ref{lem:main} satisfies the conditions of Definition~\ref{def:pc_grp}.
\end{proof}

\subsubsection{Some algorithmic problems arising from special cases of pseudo-congruence}\label{app:ext}

We describe two special cases of pseudo-congruence of extensions, explain the algorithmic problems corresponding to them, and indicate some of the solutions. The first one was discussed in \cite{rob_paper}. Consider the case when a pseudo-congruence $\gamma$ induces the identity map on $Q$ as follows:
\begin{equation*}
\xymatrix@C=50pt{
A \ar@{^{(}->}[r]^{\iota_1} \ar@{->}[d]^{\alpha}_{\cong} & G_1 \ar@{->>}[r]^{\pi_1} \ar@{->}[d]^{\gamma}_{\cong} & Q\ar@{=}[d] \\
A \ar@{^{(}->}[r]^{\iota_2} & G_2 \ar@{->>}[r]^{\pi_2} & Q
}
\end{equation*}
This corresponds to the algorithmic setting when enumerating $\aut(Q)$ is allowed (\Sec{autq}), as after fixing some $\beta\in\aut(Q)$ we are reduced to looking for $\alpha$ such that $G_1$ and $G_2$ are pseudo-congruent via $(\alpha, \beta)$. If the extension is split and $A\cong \Z_p^k$ is elementary abelian, this problem reduces to \modiso: the action of each $q\in Q$ can be expressed as a nonsingular matrix in $\GL(k, p)$. So suppose $Q=\{q_1, \dots, q_s\}$, and in $G_j$ the conjugation action of $Q$ is written as $\{M(j, i)\mid M(j, i)\in \GL(k, p)\}$ where $M(j, i)$ denotes the action of $q_i$ on $A$ in $G_j$. The problem of \actcomp then reduces to determining whether there exists $T\in\GL(k, p)$ such that $TM(1, i)=M(2, i)T$ for every $i\in[\ell]$. This is a special case of \modiso, which admits deterministic polynomial-time algorithms \cite{CIK97,BL08,IKS10}. At the other extreme, when the extension is central, Theorem~\ref{thm:quotient_list} solves \cohiso. At present it is not clear to us how combine these two procedures to solve \edpc as a whole, beyond the elementary abelian case (Theorem~\ref{thm:quotient_list_general}); see \Sec{future:abelian} for a discussion of the difficulties involved.

On the other hand, consider the case in which a pseudo-congruence $\gamma$ induces the identity map on $A$, but not necessarily on $Q$:
\begin{equation*}
\xymatrix@C=50pt{
A \ar@{^{(}->}[r]^{\iota_1} \ar@{=}[d] & G_1 \ar@{->>}[r]^{\pi_1} \ar@{->}[d]^{\gamma}_{\cong} & Q  \ar@{->}[d]^{\beta}_{\cong}\\
A \ar@{^{(}->}[r]^{\iota_2} & G_2 \ar@{->>}[r]^{\pi_2} & Q
}
\end{equation*}
The isomorphism problem here corresponds to the algorithmic setting when $\aut(A)$ 
is enumerable, and our goal is to find $\beta$ such that $G_1$ and $G_2$ are 
pseudo-congruent. Proposition~\ref{prop:warmup} and 
Remark~\ref{remark:action_triv} fall into this setting. 
Recently, G. Ivanyos and the second author \cite{IQ16} 
presented a randomized efficient algorithm to decide whether two alternating 
bilinear maps 
over a finite field of odd characteristic are isometric or not; given the 
connections between $p$-groups of 
class 2 and exponent $p$ and alternating bilinear maps (cf. \cite[Section 
3.4]{Wil09a}), this amounts to solving the problem of finding $\beta$ as above. 
That result, and the techniques therein, are inspired by the work of Brooksbank 
and Wilson \cite{BW12} who presented an efficient algorithm to compute the 
isometry group 
 of a single 
alternating bilinear map (the automorphism version of the above problem). 
These two works \cite{BW12, IQ16} together suggest that we can compute the coset 
of isometries 
between two alternating bilinear maps.

\subsection{Some cohomological lemmas}\label{app:lem}
Recall that a group $G$ is perfect if $[G,G]=G$; as $[G,G]$ is always a normal subgroup, all nonabelian simple groups are perfect.

\paragraph{Proposition~\ref{prop:central}, restated.}
Suppose $T_1, \dots, T_\ell$ are perfect groups. Let $A$ be an abelian group, and $Q=\prod_{i\in[\ell]} T_i$. Let $G$ be a group with $Z(G)=A$ and $G/Z(G)=Q$. Let $U_i=AT_i$ be the full preimage of $T_i$ under the natural map $G \to G/Z(G)$. Then for $i, j\in[\ell]$, $i\neq j$, $[U_i, U_j]=1$.

\begin{proof}
Let $\pi:G\to G/Z(G)$ be the natural projection. Note that $Q=G/Z(G)$ and $T_i$'s are direct factors of $Q$. For $i\in[\ell]$, define $V_i$ to be the smallest normal subgroup of $G$ such that $\pi(V_i)=T_i$. Then $U_i=V_iZ(G)$, and for $i\neq j$, $[U_i, U_j]=\id$ if and only if $[V_i, V_j]=\id$.

As $T_i$ is perfect, $\pi([V_i, V_i])=\pi(V_i)$. Because of minimality of $V_i$, $V_i=[V_i, V_i]$. For $i\neq j$, $T_i\cap T_j=\id$, thus $[\pi(V_i), \pi(V_j)]=[T_i, T_j]=\id$ in $Q$, which implies that $[V_i, V_j]\subseteq Z(G)$. Now we have: (1) $[[V_i, V_j], V_j]\subseteq [Z(G), V_j]=\id$; (2) $[[V_j, V_i], V_j]=\id$ as $[V_i, V_j]=[V_j, V_i]$. Then Hall's three subgroup lemma \cite[Chapter 4, Proposition 1.9]{Suzuki2} gives that $[[V_j, V_j], V_i]=\id$. Finally noting that $V_j=[V_j, V_j]$ we have $[V_j, V_i]=\id$.
\end{proof}

\newtheorem*{direct_factor_lem}{Lemma~\ref{lem:direct_factor}, restated}
\begin{direct_factor_lem}
Let $A' \times A'' \hookrightarrow G \twoheadrightarrow Q$ be a central extension of $A' \times A''$ by $Q$. Let $p_{A'}\colon A' \times A'' \to A'$ be the projection onto $A'$ along $A''$. If there is a 2-cocycle $f\colon Q \times Q \to A' \times A''$ such that $p_{A'} \circ f \colon Q \times Q \to A'$ is a 2-coboundary, then $G$ is isomorphic (even equivalent as an extension of $A' \times A''$ by $Q$) to the direct product $A' \times (G / A')$.

Furthermore, given the Cayley table of $G$, $A'$ can be computed in polynomial time using linear algebra over abelian groups.
\end{direct_factor_lem}

\begin{proof}
We prove directly that $A' \unlhd G$, exhibit a complement of $A'$ in
$G$ and show that this complement is normal. At the end we show how to compute $A'$ using linear algebra.

We may assume without loss of generality that the image of $f$ lies entirely within $A''$. For if not, then we may add the 2-coboundary $p_{A'} \circ f\colon Q \times Q \to A' \hookrightarrow A' \times A''$ to $f$ to get an equivalent 2-cocycle satisfying this condition. Similarly, we may assume that $f$ is normalized so that $f(1,q) = f(q,1) = 0$ for all $q \in Q$.

We construct an extension equivalent to $G$ from the cocycle $f$ in the
usual way: the elements are $A' \times A'' \times Q$ as a set, with
multiplication given by (writing $A'$ and $A''$ additively):
\[
(a_1, a'_1, q_1)(a_2, a'_2, q_2) = (a_1 + a_2, a'_1 + a'_2 + f(q_1,
q_2), q_1 q_2)
\]
since the image of $f$ lies entirely in $A''$.  We also have
$(a,a',q)^{-1} = (-a, -a' - f(q,q^{-1}), q^{-1})$.

$A'$ is normal:
\begin{eqnarray*}
(a, a', q)^{-1} (a_0, 1, 1) (a,a', q) & = & (-a, -a' - f(q,q^{-1}),
q^{-1})(a_0 + a, a', q)  \quad \text{(since $f(1,q)=0$)} \\
 & = & (-a + a_0 + a, -a' - f(q,q^{-1}) + a' + f(q,q^{-1}, qq^{-1}) \\
 & = & (a_0, 0, \id_{Q}).
\end{eqnarray*}

$A'$ has a normal complement: as the image of $f$ lies entirely in
$A''$, it is readily verified that elements of the form $(0, a', q)$
are closed under product, hence form a subgroup of $G$ which is
isomorphic to $G/A'$ and intersects $A'$ only in the identity. Moreover,
this subgroup is normal. For consider conjugating one of its elements
by an arbitrary element of $G$: $(-a, -a' - f(q,q^{-1}), q^{-1}) (0,
a'_0, q_0) (a, a', q)$. From the multiplication rule above, it is
clear that the first coordinate of this product is just the sum of the
first coordinates of the three factors---namely, zero---whatever the
second and third coordinates are.

Finally, we show how to compute $A'$ from the Cayley table for $G$ using linear algebra over abelian groups. We give the proof in the case that $Z(G) = \Z_p^k$ is elementary abelian; the general case uses the same ideas as in \Sec{elem}. First compute $Z(G)$ (which is $A' \times A''$, but we do not yet know this decomposition of $Z(G)$, we are only promised it exists) and $Q = G/Z(G)$. Choose any set-theoretic section $s\colon Q \to G$ and compute the corresponding cocycle $f := f_s$. Let $M_f$ be the $k \times |Q|^2$ $\Z_p$-matrix corresponding to $f$ as in \Sec{autq}. We may view $M_f$ as a $\Z_p$-linear map from $Z(G) = \Z_p^k$ to $\Z_p^{Q \times Q}$. As in Proposition~\ref{prop:basis}, we may compute a basis of $B^2(Q, Z(G))$ that is a direct sum of bases for $B^2(Q, \Z_p)$, one copy for each row of $M_f$. The maximal $A'$ satisfying the conditions of the theorem is then the inverse image of $B^2(Q, \Z_p)$ under this map. Computing the inverse image of $B^2(Q, \Z_p)$ under the map $M_f^{T}\colon Z(G) \to \Z_p^{Q \times Q}$ is then just linear algebra over $\Z_p$.
\end{proof}

\section{Generalized Fitting subgroups of groups with central radical}\label{app:fitting}

\begin{definition}
A group $G$ is quasisimple if $G=[G, G]$ and $G/Z(G)$ is a nonabelian simple group. $G$ is m-quasisimple if $G=[G, G]$ and  $G/Z(G)$ is a direct product of nonabelian simple groups.
\end{definition}
Our m-quasisimple groups are Suzuki's ``semisimple groups'' \cite[Page~446]{Suzuki2}; we cannot use Suzuki's terminology as we have used ``semisimple'' for groups with no abelian normal subgroups. Although central-radical groups $G$ with $G/Z(G)$ a direct product of nonabelian simple groups need not be m-quasisimple groups (as $G$ need not be perfect), the difference is not much:

\begin{proposition}[{\cite[Ch.~6, corollary to Theorem 6.4]{Suzuki2}}]
Let $G$ be a group such that $G/Z(G)$ is a direct product of nonabelian simple groups. Then $G=Z(G)[G, G]$, and $[G, G]$ is an m-quasisimple group.
\end{proposition}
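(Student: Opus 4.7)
The plan is to exploit the fact that $\bar{G} := G/Z(G)$ is a direct product of nonabelian simple groups, which means $\bar{G}$ is both perfect and centerless. Each of these two properties will give one half of the conclusion.

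First I would show $G = Z(G)[G,G]$. Because a direct product of nonabelian simple groups is perfect, $\bar{G} = [\bar{G}, \bar{G}]$. But the commutator subgroup of a quotient is the image of the commutator subgroup, so $[\bar{G},\bar{G}] = [G,G]Z(G)/Z(G)$. Comparing the two expressions for $\bar{G}$ immediately gives $G = [G,G]Z(G)$.

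Next, set $H := [G,G]$ and show $H$ is perfect. Using the identity from the previous step, $H = [Z(G)H, Z(G)H]$; since elements of $Z(G)$ commute with everything, all commutators involving a factor from $Z(G)$ are trivial, leaving $H = [H,H]$. So $H$ is perfect. (This step is essentially bookkeeping once $G = Z(G)H$ is in hand.)

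Finally, I would identify $Z(H)$ with $H \cap Z(G)$ in order to describe $H/Z(H)$. The inclusion $H \cap Z(G) \subseteq Z(H)$ is immediate. For the reverse, take $z \in Z(H)$ and look at its image $\bar{z} \in \bar{G}$: $\bar{z}$ must commute with the image of $H$, which equals $\bar{G}$ since $H = [G,G]$ surjects onto $[\bar{G},\bar{G}] = \bar{G}$. As $\bar{G}$ is a direct product of nonabelian simple groups, $Z(\bar{G}) = 1$, so $\bar{z} = \bar{1}$ and $z \in H \cap Z(G)$. Putting things together, $H/Z(H) = H/(H \cap Z(G)) \cong HZ(G)/Z(G) = G/Z(G) = \bar{G}$, which is a direct product of nonabelian simple groups. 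Combined with $H$ perfect, this says $H = [G,G]$ is m-quasisimple. None of these steps is genuinely hard — the only subtle point is recognizing that $\bar{G}$ being centerless is what forces $Z(H)$ to sit inside $Z(G)$, which is what lets us identify $H/Z(H)$ with $\bar{G}$ on the nose rather than only up to a further quotient.
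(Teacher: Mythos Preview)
Your proof is correct. Each step is sound: perfection of $\bar G$ gives $G=Z(G)[G,G]$; the identity $[Z(G)H,Z(G)H]=[H,H]$ gives perfection of $H=[G,G]$; and centerlessness of $\bar G$ forces $Z(H)=H\cap Z(G)$, whence $H/Z(H)\cong G/Z(G)$ by the second isomorphism theorem.

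There is nothing to compare against here: the paper does not prove this proposition but simply cites it as a corollary to Theorem~6.4 in Suzuki's book. Your argument is the standard one and is essentially what one finds in Suzuki.
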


M-quasisimple groups are crucial for defining the generalized Fitting subgroups.
\begin{proposition}[\cite{Suzuki2}]
Let $H$ and $K$ be m-quasisimple normal subgroups of $G$, then $HK$ is m-quasisimple.
\end{proposition}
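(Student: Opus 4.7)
The plan is to verify both defining properties of m-quasisimplicity for $HK$. Perfectness of $HK$ is immediate: since $H=[H,H]$ and $K=[K,K]$, both $H$ and $K$ lie in $[HK,HK]$, forcing $HK=[HK,HK]$. Setting $\overline{N} := HK/Z(HK)$, the remaining task is to show $\overline{N}$ is a direct product of nonabelian simple groups.

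The heart of the proof will be the following claim, established by the three subgroups lemma: \emph{every abelian normal subgroup $L$ of $HK$ is contained in $Z(HK)$.} Indeed, since $L, H \unlhd HK$, $[L,H] \subseteq L \cap H$, which is an abelian normal subgroup of $H$. Because $H/Z(H)$ is a direct product of nonabelian simple groups---and therefore admits no nontrivial abelian normal subgroup---we conclude $L \cap H \subseteq Z(H)$, whence $[[L,H],H] \subseteq [Z(H), H] = 1$. Applying the three subgroups lemma to $(H,H,L)$ and using $H = [H,H]$ then yields $[H, L] = 1$; by symmetry $[K, L] = 1$, so $L \subseteq Z(HK)$.

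I will next apply this claim to $L = Z(H)$, which is normal in $HK$ as a characteristic subgroup of $H \unlhd HK$, to obtain $Z(H) \subseteq Z(HK)$, and symmetrically $Z(K) \subseteq Z(HK)$. Combined with the trivial inclusion $H \cap Z(HK) \subseteq Z(H)$, the image $\bar H$ of $H$ in $\overline{N}$ will be isomorphic to $H/Z(H)$---a direct product of nonabelian simple groups---and similarly for $\bar K$. Both $\bar H$ and $\bar K$ are normal in $\overline{N}$, so each lies in $\soc(\overline{N})$; together with $\overline{N} = \bar H \bar K$, this gives $\overline{N} = \soc(\overline{N})$.

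The finish uses perfectness: $\overline{N}$ is a quotient of the perfect group $HK$, hence itself perfect, and therefore has no nontrivial abelian quotient. Writing $\overline{N} = \soc(\overline{N}) = M_1 \times \cdots \times M_r$ as a direct product of its minimal normal subgroups (each a direct power of a single simple group), any elementary abelian $M_i$ would produce a nontrivial abelian quotient of $\overline{N}$, a contradiction. Hence every $M_i$ is a direct product of nonabelian simple groups, and so is $\overline{N}$. The main obstacle I foresee is the three subgroups lemma argument in the claim; once that is in hand, the rest is routine bookkeeping about characteristic subgroups, socles, and perfectness.
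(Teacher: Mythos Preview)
The paper does not supply its own proof of this proposition; it is simply quoted from Suzuki's book as a black box. So there is nothing to compare against, and the question is only whether your argument stands on its own. It does: perfectness of $HK$ is immediate, the three-subgroups argument for the claim that abelian normal subgroups of $HK$ lie in $Z(HK)$ is clean and correct, and the identification $\bar H\cong H/Z(H)$ via $H\cap Z(HK)=Z(H)$ is exactly right.

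One point is phrased too tersely. You write ``both $\bar H$ and $\bar K$ are normal in $\overline N$, so each lies in $\soc(\overline N)$.'' Normality alone does not put a subgroup in the socle; what you are really using is that $\bar H$ is a direct product of nonabelian simple groups \emph{and} is normal in $\overline N$. Conjugation by $\overline N$ then permutes the simple factors of $\bar H$, each orbit yields a minimal normal subgroup of $\overline N$, and $\bar H$ is the product of these orbits, hence contained in $\soc(\overline N)$. This is standard, but the word ``so'' hides it; a single sentence making this explicit would close the gap. After that, your finish via $\overline N=\soc(\overline N)$, the decomposition of the socle into characteristically simple minimal normal subgroups, and perfectness ruling out abelian factors is entirely sound.
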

This motivates the following definition. Recall that the Fitting subgroup $F(G)$ of $G$ is the maximal nilpotent normal subgroup of $G$.
\begin{definition}
Let $G$ be a group. The \emph{layer} $E(G)$ of a group $G$, is the maximal m-quasisimple normal subgroup of $G$.
The \emph{generalized Fitting subgroup} $F^*(G)$ of $G$ is $E(G)F(G)$.
\end{definition}

\begin{proposition}
For any group $G$, if $\rad(G) = Z(G)$, then $\soc^*(G)=F^*(G)$.
\end{proposition}
\begin{proof}
As $\rad(G)=Z(G)$, $F(G)=Z(G)$. Let $D=[\soc^*(G), \soc^*(G)]$. So $D$ is m-quasisimple and $\soc^*(G)=Z(G)D=F(G)D$ (\cite[Ch.~6, corollary to Theorem 6.4]{Suzuki2}). Thus $D\subseteq E(G)$ and $\soc^*(G)\subseteq F^*(G)$.

To show $\soc^*(G)\supseteq F^*(G)$, for the purpose of contradiction, suppose $Z(G)D=\soc^*(G)\subsetneq F^*(G)=Z(G)E(G)$. Consider the decomposition of $E(G)$ into quasisimple groups (\cite[Ch.~6, Definition 6.8]{Suzuki2}) as $Q_1\cdot\ldots\cdot Q_d$, where $\cdot$ denotes central product, and $Q_i$ is subnormal in $G$. Without loss of generality assume $Z(G)Q_1\not\subseteq Z(G)D$. As $Q_1$ is subnormal in $G$, $G/Z(G)$ necessarily has $Q_1Z(G)/Z(G)$ as a subnormal group, contained in some minimal normal group $N/Z(G)\lhd G/Z(G)$ (\cite[Lemma 9.17]{Isaacs}). By assumption, $Q_1Z(G)/Z(G)$ is not contained in $DZ(G)/Z(G)=\soc^*(G)/Z(G)=\soc(G/Z(G))$, so $N/Z(G)$ is a minimal normal subgroup not contained in $\soc(G/Z(G))$, contradicting the definition of the socle.
\end{proof}

\section{Relationship with results on practical algorithms}\label{sec:compare}


It is not surprising that, when it comes to the group isomorphism problem, 
theoretical computer scientists and computational group theorists often leverage 
the same underlying structure of the relevant group classes, 
even when they consider groups as being input by different kinds of data 
structures. 
Here we discuss the 
relationship between these two sets of results. A general reference for CGT is the 
handbook \cite{HEO05}; algorithms in CGT are often implemented in {\sf 
Magma}~\cite{magma} and/or {\sf GAP}~\cite{GAP}.

One major achievement in CGT, as reflected in the {\sf Magma} and {\sf GAP} as 
well as in \cite{HEO05}, is to implement a huge collection of practical routines 
to work with groups, when the groups are represented by generating sets of 
permutations or matrices, or by poly-cyclic 
presentations (for solvable groups).
In general, these encodings are of size 
poly-logarithmic in $|G|$, so in that context even a provable worst-case guaranteed 
running time of $O(|G|)$ is usually impractical. However, in order to achieve 
better running times, practically fast 
methods are often used, even without guarantees on their running time. 
In the general setting, the best-known worst-case 
guarantee on the running time of these practical algorithms is $|G|^{O(\log 
|G|)}$, even when the group is given succinctly \cite{wilsonConf}. 
Thus, improving the state of the 
art on large group classes to anything less than $|G|^{\Theta(\log |G|)}$ represents not 
only an asymptotic, worst-case improvement in the complexity, but potentially a 
practical improvement in many cases as well. Note that, although we sometimes 
phrase results in terms of the Cayley table model, essentially none of our results 
depend on this; it just means that running times that depend on $|G|$ are then 
also being expressed in terms of their dependence on the input size. 


Regarding isomorphism testing algorithms in CGT, besides \cite{CH03} mentioned in Section~\ref{sec:cannonHolt}, some notable works include \cite{OBrien,BE99,LW12}, and many of the algorithms in this area are summarized in the theses \cite{smith, howden}. Very often isomorphism testing arises as a subroutine in the construction of all finite groups up to a certain order (up to isomorphism), as in \cite{Tau55,BE99,BEO02}. Recently, Wilson \etal have produced several results related to isomorphism of $p$-groups (sometimes reformulated in the context of Hermitian bilinear maps) in \cite{Wil09a,Wil09b,LW12,BW12, BMW15}, including worst-case guarantees (and sometimes even worst-case guarantees that are polynomial in the size of the succinct input!). The structure they are uncovering in $p$-groups is thus also notable from the complexity perspective, and there is likely more to be discovered in this direction.

Although the two communities often leverage the same structure that is present in various classes of groups, the worst-case guarantees often require further structural results on the group classes considered, in addition to further algorithmic results. For example, Besche and Eick had already considered the group classes in Le Gall's work \cite{Gal09} (the algorithm in \cite[Figure~4]{BE99}), but Le Gall's work was the first to prove a polynomial-time upper bound on an isomorphism algorithm for groups of the form $A \rtimes \Z_p$ with $A$ abelian and $p \nmid |A|$. A necessary ingredient in Le Gall's work is a detailed understanding of automorphism groups of abelian groups traced back to Ranum \cite{Ran07} (note: the citation here is to \emph{19}07), which was not needed in the practical setting of Besche and Eick. Another example is the polynomial-time algorithm for semisimple groups \cite{BCQ}, where a similar situation is described at the end of that paper, comparing it with the practical work of Cannon and Holt \cite{CH03}. For example, the algorithm in \cite{BCQ} required bounds on the orders of the transitive permutation groups other than $S_n$ and $A_n$.

\paragraph{Relations to the present work.} As mentioned above, our choice to focus 
on groups with central radicals is partially motivated by the strategy of Cannon 
and Holt \cite{CH03}. Another work of particular relevance is \cite{BE99}. There 
Besche and Eick considered construction of finite groups, and proposed three 
heuristics. To support one heuristic, they proposed the concept of ``strong 
isomorphism'' of groups, which can be viewed as a special case of our Main 
Lemma~\ref{lem:main_general} in their setting. We use the same structural results 
to support the approach, but as our goal is worst-case running time upper bounds 
of $\poly(|G|)$, we have more freedom to handle the 2-cohomology classes directly, 
as in Theorem~\ref{thm:quotient_list}; we also need Lemma~\ref{lem:prod} and 
Theorem~\ref{thm:GKKL} from \cite{GKKL}, which in turn allows us to apply 
algorithms for \lincode and \cosetint as in Theorem~\ref{thm:ecentradwsoc}. These 
ingredients are not present in \cite{CH03} nor \cite{BE99}, 
as they were not needed to get practical algorithms. However, it is nonetheless 
possible that taking advantage of these ingredients, in combination with the 
practically-fast approaches from \cite{CH03, BE99}, could lead to further 
practical improvements.

\end{document}